

\documentclass[letterpaper, 12pt, oneside]{book}

    \usepackage[left=108pt, right=74pt, top=108pt, bottom=81pt, dvips, pdftex]{geometry}
    
    \usepackage{fancyhdr} 
    \pagestyle{fancy}     
    
    \fancyhf{}
    \fancyhead[R]{\thepage}
    \fancyhead[L]{\rightmark}
    
    \renewcommand{\bibname}{References}

    \usepackage{amsmath,amsthm, amsfonts,amssymb} 
    \usepackage{setspace} 
    \usepackage[linktocpage,bookmarksopen,bookmarksnumbered,
                pdftitle={UC Davis Ph.D. Doctoral Thesis},
                pdfauthor={Department of Mathematics},%
                pdfsubject={UC Davis Ph.D. Doctoral Thesis},%
                pdfkeywords={UC Davis Ph.D. Doctoral Thesis}]{hyperref}
    


\numberwithin{equation}{section}
 \newtheorem{theorem}{Theorem}

\newtheorem{corollary}{Corollary}

\newtheorem{definition}{Definition}
\newtheorem{example}{Example}

\newtheorem{lemma}{Lemma}

\newtheorem{remark}{Remark}

\newtheorem{assumption}{Assumption}

\newcommand{\R}{{\mathbb R}}
\newcommand{\bR}{{\mathbb R}}
\newcommand{\C}{{\mathbb C}}
\newcommand{\bC}{{\mathbb C}}
\newcommand{\Cx}{{\mathbb C}}

\newcommand{\bN}{{\mathbb N}}
\newcommand{\bZ}{{\mathbb Z}}
\newcommand{\Ir}{{\mathbb Z}}
\newcommand{\A}{{\mathcal A}}
\newcommand{\cA}{{\mathcal A}}
\newcommand{\cB}{{\mathcal B}}
\newcommand{\cH}{{\mathcal H}}
\newcommand{\cF}{{\mathcal F}}

\newcommand{\cU}{{\mathcal U}}
\newcommand{\cL}{{\mathcal L}}
\newcommand{\cK}{{\mathcal K}}

%
\renewcommand{\Im}{\,\mathrm{Im}\,}   
\def\idty{{\mathchoice {\mathrm{1\mskip-4mu l}} {\mathrm{1\mskip-4mu l}} %
{\mathrm{1\mskip-4.5mu l}} {\mathrm{1\mskip-5mu l}}}}

\newcommand{\Tr}{\mbox{Tr}}




\newcommand{\id}{\mathop{\rm id}}

\newcommand{\boxendproof}{\hspace*{\fill}{{$\Box$}} \vspace{10pt}}

\newcommand{\be}{\begin{equation}}
\newcommand{\ee}{\end{equation}}
\newcommand{\bea}{\begin{eqnarray}}
\newcommand{\eea}{\end{eqnarray}}
\newcommand{\beann}{\begin{eqnarray*}}
\newcommand{\eeann}{\end{eqnarray*}}
\newcommand{\eq}[1]{(\ref{#1})}

\newcommand{\tb}{{\tilde{b}}}

\newcommand{\e}{{\epsilon}}
\newcommand{\lb}{{\lambda}}
\newcommand{\wlim}{\underset{t\rightarrow\infty}{\operatorname{w^*-\, lim}}}



\usepackage{color}

\newcommand{\red}{\textcolor{red}}

%
    \numberwithin{equation}{section}
    \numberwithin{figure}{section}

    \newcommand{\n}{\vspace{12pt}} 
      
    \newcommand{\newchapter}[3] 
	{                           
        \chapter[#2]{#3}
        \chaptermark{#1}
        \thispagestyle{myheadings}
	}

\begin{document}
    

    \pagenumbering{roman}
    \pagestyle{plain}

    %
    %

    \singlespacing

    ~\vspace{-1.2in} 
    \begin{center}

        \begin{huge}
            Existence of the thermodynamic limit and asymptotic behavior of some irreversible quantum dynamical systems
        \end{huge}\\\n
        By\\\n
        {\sc Anna Vershynina}\\
        B.S. (Kharkov National University, Kharkov, Ukraine) 2007\\
        M.A. (University of California, Davis) 2012\\\n
        DISSERTATION\\\n
        Submitted in partial satisfaction of the requirements for the degree of\\\n
        DOCTOR OF PHILOSOPHY\\\n
        in\\\n
        MATHEMATICS\\\n
        in the\\\n
        OFFICE OF GRADUATE STUDIES\\\n
        of the\\\n
        UNIVERSITY OF CALIFORNIA\\\n
        DAVIS\\\n\n
        
        Approved:\\\n\n
        
        \rule{4in}{1pt}\\
        ~Bruno Nachtergaele (Chair)\\\n\n
        
        \rule{4in}{1pt}\\
        ~Motohico Mulase\\\n\n
        
        \rule{4in}{1pt}\\
        ~Alessandro Pizzo\
        
        \vfill
        
        Committee in Charge\\
        ~2012

    \end{center}

    \newpage

    %
    %
        

    ~\\[7.75in] 
    \centerline{
                \copyright\ Anna Vershynina,
                            2012. All rights reserved.
               }
    \thispagestyle{empty}
    \addtocounter{page}{-1}

    \newpage

    %
    %
    
    \centerline{To my mother, my constant supporter.}
    
    \newpage

    \doublespacing

    %
    %
    
    \tableofcontents
    
    \newpage

    %
    %
    
    %

    ~\vspace{-1in} 
    \begin{flushright}
        \singlespacing
        Anna Vershynina\\
        ~June 2012\\
        Mathematics
    \end{flushright}

    \begin{center}{\large Existence of the thermodynamic limit and asymptotic behavior of some irreversible quantum dynamical systems}
    \end{center}
    
    \centerline{\textbf{\underline{Abstract}}}

This dissertation discusses the properties of two open quantum systems with a general class of irreversible quantum dynamics. First we study Lieb-Robinson bounds in a quantum lattice systems. This bound gives an estimate for the speed of growth of the support of an evolved local observable up to an exponentially small error. In a second model we study the properties of a leaking cavity pumped by a random atomic beam. 

We begin by describing quantum systems on an infinite lattice with associated finite or infinite dimensional Hilbert space. The generator of the dynamics of this system is of the Lindblad-Kossakowski type and consists of two parts: the Hamiltonian interactions and the dissipative terms. We allow both of them to be time-dependent. This generator satisfies some suitable decay condition in space. We show that the dynamics with a such generator on a finite system is a well-defined quantum dynamics in a sense of a norm-continuous cocycle of unit preserving completely positive maps.

Lieb-Robinson bounds for irreversible dynamics were first considered in \cite{hastings:2004b} in the classical context and in \cite{poulin:2010} for a class of quantum lattice systems with finite-range interactions. We extend those results by proving a Lieb-Robinson bound for lattice models with a more general class of quantum dynamics.

Then we use Lieb-Robinson bounds for a finite lattice systems to prove the existence of the thermodynamic limit of the dynamics. We show that in a strong limit there exits a strongly continuous cocycle of unit preserving completely positive maps. Which means that the dynamics exists in an infinite system, where Lieb-Robinson bounds also holds.

In the second part of the dissertation we consider a system that consists of a beam of two-level atoms that pass one by one through the microwave cavity. The atoms are randomly excited and there is exactly one atom present in the cavity at any given moment. We consider both the ideal and leaky cavity and study the time asymptotic behavior of the state of the cavity.

We show that the number of photons increases indefinitely in the case of the ideal cavity. In the case of the leaking cavity the limiting state is independent of the initial state, it is not quasi-free and it is a non-equilibrium steady state. We also compute the associated energy flow.

    \newpage

    %
    %

    \chapter*{\vspace{-1.5in}Acknowledgments and Thanks}
    
        It is my great pleasure to thank my adviser Prof. Bruno Nachtergaele, who encouraged and challenged me throughout my years at the UC Davis. His patience and exceptional guidance helped me in my research and in completion of this dissertation.
        
        I am very grateful to Prof. Valentin Zagrebnov, whose invaluable collaboration helped me complete projects presented in this dissertation.
        
        I would like to thank all faculty and staff of the Math department, who created a friendly working environment on the department.
        
       Also I would like to thank my mother and my sister who always believed in me and gave me the needed support during the toughest times.

    \newpage

    %
    %

    \pagestyle{fancy}
    \pagenumbering{arabic}
       
    %
    %

    \newchapter{Introduction}{Introduction}{Introduction}
    \label{sec:Intro}
    \section{Introduction}
    
    The study of open quantum systems is significant to various areas of application of quantum theory such as quantum optics, quantum information theory, atomic physics and condensed-matter physics. In this dissertation we assume an open system dynamics of a quantum system. First we study Lieb-Robinson bounds for a lattice system with irreversible time-dependent dynamics. Then we study the properties of a state of the microwave cavity pumped by a random atomic beam. The dynamics of the system is also time-dependent and irreversible.
    
 In the past years Lieb-Robinson bounds have been shown to be a powerful tool to turn the locality properties of physical systems into useful mathematical estimates. Lieb-Robinson bounds provide an estimate for the speed of propagation of signals in a spatially extended system. 
 
 Lieb-Robinson bounds were so far considered on a lattice system (for example $\bZ^d$).  Briefly, if $A$ and $B$ are two observables supported in disjoint regions $X$ and $Y$, respectively, (see section \ref{sec:LR_rev} for the definition of the support) on a lattice system and if $\tau_t$ denotes the unitary time evolution of the system, a Lieb-Robinson bound is an estimate of the form
\begin{equation}\label{LR_rev_intro}
\|[\tau_t(A), B]\|\leq Ce^{-\mu(d(X,Y)-v|t|)},
\end{equation}
where $C,\mu$ and $v$ are positive constants and $d(X,Y)$ denotes the distance between $X$ and $Y$. Here $v$ is called Lieb-Robinson velocity. See Chapter \ref{LR_rev} for more details.

 From this inequality we see that for the small times $|t|<d(X,Y)/v$ the norm on the right-hand side is exponentially small. The reason we consider the commutator on the left-hand side of the Lieb-Robinson bounds is the following. The commutator between the observables $A$ and $B$ is zero if their supports are disjoint. The converse is also true: if the observable $A$ is such that its commutator with any observable $B$ supported outside some set $X$ is zero, then $A$ has a support inside the set $X$. This statement is also approximately true in the sense: suppose that there exist some $\epsilon>0$ such that $\|[A, B]\|\leq \epsilon \|B\| $ for any observable $B$ that is supported outside a set $X$ and some observable $A$. Then there exists an observable $A^{(\epsilon)}$ with the support inside the set $X$ that approximate the observable $A$: $\|A-A^{(\epsilon)}\|\leq \epsilon.$
 
 So Lieb-Robinson bounds say that the time evolution of the observable $A$ with the support in a set $X$ is mainly supported in the $\delta$-neighborhood of $X$, where $\delta>v|t|$ with $v$ is the Lieb-Robinson velocity.
 
 In quantum information if Alice has access to the observable $A$ with support $X$ and sends a signal to Bob who has access to the observable $B$ with support $Y$ then the strong enough signal that Bob can detect propagates with the Lieb-Robinson velocity $v$.

In 1972 Lieb and Robinson first showed the existence of the light-cone such that the amount of information signaled beyond it decays exponentially \cite{lieb:1972}. This result, known as Lieb-Robinson bounds, has been improved in \cite{hastings:2006}, \cite{hastings:2004b},  \cite{nachtergaele:2006a} -\cite{nachtergaele:2010} , for example, making the bound on the speed independed of the dimension of the spin spaces and extending the class of Hamiltonians describing the system. In these works the Lieb-Robinson bounds was considered for the unitary evolution, when the dynamics is described by the Heisenberg equation.

We consider a more general situation, when the dynamics is described by a Markovian dynamical semigroup (see Chapter \ref{Irrev-Dyn} for more details) with a time dependent generators. Lieb-Robinson bounds for irreversible dynamics were, to our knowledge, first considered in \cite{hastings:2004b} in the classical context and in \cite{poulin:2010} for a class of quantum lattice systems with finite-range interactions. We extended these results by proving the Lieb-Robinson bound for a long-range exponentially decaying time dependent interactions.

One of the main applications of Lieb-Robinson bounds is the existence of the dynamics in the thermodynamic limit. The existence of the thermodynamics limit is important as a fundamental property of any model meant to describe properties of bulk matter. In particular, such properties should be essentially independent of the size of the system which, of course, in any experimental setup will be finite. 
Our results are described in Section \ref{sec:Thermodynamics_irrev}.

In the second part of the thesis we use the results about the irreversible dynamics to study the properties of a leaking cavity pumped by a random atomic beam.

The micromaser system that we consider consists of a beam of two-level atoms that pass one by one through the microwave cavity. The atoms are excited with some probability $p$. The cavity is modeled by a single photon mode. The beam is tuned in such a way that there is exactly one atom in the cavity at any given time. While in the cavity the corresponding single atom is able to interact with a single mode of the cavity field. 

The interaction between the atom and the cavity in our model leaves the state of the atom invariant which may be interpreted as the limit of heavy atoms and soft photons. Hence, we focus our study on the time asymptotic behavior of the cavity state.

We do this both in the case of a perfect (non-leaky) cavity as well as in the case of a leaking cavity modeled by adding a Lindblad-Kossakowski dissipative term to the generator of the Hamiltonian dynamics (see section \ref{Irrev-Dyn} for details). We show that in the case of the perfect cavity, the expected number of photons in the cavity increases indefinitely in time if and only if $0 < p < 1$. In other words, fluctuations of the atom state are necessary for the pumping to occur. When the leakage is taken into the account, there is a well-defined limiting state of the cavity, which is independent of the initial state. We study this limiting state by computing its characteristic function and show that it is not quasi-free. The limiting state is a non-equilibrium steady state and we compute the associated energy flow (see section \ref{sec:States} for the definition of quasi-free and non-equilibrium steady state).

    \section{Summary of the main results}
    \subsection[Lieb-Robinson bounds and the existence of the thermodynamic limit]{Lieb-Robinson Bounds and the Existence of the Thermodynamic Limit for the Class of Irreversible Quantum Dynamics}

In Chapter \ref{sec:Thermodynamics_irrev} we prove Lieb-Robinson bounds for a class of the irreversible quantum dynamics and show the existence of the thermodynamic limit of the dynamics.

We consider a general situation, when the dynamics is described by a Markovian dynamical semigroup. The generator of the dynamics consists of both Hamiltonian and dissipative terms that may depend on time. In the finite volume $\Lambda$ the generator $\cL$ is of the following form
\begin{align}
\mathcal{L}_\Lambda(t)(A)&=\sum_{Z\subset\Lambda}  \Psi_Z(t)(A),\label{Def_L_intro}\\
\Psi_Z(t)(A)&= i[\Phi(t,Z), A]\nonumber\\
&\quad+\sum_{a=1}^{N(Z)} \left(L^*_a(t,Z)AL_{a}(t,Z)-\frac{1}{2}\{L_{a}(t,Z)^*L_{a}(t,Z), A\}\right),\nonumber
\end{align}
for all local observables $A$. Here for $Z\subset\Lambda$, $\Phi(t, Z)$ is the local time dependent interaction describing Hamiltonian terms and $L_a(t, Z)$ are local time dependent bounded operators. We put an exponential decay condition in space on the $\Psi_Z(t)$. The detailed set up is discussed in section \ref{sec:Thermo_setup}.

Fix $T>0$ and, for all observables $A$ with finite support in $\Lambda$, let $A(t), t\in [0,T]$ be a solution of the initial
value problem
\begin{equation}\label{ode_intro}
 \frac{d}{dt} A(t)=\cL_\Lambda(t)A(t),\quad A(0)=A.
\end{equation}
For $0\leq s\leq t\leq T$, define the family of maps
$\{\gamma_{t,s}^\Lambda\}_{0\leq s\leq t} \subset \cB(\cA_\Lambda, \cA_\Lambda)$ by $\gamma_{t,s}^\Lambda(A)=A(t)$,
where $A(t)$ is the unique solution of \eq{ode_intro} for $t\in [s,T]$ with initial condition $A(s)=A$.
In section \ref{sec:Thermo_Exist_dynam} we show that the dynamics $\gamma_t^\Lambda$ generated by (\ref{Def_L_intro}) exists and forms a norm-continuous cocycle of unit preserving completely positive maps.  

In section \ref{sec:Thermo_LR} we prove the Lieb-Robinson bound of the form
\begin{equation}\label{LR_intro}
\|\cK \gamma_{t}^\Lambda(B)\|\leq C(\cK, B) e^{-\mu(d(X, Y)- v|t|)},
\end{equation}
where $\cK$ is a completely bounded linear operator on the space of observables with support $X$  that vanishes on $\idty$ (see section \ref{sec:Thermo_setup} for the definition of the completely bounded map), $B$ is an observable with support $Y$, $d(X,Y)$ is the distance between the supports $X$ and $Y$, $C(\cK, B)$ is constant that depends on $\cK$ and $B$ and $v$ is a Lieb-Robinson velocity. It is important that the operator $\cK$ can be taken in a form $\cK(B)=[A,B],$ which makes the original Lieb-Robinson bound for reversible dynamics (\ref{LR_rev_intro}) a particular case of the Lieb-Robinson bound in the general form (\ref{LR_intro}).

In section \ref{sec:Thermo_Thermo} we look at the dynamics $\gamma_t^\Lambda$ when the set $\Lambda$ becomes infinitely large. We prove that the thermodynamic limit of the dynamics exists in the sense of strongly continuous one-parameter cocylce of completely positive unit preserving maps.

\subsection{Non-equilibrium state of a leaking photon cavity pumped by a random atomic beam}

The system we consider consists of the one-mode microwave cavity and a beam of randomly excited atoms. The cavity is described by quantum harmonic oscillator with the Hamiltonian $$H_C=\epsilon b^*b, $$ where $\epsilon>0$ and $b^*, b$ are boson creation and annihilation operators (see section \ref{sec:Fock_space} for more details on these operators). 

The beam of atoms is described as a quantum spin chain $$ H_A=\sum_{n\geq 1}Ea_n^*a_n,$$ where $E>0$ and $a_n^*, a_n$ are one-point fermion creation and annihilation operators for any $n\geq 1$ (see section \ref{sec:Fock_space}). 

Atoms are randomly excited with some probability $p$. They enter the cavity successively and there is only one atom present in the cavity at any given time. Atoms in the excited state interact with the cavity and produce a quantum field. The interaction between the $n-$th atom and the cavity is the following
\begin{equation*}
W(t)=\chi_{[(n-1)\tau, n\tau]}(t)(\lambda a_n^*a_n\otimes(b^*+b)),
\end{equation*} 
here $\chi_{I}(x)$ is a characteristic function of a set $I$, which is here to make sure that the $n-$th atom interacts with the cavity only when it is present there, which happen during the time interval $[(n-1)\tau, n\tau]$. The detailed set up of the system is given in the section \ref{sec:Nonequil_setup}.

At first we assume that the cavity insulates an atom-photon system from de-cohering interactions with its environment. So the Hamiltonian for the system is the sum of the Hamiltonian of the cavity and atoms and the interaction between them
\begin{equation*}
H(t)=\epsilon b^*b\otimes\idty+\sum_{n\geq1}\idty\otimes Ea_n^*a_n+\sum_{n\geq 1}\chi_{[(n-1)\tau, n\tau]}(t)(\lambda a_n^*a_n\otimes(b^*+b)).
\end{equation*} 
In section \ref{sec:Nonequil_nonleaking} we show that in this case the expected number of photons in the cavity oscillates around a linearly increasing in time function and only a beam of randomly exited atoms $(0<p<1 )$ is able to produce a pumping of the cavity by photons.

In a more general situation we should not assume that the cavity is perfectly insulated. Photons may either escape the cavity or be absorbed into the cavity walls at some constant non-zero rate $\sigma>0$. 

In this case the evolution of the system is described by a Markovian dynamics with the following generator
\begin{equation*}
L(t)(\rho_S)=-i[H(t),\rho_S]+\sigma b(\rho_S)b^*-\frac{\sigma}{2}\{b^*b, \rho_S\},
\end{equation*}
where $\sigma>0$ and $\rho_S$ is a state of the system.
In section \ref{sec:Nonequil_leaking} we show that the expected number of photons in the cavity stabilizes in time, which means that the energy leaking out of the cavity equals to the energy coming into the cavity.

In section \ref{sec:Nonequil_state} we compute the characteristic function of the limiting state of the cavity and show that the state is not quasi-free. (See Section \ref{sec:States} for the definition of a quasi-free state.)The limiting state of the system is a non-equilibrium state in a sense that it is not in a form  
\begin{equation}\label{Gibbs}
\omega_\beta(A)=\frac{\Tr(e^{-\beta b^*b}A)}{\Tr(e^{-\beta b^*b})},
\end{equation} 
for any $\beta\in\bC.$ 

Section \ref{sec:Nonequil_energy} is devoted to the calculation of the energy flow for the perfect cavity (\ref{sec:Nonequil_energy_perfect}) and the leaking cavity (\ref{sec:Nonequil_energy_leaking}).

    \newchapter{Preliminaries}{Preliminaries}{Preliminaries}
    \label{sec:Prelim}
        \section{Quantum systems}
\subsection{Fock space}\label{sec:Fock_space}
          
The quantum-mechanical states of a particle form a complex Hilbert space $\mathcal{H}$, for example it could be $L^2(\bR^d)$. The Hilbert space of $n$ identical but distinguisable particles is the tensor product $$\mathcal{H}^n=\bigotimes_{i=1}^n\mathcal{H}.$$ If the number of particles is not fixed, the states are described by vectors in the Fock space $\mathcal{F}(\mathcal{H})$ which is given by
\begin{equation*}
\mathcal{F}(\mathcal{H})=\bigotimes_{n\geq 0}\mathcal{H}^n,
\end{equation*}
where $\mathcal{H}^0=\mathbb{C}$. Therefore a vector $\psi\in\mathcal{F}(\mathcal{H})$ is a sequence of vectors $\psi=\{\psi^{(n)}\}_{n\geq 0}$, where $\psi^{(n)}=f_1\otimes ... \otimes f_n$, $f_k\in\cH$ for $1\leq k\leq n$. In quantum physics identical particles are indistinguishable which is reflected by the symmetry under the interchange of the particle coordinates. 

The first case is when the components $\psi^{(n)}$ of each $\psi$ are symmetric under the interchange of coordinates. Such particles are called \emph{bosons}, they form \emph{the Bose-Fock space} $\mathcal{F}_{+}(\mathcal{H})$. The second case is when the components $\psi^{(n)}$ of each $\psi$ are anti-symmetric under interchange of each pair of coordinates. Such particles are called \emph{fermions} and they form \emph{the Fermi-Fock} space $\cF_-(\cH)$.

Bose- and Fermi-Fock spaces are defined with the help of the operators $P_+$ and $P_-$ which are  defined on $\cF(\cH)$ as follows, for all $f_1,...f_n\in\cH$
\begin{align*}
&P_+(f_1\otimes ... \otimes f_n)=(n!)^{-1}\sum_{\pi}f_{\pi_1}\otimes...\otimes f_{\pi_n},\\
&P_-(f_1\otimes ... \otimes f_n)=(n!)^{-1}\sum_{\pi}\epsilon_\pi f_{\pi_1}\otimes...\otimes f_{\pi_n},\
\end{align*}
where $\epsilon_\pi$ is the sign on the permutation $\pi$. The sum is taken over all permutations $\pi:(1,2,...,n)\rightarrow (\pi_1,\pi_2,...\pi_n)$. These operator are extended by linearity to two densely defined operators with $\|P_{\pm}\|=1$. Then extending $P_{\pm}$ by continuity one gets two bounded operators of norm one. The Bose- and Fermi-Fock spaces are given by
\begin{equation*}
\cF_\pm(\cH)=P_\pm\cF(\cH).
\end{equation*}
A number operator $N$ is defined on $\cF(\cH)$ by 
\begin{equation*}
N\psi=\{n\psi^{(n)}\}_{n\geq 0}
\end{equation*}
with the domain $D(N)=\{\psi=\{\psi^{(n})\}_{n\geq 0}: \sum_{n\geq 0}n^2\|\psi^{(n)}\|^2<\infty\}. $

The particle creation and annihilation operators on $\cF(\cH)$ are defined as follows. For each $f\in\cH$, $a(f)\psi^{(0)}=0$ and $a^*(f)\psi^{(0)}=f$ and
\begin{align*}
&a_{\cF}(f)(f_1\otimes...\otimes f_n)=\sqrt{n}(f,f_1)f_2\otimes...\otimes f_n,\\
&a_{\cF}^*(f)(f_1\otimes...\otimes f_n)=\sqrt{n+1}f\otimes f_1\otimes...\otimes f_n,
\end{align*}
 for any $f_k\in\cH$, $k\in\mathbb{Z}$.
Extending by linearity one gets two densely defined operators. If $\psi^{(n)}\in\cH^n$ we get
$$\|a_\cF(f)\psi^{n}\|\leq\sqrt{n}\|f\|\|\psi^{n}\|\text{, } \|a_\cF^*(f)\|\leq\sqrt{n+1}\|f\|\|\psi^{(n)}\|. $$ Therefore $a_\cF(f)$ and $a_\cF^*(f)$ have well-defined extensions to the domain of $D(N^{1/2})$ of $N^{1/2}$.

From the definition one can see that $a^*$ operator is indeed the adjoint of $a$. For all $\phi,\psi\in D(N^{1/2})$ and $f\in\cH$ we have that
\begin{equation*}
(a_{\cF}^*(f)\phi, \psi)=(\phi, a_\cF(f)\psi).
\end{equation*}

The creation and annihilation operators on the Fock spaces $\cF_{\pm}(\cH)$ are defined by 
\begin{align*}
a_{\pm}(f)=P_{\pm}a(f)P_{\pm}=a(f)P_{\pm},\\
a^*_\pm(f)=P_{\pm}a^*(f)P_{\pm}=P_{\pm}a^*(f).
\end{align*}

In future when it is clear what Fock space we are working in we may drop the indices $\pm$.
In Section \ref{sec:NELS} we will use boson creation and annihilation operators $b^*, b$ and fermion creation and annihilation operators on the basis vectors $\{e_k\}_{k\geq 1}$ of the Fermion-Fock space $\cF_{-}(\cH)$
\begin{equation}\label{creation-annihilation}
a(e_k)=a_k,\text{ and } a^*(e_k)=a_k.
\end{equation}

The important properties of these operators are canonical commutation relations (CCR) for boson creation and annihilation operators  and canonical anti-commutation relations (CAR) for fermion creation and annihilation operators.

Boson operators satisfy the following CCR relations
\begin{align}\label{CCR}
&[a_+(f), a_+(g)]=0=[a^*(f), a^*(g)],\\
&[a_+(f), a^*(g)]=(f,g)\idty.\nonumber
\end{align}

Fermion operators satisfy the following CAR relations
\begin{align}\label{CAR}
&\{a_-(f), a_-(g)\}=0=\{a_-^*(f), a^*_-(g)\}\\
&\{a_-(f), a^*_-(g)\}=(f,g)\idty.\nonumber
\end{align}

\subsection{Weyl operators}

In this section we will consider the Boson-Fock space $\cF_+(\cH)$. So we will drop the plus index on the creation, annihilation operators. 

It is convenient to introduce the family of operators $\{b(f); f\in\cH\}$ as follows 
\begin{equation*}
b(f)=a(f)+a^*(f),
\end{equation*}
where $a^*, a$ are boson creation and annihilation operators.
 Each $b(f)$ is a self-adjoint linear operator on the Fock space. The canonical commutation relations (\ref{CCR}) in terms of the operators $b$ now take form 
\begin{equation*}
[b(f), b(g)]=2i\Im(f,g).
\end{equation*}

The Weyl operator is defined as follows
\begin{equation*}
W(f)=\exp\{ib(f)\}.
\end{equation*}

The operators $b(f)$ as well as the creation and annihilation operators are unbounded, while the Weyl operator is unitary. This is one of the reasons it is easier to work with Weyl operators than with creation and annihilation operators. 

The Baker-Campbell-Hausdorff formula shows that for two operators $X$ and $Y$
\begin{equation*}
e^Xe^Y=e^{X+Y}e^{\frac{1}{2}[X,Y]},
\end{equation*}
if $[X,[X,Y]]=0=[Y,[X,Y]]$. Using this formula one can get the Weyl form of the canonical commutation relations
\begin{equation*}
W(f)W(g)=e^{-\frac{i}{2}\Im(f,g)}W(f+g)=e^{-i\Im(f,g)}W(g)W(f).
\end{equation*}

Denote the algebra of observables generated by all the Weyl operators as $\mathcal{U}$.

\subsection{States}
\label{sec:States}

The \emph{state} $\omega$ on the algebra of observables $\mathcal{U}$ maps each observable $A$ into its expectation value which is in general a complex number $\omega(A)$ with the properties\\
1) normalization: $\omega(\idty)=1$\\
2) linearity: for each pair $A, B$ of observables and each pair $\lambda, \mu$ of complex numbers one has $\omega(\lambda A+\mu B)=\lambda\omega(A)+\mu\omega(B)$\\
3) positivity: for each observable $A$, $\omega(A^*A)\geq 0.$

Let $\omega$ be any state on the Weyl algebra $\mathcal{U}$. This state is well-defined if for all $f\in\cH$ all the expectation values $\omega(W(f))$ are known. We consider states such that their expectation values could be written in terms of correlation functions
\begin{align}\label{State_Weyl}
\omega(W(f))&=\omega(e^{ib(f)})=\sum_{n=0}^\infty \frac{i^n}{n!}\omega(b(f)^n)\nonumber\\ 
&=\exp\{\sum_{n=1}^\infty\frac{i^n}{n!}\omega(b(f)^n)_t\},
\end{align}
where the truncated correlation functions $\omega(...)_t$ are defined recursively
\begin{equation*}
\omega(b(f_1)...b(f_n))=\sum\omega(b(f_1)...)_t...\omega(...b(f_n))_t,
\end{equation*}
here the sum is taken over all possible ordered partitions $(1,...),...(...,n)$ of the set $\{1,...,n\}$. The expression $\omega(b(f_1)...b(f_n))_t$ is called the truncated correlation function of order $n$. See \cite{Verbeure} for more details.

A state $\omega$ of the boson algebra of observables is called \emph{a quasi-free state}, if all its truncated correlation functions of orders $n>2$ vanish.

From (\ref{State_Weyl}) it follows that the general quasi-free state is determined by its one- and two-point correlation functions and therefore it is of the following form
\begin{align}
\omega(W(f))&=\exp\{i\omega(b(f))-\frac{1}{2}\omega(b(f)b(f))_t\}\nonumber\\
&=\exp\{i\omega(b(f))-\frac{1}{2}(\omega(b(f)^2)-\omega(b(f))^2)\}.\label{quasi-free}
\end{align}

The Gibbs states or \emph{the equilibrium states} are given by
\begin{equation}\label{Gibbs}
\omega_\beta(A)=\frac{\Tr(e^{-\beta a^*a}A)}{\Tr(e^{-\beta a^*a})},
\end{equation} 
for any $\beta\in\bC.$ The state that is not in this form is called \emph{a non-equilibrium state}.
It can be shown that every Gibbs state of the form (\ref{Gibbs}) is quasi-free (see \cite{Verbeure} for details).

\section{Dynamics of quantum systems}
\subsection{Reversible dynamics}

In quantum mechanics for a physical system described in terms of a Hilbert space $\cH$ a state $\phi\in\cH$ of the systems evolves in time under the action of a one-parameter strongly continuous group of unitary operators.

A family of unitary maps $U_t\in\cB(\cH)$, $t\in\bR$ is called \emph{a strongly continuous one-parameter group} of unitary operators if\\
1) $U_tU_s=U_{t+s},$ for $s,t\in\bR$,\\
2) $\lim_{t\rightarrow 0}\|U_t\phi-\phi\|=0$, for $\phi\in\cH$,\\
3) $U_{t}^{-1}=U_{-t}$.

If $\{U_t\}_{t\in\bR}$ is a strongly continuous one-parameter group of unitary operators, then by Stone's theorem \cite{} there exists a self-adjoint operator $H$, the Hamiltonian of the system, such that
\begin{equation*}
U_t=e^{-itH},
\end{equation*}
for $t\in\bR$.

The dynamics of the system can be described using two equivalent perspectives. Either a state changes in time while all obervables remain the same, of the dynamical group acts on the set of observables leaving the states unchanged.

The second case is called "the Heisenberg picture". The time evolution of the observable $A\in\cB(\cH)$ is described by
\begin{equation*}
\tau_t(A)=U_t^*AU_t=e^{itH}Ae^{-itH}.
\end{equation*}

\subsection{Lieb-Robinson bounds for reversible dynamics}
\label{sec:LR_rev}

For Lieb-Robinson bounds the quantum systems are considered on the countable set of vertices $\Gamma$ (called lattice) which is equipped with a metric $d$. A Hilbert space $\cH_x$ is assigned to each vertex $x\in\Gamma$. For any finite subset $\Lambda\subset\Gamma$ the Hilbert space associated with it is the tensor product 
\begin{equation}\label{space_H} 
\cH_\Lambda=\bigotimes_{x\in\Lambda}\cH_x.
\end{equation}

The local algebra of observables over $\Lambda$ is $$\cA_\Lambda=\bigotimes_{x\in\Lambda}\cB(\cH_x), $$ where $\cB(\cH_x)$ denotes the algebra of bounded linear operators on $\cH_x$. 

If $\Lambda_1\subset \Lambda_2$, then we may identify $\mathcal{A}_{\Lambda_1}$ in
a natural way with the subalgebra
$\mathcal{A}_{\Lambda_1}\otimes \idty_{\Lambda_2\setminus\Lambda_1}$ of
$\mathcal{A}_{\Lambda_2}$, and simply write
$\mathcal{A}_{\Lambda_1}\subset\mathcal{A}_{\Lambda_2}$.
Then the algebra of local observables is defined as an inductive limit
\begin{equation}\label{loc_observables}
\mathcal{A}_{\Gamma}^{\rm loc}= \bigcup_{\Lambda\subset\Gamma}\mathcal{A}_\Lambda.
\end{equation}
See \cite{} for more detailed on the inductive limit. 
The $C^*$-algebra of quasi-local observables $\mathcal{A}_{\Gamma}$ is the norm
completion of $\mathcal{A}_{\Gamma}^{\rm loc}$.

The \emph{support} of the observable $A\in\mathcal{A}_\Lambda$ is the minimal set
$X\subset\Lambda$ for which  $A=A^\prime\otimes \idty_{\Lambda\setminus X}$
for some $A^\prime\in\mathcal{A}_X$.

We assume that there exists a non-increasing function $F: [0, \infty)\rightarrow (0, \infty)$
such that:\\
i) $F$ is uniformly integrable over $\Gamma$, i.e.,
\begin{equation}\label{F_i}
\|F\|:= \sup_{x\in\Gamma}\sum_{y\in\Gamma} F(d(x,y)) < \infty,
\end{equation}
and\\
ii) $F$ satisfies
\begin{equation}\label{F_ii}
C:= \sup_{x,y\in \Gamma}\sum_{z\in\Gamma}\frac{F(d(x,z))F(d(y,z))}{F(d(x,y))}  < \infty.
\end{equation}

\begin{example} As an example one may take $\Gamma=\bZ^\nu$ for some integer $\nu\geq 1$ with the metric $d(x,y)=|x-y|=\sum_{j=1}^\nu|x_j-y_j|.$ In this case, the function $F$ can be chosen as $F(|x|)=(1+|x|)^{-\nu-\epsilon}$ for any $\epsilon>0.$ To show that the constant $C$ is finite we use the triangle inequality and the symmetry of the taken supremum over $x$ and $y$
\begin{eqnarray*}
&&C=\sup_{x,y\in\bZ^\nu}\sum_{z\in\bZ^\nu}\frac{(1+|x-y|)^{\nu+\epsilon}}{(1+|x-z|)^{\nu+\epsilon}(1+|y-z|)^{\nu+\epsilon}}\\
&&\leq\sup_{x,y\in\bZ^\nu}\sum_{z\in\bZ^\nu}\frac{(1+|x-z|+|y-z|+1)^{\nu+\epsilon}}{(1+|x-z|)^{\nu+\epsilon}(1+|y-z|)^{\nu+\epsilon}}\\
&&\leq\sup_{x,y\in\bZ^\nu}\sum_{z\in\bZ^\nu}\Bigl(\frac{1}{1+|x-z|}+\frac{1}{1+|y-z|}\Bigr)^{\nu+\epsilon}.
\end{eqnarray*}
Now we use the inequality between geometric and algebraic mean: for any $\alpha\geq 1$ and any $a,b\geq 0$, $$\Bigl(\frac{a+b}{2}\Bigr)^\alpha\leq\frac{a^\alpha+b^\alpha}{2}.$$ Continuing the calculations 

\begin{eqnarray*}
&&C\leq\sup_{x,y\in\bZ^\nu}\sum_{z\in\bZ^\nu}2^{\nu+\epsilon-1}\Bigl(\Bigl(\frac{1}{1+|x-z|}\Bigr)^{\nu+\epsilon}+\Bigl(\frac{1}{1+|y-z|}\Bigr)^{\nu+\epsilon}\Bigr)\\
&&=2^{\nu+\epsilon-1}\Bigl(\sup_{x\in\bZ^\nu}\sum_{z\in\bZ^\nu}\Bigl(\frac{1}{1+|x-z|}\Bigr)^{\nu+\epsilon} +\sup_{y\in\bZ^\nu}\sum_{z\in\bZ^\nu}\Bigl(\frac{1}{1+|y-z|}\Bigr)^{\nu+\epsilon}\Bigr).\\
\end{eqnarray*}

Since two supremums are the same and each of them is achieved at any value we have 
\begin{eqnarray*}
&&C\leq2^{\nu+\epsilon}\sup_{x\in\bZ^\nu}\sum_{z\in\bZ^\nu}\Bigl(\frac{1}{1+|x-z|}\Bigr)^{\nu+\epsilon}\\
&&   \leq  2^{\nu+\epsilon}\sum_{w\in\bZ^\nu}\frac{1}{(1+|w|)^{\nu+\epsilon}}<\infty.
\end{eqnarray*}
\end{example}

Having a set $\Gamma$ with a function $F$ that satisfies (\ref{F_i}) and (\ref{F_ii}), we can define
for any $\mu>0$ the function
\begin{equation}\label{F_mu}
F_\mu(d)=e^{-\mu d}F(d),
\end{equation}
which then also satisfies i) and ii) with $\|F_\mu\|\leq \|F\|$ and $C_\mu\leq C$.

The Hamiltonian of the system is described by local Hamiltonians $H^{loc}=\{H_x\}$, where $H_x$ is a not necessarily bounded self-adjoint operator over $\cH_x$, and bounded perturbations defined in terms of interaction $\Phi$. The interaction $\Phi$ is a map from the set of subsets of $\Gamma$ to $\cA_\Gamma$ with the properties that for each finite set $X\subset\Gamma$, $\Phi(X)\in\cA_X$ and $\Phi(X)^*=\Phi(X)$. For any $\mu\geq 0$, denote by $\cB_\mu(\Gamma)$ the set of interactions for which
\begin{equation}\label{interaction_rev}
\|\Phi\|_\mu=\sup_{x,y\in\Gamma}\sum_{X\ni x,y}\frac{\|\Phi(X)\|}{F_\mu(x,y)}<\infty.
\end{equation}

For every finite subset $\Lambda\subset\Gamma$ the Hamiltonian is of the form
\begin{equation*}
H_\Lambda=H_\Lambda^{loc}+H_\Lambda^{\Phi}=\sum_{x\in\Lambda}H_x+\sum_{X\subset\Lambda}\Phi(X).
\end{equation*}

Since these operators are self-adjoint they generate a Heisenberg dynamics $$\tau_t^\Lambda(A)=e^{itH_\Lambda}Ae^{-itH_\Lambda}$$ for any $A\in\cA_\Lambda$.

The following theorem provides the Lieb-Robinson bounds.
\begin{theorem}
Let $X$ and $Y$ be two disjoint subsets of $\Lambda$. Then for any pair of local observables $A\in\cA_X$ and $B\in\cA_Y$ one has that
\begin{equation}\label{LR_rev}
\|[\tau_t^\Lambda(A),B]\|\leq \frac{2\|A\|\|B\|}{C_\mu} (e^{2\|\Phi\|_\mu C_\mu |t|}-1)\sum_{x\in X}\sum_{y\in Y}F_{\mu}(d(x,y)).
\end{equation}
  \end{theorem}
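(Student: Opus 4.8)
The standard approach is to bound the function $f(t) := [\tau_t^\Lambda(A), B]$ by a Gronwall-type argument, using the structure of the derivative and then iterating the resulting integral inequality. First I would compute the time derivative: since $\tau_t^\Lambda$ is generated by $\mathrm{ad}_{iH_\Lambda}$, we have $\frac{d}{dt}\tau_t^\Lambda(A) = \tau_t^\Lambda(i[H_\Lambda, A])$, and hence, writing $f(t) = [\tau_t^\Lambda(A), B]$,
\begin{equation*}
\frac{d}{dt} f(t) = i[\tau_t^\Lambda([H_\Lambda, A]), B] = i[[H_\Lambda, \tau_t^\Lambda(A)], B].
\end{equation*}
Now use the Jacobi identity to move the inner commutator: $[[H_\Lambda, \tau_t^\Lambda(A)], B] = [H_\Lambda, [\tau_t^\Lambda(A), B]] - [\tau_t^\Lambda(A), [H_\Lambda, B]]$. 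The first term is $[H_\Lambda, f(t)]$, a commutator with a bounded (once we restrict to finite $\Lambda$) operator, which generates an isometry in norm and so does not contribute to the growth; the second term is the genuine source term. Only those $\Phi(X)$ with $X \cap Y \neq \emptyset$ survive in $[H_\Lambda, B]$, since $B \in \cA_Y$.

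\textbf{Key steps.} After conjugating by the unitary group generated by $H_\Lambda$ to remove the $[H_\Lambda, f(t)]$ term (equivalently, pass to an interaction-picture norm estimate), one obtains
\begin{equation*}
\|f(t)\| \le \|f(0)\| + \int_0^{|t|} \big\| \big[ \tau_s^\Lambda(A), \, [H_\Lambda, B] \big] \big\|\, ds
\le \|f(0)\| + 2\|B\| \sum_{Z \cap Y \neq \emptyset} \int_0^{|t|} \big\| [\tau_s^\Lambda(A), \Phi(Z)] \big\|\, ds,
\end{equation*}
where $\|f(0)\| = \|[A,B]\| = 0$ because $X \cap Y = \emptyset$. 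The crucial point is that $\|[\tau_s^\Lambda(A), \Phi(Z)]\|$ is again a quantity of the same type as $f$ but with $B$ replaced by $\Phi(Z)$ and $Y$ replaced by $Z$, so the inequality can be fed back into itself. Iterating this $n$ times produces a sum over chains $Y, Z_1, Z_2, \dots, Z_n$ with consecutive overlaps, each factor contributing a $\|\Phi(Z_j)\|$ and an integration giving $|t|^n/n!$, while the "reproducibility" condition (\ref{F_ii}) — precisely the finiteness of $C_\mu$ — is what allows each summation over an intermediate set $Z_j$ to be carried out at the cost of a factor $\|\Phi\|_\mu C_\mu$, collapsing the telescoping product of $F_\mu$'s down to a single $F_\mu(d(x,y))$ with $x \in X$, $y \in Y$. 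Summing the resulting series $\sum_n (2\|\Phi\|_\mu C_\mu |t|)^n / n!$ gives the exponential $e^{2\|\Phi\|_\mu C_\mu|t|}$, and keeping track of the constants (one factor of $C_\mu$ is left over in the denominator, and the $n=0$ term vanishes, which is why one gets $e^{2\|\Phi\|_\mu C_\mu|t|} - 1$ rather than the full exponential) yields exactly (\ref{LR_rev}).

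\textbf{Main obstacle.} The routine parts are the differentiation and the Jacobi/Duhamel manipulation; the part requiring care is the bookkeeping in the iteration — verifying that at each stage the sum over the new intermediate set is controlled by $C_\mu$ via (\ref{F_ii}), that the telescoping of the $F_\mu$ factors genuinely leaves only the endpoints, and that the combinatorial factor from the $n$-fold time integral is exactly $|t|^n/n!$ so that the series converges to $e^{2\|\Phi\|_\mu C_\mu|t|}-1$. One should also note that restricting to finite $\Lambda$ makes $H_\Lambda$ bounded, so all the formal manipulations (differentiating $\tau_t^\Lambda$, interchanging sums and integrals) are justified; and the local single-site terms $H_x$ drop out of every commutator $[H_\Lambda, B]$ unless $x \in Y$, but since $H_x$ commutes with everything in $\cA_Y$ except through the $\Phi(Z)$ it actually never appears — so only the interaction terms $\Phi(Z)$ with $Z \cap Y \neq \emptyset$ enter, as used above.
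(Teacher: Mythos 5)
Your overall plan (a Duhamel step that strips off a norm-preserving commutator, followed by iteration of the resulting integral inequality, with condition (\ref{F_ii}) collapsing the sums over intermediate sets at a cost $\|\Phi\|_\mu C_\mu$ per step) is the right family of argument, and it is close in spirit to what the paper does; the paper, however, obtains (\ref{LR_rev}) as the special case $\cK=[A,\cdot]$ of Theorem \ref{thm:lrbounds}: there the \emph{evolved} observable is $B$, the probe is a completely bounded map $\cK\in\cB_X$, the generator is split as $\cL=\cL_{X^c}+\bar\cL_X$ with $[\cK,\cL_{X^c}]=0$, and the iterated quantity is $C_B(X,t)=\sup_{\mathcal T\in\cB_X}\|\mathcal T\gamma_{t,s}(B)\|/\|\mathcal T\|_{\rm cb}$.

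Your implementation has a genuine gap at the step that is supposed to make the iteration close. From $f'(t)=i[H_\Lambda,f(t)]-i[\tau_t^\Lambda(A),[H_\Lambda,B]]$ you pass to $\|[\tau_s^\Lambda(A),[H_\Lambda,B]]\|\le 2\|B\|\sum_{Z\cap Y\neq\emptyset}\|[\tau_s^\Lambda(A),\Phi(Z)]\|$, but this inequality is false in general: expanding $[\tau_s(A),[\Phi(Z),B]]$ by Jacobi gives, besides $[[\tau_s(A),\Phi(Z)],B]$, the term $[\Phi(Z),[\tau_s(A),B]]$, which is proportional to the unknown $\|f(s)\|$ itself (with prefactor $2\|\Phi(Z)\|$, and summing over $Z$ touching $Y$ makes this prefactor grow with $Y$). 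A one-site counterexample shows the claimed bound cannot hold: for $C=D=\sigma^x$, $B=\sigma^z$ one has $[C,D]=0$ but $\|[C,[D,B]]\|=4$, and since $\tau_s(A)$ is no longer localized in $X$ for $s>0$, no support argument removes the extra term. One can instead iterate with $B$ replaced by the composite observable $[\Phi(Z),B]$ (supported in $Y\cup Z$), but then the supports accumulate, $Z_{j+1}$ may attach to any of $Y\cup Z_1\cup\dots\cup Z_j$, and the bookkeeping you describe (consecutive overlaps, a single telescoping $F_\mu$, the constants of (\ref{LR_rev})) no longer goes through as stated. The standard way to make your ``evolve $A$'' route work is to arrange Duhamel so the source term is $[\tau_s(\Phi(Z)),B]$ with $Z\cap X\neq\emptyset$: decompose $H_\Lambda=\bar H_X+H_{X^c}$, use $[H_{X^c},A]=0$ to write $f'(t)=i[\tau_t(\bar H_X),f(t)]-i[\tau_t(A),[\tau_t(\bar H_X),B]]$, remove the first (norm-preserving) term, and iterate $C_B(Z,s)=\sup_{A'\in\cA_Z}\|[\tau_s(A'),B]\|/\|A'\|$ with $B$ held fixed; this is exactly the reversible specialization of the paper's argument. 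Finally, your claim that the on-site terms $H_x$ ``never appear'' is wrong: $[H_x,B]\neq0$ in general for $x\in Y$ (in the corrected scheme the issue arises for $x\in X$), and these terms may be unbounded; they must be removed beforehand by passing to the interaction picture with respect to $H^{\rm loc}=\sum_x H_x$, a product of on-site unitary conjugations that preserves supports and the norms $\|\Phi(Z)\|$ — which is precisely why the general framework of Chapter \ref{sec:Thermodynamics_irrev} works with time-dependent interactions and no unbounded on-site terms in the generator.
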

  
The bound in the theorem could be rewritten using the properties of the function $F$ as follows
\begin{equation}\label{LR_rev_reg}
  \|[\tau_t^\Lambda(A),B]\|\leq \frac{2\|A\|\|B\|}{C_\mu}\|F\|\min(|X|, |Y|) e^{-\mu(d(X, Y)-v_\mu|t|)},
\end{equation}
  where $v_\mu=\frac{2\|\Phi\|_\mu C_\mu}{\mu}$ is the Lieb-Robinson velocity.
  
  This theorem is a special case of the general Lieb-Robinson bound (\ref{LR_intro}) that we are going to consider here, so the proof of this theorem is included in the proof in section \ref{sec:Thermo_LR} as a particular case and it also can be found separately in \cite{nachtergaele:2009a}.

\subsection{Existence of the dynamics in the thermodynamic limit}

We assume that the set $\Gamma$, which is still equipped with a metric $d$, as a countable set with infinite cardinality. One can take, for example, $\Gamma=\bZ^\nu$, for some $\nu\geq1$.

The thermodynamic limit is taken over an increasing exhausting sequence of finite subsets $\Lambda_n\subset\Gamma$.
\begin{theorem}
Let $\mu>0$ and $\Phi\in\cB_\mu(\Gamma)$. The dynamics $\tau_t$ corresponding to $\Phi$ exists as a strongly continuous one-parameter group of automorphisms on $\cA_\Gamma$ such that for all $t\in\bR$
\begin{equation*}
\lim_{n\rightarrow\infty}\|\tau_t^\Lambda(A)-\tau_t(A)\|=0
\end{equation*}
for all $A\in\cA_\Gamma$.
\end{theorem}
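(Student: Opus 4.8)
The plan is to prove that for each local observable $A\in\cA_\Gamma^{\rm loc}$ the sequence $\tau_t^{\Lambda_n}(A)$ is Cauchy in norm, uniformly for $t$ in any bounded interval, and then to pass to the limit. Since each $\tau_t^{\Lambda_n}$ is an automorphism, hence isometric, the limit extends by density to all of $\cA_\Gamma$, and the algebraic and continuity properties follow by taking limits in the corresponding finite-volume identities. The Lieb-Robinson bound \eqref{LR_rev} is the tool that controls the Cauchy estimate.

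\smallskip

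First I would fix $A\in\cA_X$ with $X$ finite and take $m\le n$ with $X\subset\Lambda_m$. Using that $\tau^{\Lambda_m}$ leaves $\cA_{\Lambda_m}$ invariant and that $\tau^{\Lambda_n}$ is an automorphism, I would write the telescoping (Duhamel) identity
\[
\tau_t^{\Lambda_n}(A)-\tau_t^{\Lambda_m}(A)=\int_0^t\frac{d}{ds}\,\tau_s^{\Lambda_n}\!\bigl(\tau_{t-s}^{\Lambda_m}(A)\bigr)\,ds
=\int_0^t\tau_s^{\Lambda_n}\!\bigl(i[H_{\Lambda_n}-H_{\Lambda_m},\,\tau_{t-s}^{\Lambda_m}(A)]\bigr)\,ds ;
\]
the differentiation is legitimate after removing the mutually commuting local terms $H_x$ via the interaction picture, which reduces the generator to the bounded interaction part. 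In the difference $H_{\Lambda_n}-H_{\Lambda_m}$ the local terms cancel on $\Lambda_m$ and commute with $\tau_{t-s}^{\Lambda_m}(A)\in\cA_{\Lambda_m}$ for $x\in\Lambda_n\setminus\Lambda_m$, so only the interaction terms $\Phi(Z)$ with $Z\cap(\Lambda_n\setminus\Lambda_m)\neq\emptyset$ survive, and, since $\tau_s^{\Lambda_n}$ is isometric,
\[
\|\tau_t^{\Lambda_n}(A)-\tau_t^{\Lambda_m}(A)\|\le\int_0^{|t|}\sum_{\substack{Z\subset\Lambda_n\\ Z\cap(\Lambda_n\setminus\Lambda_m)\neq\emptyset}}\bigl\|[\tau_{t-s}^{\Lambda_m}(A),\Phi(Z)]\bigr\|\,ds .
\]

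\smallskip

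Next I would estimate each commutator: for $Z$ with $Z\cap X=\emptyset$ apply \eqref{LR_rev} with the evolving observable $A$ and test observable $\Phi(Z)$, and for the finitely many $Z$ meeting $X$ use the trivial bound $2\|A\|\,\|\Phi(Z)\|$. In either case one gets a bound of the form $c(A,t)\,\|\Phi(Z)\|\sum_{x\in X}\sum_{y\in Z}F_\mu(d(x,y))$ with $c(A,t)$ bounded on bounded $t$-intervals. Summing over $Z$ and using $\sum_{Z\ni y,z}\|\Phi(Z)\|\le\|\Phi\|_\mu F_\mu(d(y,z))$ from \eqref{interaction_rev} together with property \eqref{F_ii} of $F$, the whole sum is bounded by a constant (depending on $A$ and the length of the $t$-interval) times $\sum_{x\in X}\sum_{z\in\Lambda_n\setminus\Lambda_m}F_\mu(d(x,z))$. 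Because $\sum_{y\in\Gamma}F_\mu(d(x,y))\le\|F\|<\infty$ by \eqref{F_i} and \eqref{F_mu}, and the $\Lambda_k$ exhaust $\Gamma$, the tail $\sum_{z\in\Gamma\setminus\Lambda_m}F_\mu(d(x,z))\to0$ as $m\to\infty$. Hence $\tau_t^{\Lambda_n}(A)$ is Cauchy uniformly on bounded $t$-intervals; I would then define $\tau_t(A):=\lim_n\tau_t^{\Lambda_n}(A)$ for $A\in\cA_\Gamma^{\rm loc}$, and, since $\|\tau_t^{\Lambda_n}(A)\|=\|A\|$, extend $\tau_t$ by an $\eps/3$-argument to an isometry on $\cA_\Gamma$; the same estimate gives $\|\tau_t^{\Lambda_n}(A)-\tau_t(A)\|\to0$ for every $A\in\cA_\Gamma$, which is the asserted convergence.

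\smallskip

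Finally I would verify the structure. Taking $n\to\infty$ in $\tau_t^{\Lambda_n}(AB)=\tau_t^{\Lambda_n}(A)\tau_t^{\Lambda_n}(B)$, $\tau_t^{\Lambda_n}(A^*)=\tau_t^{\Lambda_n}(A)^*$, $\tau_t^{\Lambda_n}\circ\tau_s^{\Lambda_n}=\tau_{t+s}^{\Lambda_n}$ and $\tau_0^{\Lambda_n}=\id$ shows that $\{\tau_t\}_{t\in\bR}$ is a one-parameter group of $\ast$-automorphisms of $\cA_\Gamma$ (invertibility via $\tau_{-t}$). For strong continuity, given a local $A$ and $\eps>0$, I would first choose $\Lambda_m$ so large that $\|\tau_t(A)-\tau_t^{\Lambda_m}(A)\|<\eps/2$ for all $|t|\le1$ (possible by the estimate above), then use norm-continuity of the finite-volume dynamics to get $\|\tau_t^{\Lambda_m}(A)-A\|<\eps/2$ for small $t$; density of $\cA_\Gamma^{\rm loc}$ together with $\|\tau_t\|=1$ upgrades continuity to all of $\cA_\Gamma$. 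I expect the bookkeeping in the third step to be the main obstacle: organizing the double sum over $Z$ and over the sites of $X$, via \eqref{F_i}, \eqref{F_ii} and $\|\Phi\|_\mu$, into a quantity that manifestly vanishes as $\Lambda_m\uparrow\Gamma$. A secondary technical point is the justification of the Duhamel formula when the local Hamiltonians $H_x$ are unbounded, handled by removing them first through the interaction picture.
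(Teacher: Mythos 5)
Your proposal is correct and follows essentially the same route as the paper: the paper proves this statement as the reversible special case of Theorem \ref{thm:thermodynamiclimit} in Section \ref{sec:Thermo_Thermo}, where the difference of two finite-volume evolutions is likewise expressed by a Duhamel-type identity involving the difference of generators, estimated with the Lieb-Robinson bound and the decay conditions \eqref{F_i}, \eqref{F_ii}, \eqref{interaction_rev} to obtain a Cauchy sequence, and then extended by density with a three-term argument for strong continuity. The only differences are cosmetic: you interpolate via $\tau_s^{\Lambda_n}\circ\tau_{t-s}^{\Lambda_m}$ and use isometry (plus the interaction-picture remark for the unbounded on-site terms and the explicit verification of the automorphism-group structure), whereas the paper writes the difference as the solution of an inhomogeneous ODE driven by $\cL_n-\cL_m$ and uses contractivity of the cocycle.
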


The proof of this theorem can be found in \cite{nachtergaele:2009a} or, as a particular case, in section \ref{sec:Thermo_Thermo} where we prove the existence of the thermodynamic limit for the irreversible dynamics.

\subsection{Irreversible dynamics}\label{Irrev-Dyn}

In general, if we consider an open system taking into the account the interaction between the system and an environment we have to consider a non-Hamiltonian system. The dynamical maps $\gamma_t$ of such system form \emph{a one-parameter completely positive dynamical semigroup} on an algebra of observables $\cU$, which is described by the properties\\
1) $\gamma_t$ is completely positive,\\
2) $\gamma_t(I)=I$,\\
3) $\gamma_s\gamma_t=\gamma_{t+s},$\\
4) $\gamma_t \rightarrow \idty$ when $t\rightarrow 0$ strongly, i.e. for any $A\in\cU$, $\lim_{t\rightarrow\infty}\|\gamma_t(A) - A\|=0$.\\
A map $\gamma$ is called \emph{completely positive} if for any $n>1$, the linear maps $\gamma\otimes\id_{M_n}$ where $M_n$ are $n\times n$ complex matrices, are positive.

A strongly continuous group of automorphisms is a special case of a dynamical semigroup for $\gamma_t(A)=\tau_t(A).$

The dynamics $\gamma_t$ of the open system is the solution to the Markovian master equation
\begin{equation}\label{Markovian_ME}
\frac{d}{dt}\gamma_t(A)=\cL (\gamma_t(A)),
\end{equation}
for all $t\in\bR$, with $\gamma_0(A)=A$, where the generator $\cL$ acts on the space of observables.

It was shown in \cite{lindblad} that the generator $\cL$ is in the Lindblad-Kossakowski form  
\begin{equation}\label{Linblad_gen}
\cL(A)=i[H,A]+\sum_a(L_a^*AL_a-\frac{1}{2}\{L_a^*L_a,A\}),
\end{equation}
where $L_a$ are such that $\sum_a L_a^*L_a$ is a bounded operator and here $\{A,B\}=AB+BA$ is the anti-commutator. The Hamiltonian $H$ have bounded interaction terms, but may have unbounded on-site terms as in section \ref{sec:LR_rev}, making the generator $\cL$ unbounded.

The proof that the time evolution of an open system is Markovian in the weak coupling limit can be found in \cite{Davies}. It was shown there that the limit of the unitary dynamics that depends on a coupling constant may not be unitary, but it satisfies the Markovian master equation (\ref{Markovian_ME}) when the coupling constant converges to zero for a rescaled time variable.

In this dissertation we are going to discuss the dynamics generated by the observables of the type (\ref{Linblad_gen}), where $H$ and each observable $L_a$ are time-dependent. We will show that the evolution with this type of generator is well-defined as a norm-continuous cocycle of unit preserving completely positive maps.

    \newchapter{Thermodynamic limit}{Thermodynamic limit for a class of irreversible quantum dynamics}{Lieb-Robinson bounds and the existence of the thermodynamic limit for a class of irreversible quantum dynamics}
     \label{sec:Thermodynamics_irrev}

    \section{Set up}
    \label{sec:Thermo_setup}
The set up here is the similar to the set up for the Lieb-Robinson bounds for unitary dynamics as in section \ref{sec:LR_rev}. The system is considered on the countable set $\Gamma$ equipped with a metric $d$. We put the same restriction on the lattice: there is a non-increasing function $F:[0,\infty]\rightarrow(0,\infty)$ that satisfies (\ref{F_i}) and (\ref{F_ii}). 

The Hilbert space of states $\cH_\Lambda$ of any finite subsystem $\Lambda\subset\Gamma$ is defined by (\ref{space_H}) as a tensor product of Hilbert spaces $\cH_x$ of every point $x$ in $\Lambda$. The $C^*$-algebra of quasi-local observables $\cA_\Gamma$ is the norm completion of the algebra of local observables $\cA^{loc}_\Gamma$ defined in (\ref{loc_observables}) as a space of bounded linear operators on $\cH_\Lambda$.

The generator of the dynamics is now consists of two parts: the Hamiltonian interactions and the dissipative terms. We allow both of them to be time-dependent. 

As in the case of unitary dynamics, the Hamiltonian part is described by an interaction $\Phi(t,\cdot)$, which is now time-dependent. For every time $t\in\R$ the interaction $\Phi(t,\cdot)$ is a map from the set of finite subsets of $\Gamma$ to $\cA_\Gamma$, such that for every finite $Z\subset\Gamma$
\begin{enumerate}
\item $\Phi(t,Z)\in\cA_Z,\\$
\item$ \Phi(t,Z)^*=\Phi(t,Z).$
\end{enumerate}

The dissipative part is described by terms of Lindblad form. For every finite $Z\subset\Gamma$ these terms are defined by a set of operators $L_a(t,Z)\in\cA_Z$, where $a=1,..., N(Z)$. It is possible for $N(Z)=\infty$ with an additional assumption of the convergence of the sum. 

Then, for any finite set $\Lambda\subset\Gamma$ and time $t\in\R$ the generator $\cL_\Lambda\in\cB(\cA_\Lambda,\cA_\Lambda)$ is defined as follows: for all $A\in\cA_\Lambda$,
\begin{align}
\cL_\Lambda(t)(A)=&\sum_{Z\subset\Lambda}\Psi_Z(t)(A),\label{Def_L} \text{ where}\\
\Psi_Z(t)(A)=& i[\Phi(t,Z), A]\label{psiz}\\
&+\sum_{a=1}^{N(Z)} \left(L^*_a(t,Z)AL_{a}(t,Z)-\frac{1}{2}\{L_{a}(t,Z)^*L_{a}(t,Z), A\}\right)\nonumber,
\end{align} 
here $\{A,B\}=AB+BA$ is the anticommutator of $A$ and $B$. The operator $\Psi_Z(t)$ can be viewed as a bounded linear transformation on $\cA_\Lambda$ for any $\Lambda\supset Z$, which can be written in the form $\Psi_Z(t)\otimes\id_{\cA_{\Lambda\setminus Z}}$. For every $Z\subset\Lambda$ the norm of these maps can be bounded independently of the choice of $\Lambda\supset Z$ as follows:
\begin{equation*}
\|\Psi_Z(t)\|\leq2\|\Phi(t,Z)\|+2\sum_{a=1}^{N(Z)}\|L_a(t,Z)\|^2.
\end{equation*}
If $N(Z)=\infty$ we can insure that the sums $\sum_{a=1}^{N(Z)}\|L_a(t,Z)\|^2$ converge guaranteeing the uniform boundedness of the maps $\Psi_Z(t)$. However, it is more general to assume that the maps $\Psi_Z(t)$ defined on $\cA_Z$ are completely bounded. A map $\Psi\in\cB(\cA_Z)$ is called \textit{completely bounded} if for all $n>1$, the linear maps $\Psi\otimes\id_{M_n}$, where $M_n=\cB(\C^n)$ are $n\times n$ complex matrices, defined on $\cA_Z\otimes M_n$ are bounded with uniformly bounded norm. The \textit{cb-norm} is then defined by 
\begin{equation*}
\Vert \Psi\Vert_{\rm cb}=\sup_{n\geq 1} \Vert \Psi\otimes {\rm id}_{M_n}\Vert<\infty.
\end{equation*}
By this definition the cb-norm of $\Psi_Z(t)\in\cB(\cA_Z,\cA_Z)$ is independent of any $\Lambda$ such that $Z\subset\Lambda\subset\Gamma$.

The main assumption that we make is the following.

\begin{assumption}\label{assumption1}
Given $(\Gamma, d)$ and $F$ as described at the beginning of this section, the following hypotheses hold:

\begin{enumerate}
\item
For all finite $\Lambda\subset\Gamma$, $\mathcal{L}_\Lambda(t)$ is norm-continuous in $t$ (with the uniform operator norm on $\cB(\cA_\Lambda,\cA_\Lambda)$),
and hence uniformly continuous on compact intervals.
\item
There exists $\mu>0$ such that for every $t\in\mathbb{R}$
\begin{equation}\label{Decay_L}
\|\Psi\|_{t,\mu}:=\sup_{s\in[0,t]}\sup_{x,y\in\Lambda\subset\Gamma}
\sum_{Z\ni x,y}\frac{\|\Psi_Z(s)\|_{\rm cb}}{F_\mu(d(x,y))}<\infty.
\end{equation}
where $\Vert\cdot\Vert_{\rm cb}$ denotes the cb-norm of completely bounded
maps.
\end{enumerate}
\end{assumption}

The second part of the assumption means that the interaction weakens exponentially as the diameter of the support grows. This condition is similar to (\ref{interaction_rev}) in unitary dynamics case.

From definitions note that we have
$$
\|\mathcal{L}_\Lambda(t)\|\leq \sum_{Z\subset\Lambda}\|\Psi_Z(t)\|\leq\sum_{x,y\in\Lambda}
\sum_{Z\ni x,y}\|\Psi_Z(t)\|_{\rm cb}\leq \|\Psi\|_{t,\mu}|\Lambda|\|F\|.
$$
Define
\begin{equation}\label{Mt}
M_t=\|\Psi\|_{t,\mu}|\Lambda|\|F\|\, .
\end{equation}
From the definition (\ref{Decay_L}) it is clear that $M_s\leq M_t$ for $s<t$.

To define the dynamics of the system fix $T>0$ and, for all $A\in\A_\Lambda$, let $A(t), t\in [0,T]$ be a solution of the initial
value problem
\begin{equation}\label{ode}
 \frac{d}{dt} A(t)=\cL_\Lambda(t)A(t),\quad A(0)=A.
\end{equation}
Since $\|\mathcal{L}_\Lambda(t)\|\leq M_T<\infty$, this solution exists and is unique by the
standard existence and uniqueness results for ordinary differential equations.

For $0\leq s\leq t\leq T$, define the family of maps
$\{\gamma_{t,s}^\Lambda\}_{0\leq s\leq t} \subset \cB(\cA_\Lambda, \cA_\Lambda)$ by $$\gamma_{t,s}^\Lambda(A)=A(t),$$
where $A(t)$ is the unique solution of \eq{ode} for $t\in [s,T]$ with initial condition $A(s)=A$. 

Then, the {\em cocycle property}, $\gamma_{t,s}(A(s))=A(t)$, follows from the uniqueness of the
solution of \eq{ode}. 

Recall that a linear map $\gamma:\cA\to\cB$, where $\cA$ and $\cB$
are $C^*$-algebras is called {\em completely positive} if the maps $\gamma\otimes\id:
\cA\otimes M_n \to \cB\otimes M_n$ are positive for all $n\geq 1$. Here $M_n$ stands
for the $n\times n$ matrices with complex entries, and positive means that positive
elements (i.e., elements of the form $A^*A$)  are mapped into positive elements.


\section{Existence of a dynamics of the finite system as a semigroup of completely positive unit preserving maps}
\label{sec:Thermo_Exist_dynam}

 In this section we show that the dynamics defined in the previous section exists as a
norm-continuous cocycle of unit preserving completely positive maps. This extends the well-known result for time-independent generators of Lindblad form
\cite{lindblad:1976} to the time-dependent case. The theorem is formulated for more general generators of the Markovian dynamics, which includes the generators of the type (\ref{Def_L}).

\begin{theorem}\label{thm:finite}
Let $\cA$ be a $C^*$-algebra, $T>0$, and for $t\in [0,T]$, let $\cL(t)$ be a norm-continuous
family of bounded linear operators on $\cA$.
If\newline
(i) $\mathcal{L}(t)(\idty)=0;$\newline
(ii) for all $A\in\cA$, $\mathcal{L}(t)(A^{*})=\mathcal{L}(t)(A)^{*} $;\newline
(iii) for all $A\in\cA$,   $\mathcal{L}(t)(A^* A)-\mathcal{L}(t)(A^*)A-A^*\mathcal{L}(t)(A)\geq 0$;\newline
then the maps $\gamma_{t,s}$, $0\leq s\leq t\leq T$, defined by equation \eq{ode}, are a
norm-continuous cocycle of unit preserving completely positive maps.
\end{theorem}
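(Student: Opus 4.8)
The plan is to construct $\gamma_{t,s}$ as the norm limit of a Dyson-type expansion and verify the required structural properties by passing to the limit. First I would establish existence, uniqueness, and the cocycle property abstractly: since $\cL(t)$ is norm-continuous on the compact interval $[0,T]$, it is bounded in operator norm by some $M<\infty$, so the integral equation $\gamma_{t,s}(A) = A + \int_s^t \cL(r)\gamma_{r,s}(A)\,dr$ has a unique solution obtained by Picard iteration, and this solution is norm-continuous jointly in $(t,s)$. The cocycle identity $\gamma_{t,s}\circ\gamma_{s,r} = \gamma_{t,r}$ for $r\le s\le t$ follows from uniqueness of solutions of \eqref{ode}. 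Unit preservation, $\gamma_{t,s}(\idty) = \idty$, is immediate from hypothesis (i): the constant function $A(t)\equiv\idty$ solves the initial value problem.

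The heart of the matter is complete positivity, and here the natural strategy is a Lie–Trotter / time-slicing approximation. For a partition $s = t_0 < t_1 < \dots < t_n = t$ of mesh $\delta$, I would approximate $\gamma_{t,s}$ by the composition $\prod_{k} e^{(t_{k+1}-t_k)\cL(t_k)}$, and show this converges in norm to $\gamma_{t,s}$ as $\delta\to 0$ (a standard Duhamel estimate using norm-continuity of $\cL(\cdot)$). Since compositions and norm limits of completely positive maps are completely positive, it suffices to show that each fixed-time exponential $e^{\tau\cL(t)}$, $\tau\ge 0$, is completely positive. This is the time-independent Lindblad situation: hypotheses (i), (ii), (iii) say precisely that $\cL(t)$ is Hermiticity-preserving, annihilates the identity, and is \emph{conditionally completely positive} — indeed (iii) is the dissipativity inequality $\cL(A^*A) - \cL(A^*)A - A^*\cL(A)\ge 0$, and one checks (using (ii) to build matrix amplifications) that $\cL(t)\otimes\id_{M_m}$ satisfies the same inequality for every $m$. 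By the Lindblad–Kossakowski characterization (equivalently, by a direct argument: $e^{\tau\cL}(A) = \lim_{n}\big(\idty + \tfrac{\tau}{n}\cL\big)^n(A)$ combined with the fact that for small $\epsilon>0$ the map $A\mapsto A + \epsilon\cL(A) + \epsilon\,\Lambda(A)$ can be shown completely positive, where $\Lambda$ absorbs the non–completely-positive part, cf. \cite{lindblad:1976}), the semigroup $e^{\tau\cL(t)}$ is completely positive for all $\tau\ge 0$.

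The main obstacle I anticipate is the passage from the conditional complete positivity encoded in (i)–(iii) to genuine complete positivity of $e^{\tau\cL(t)}$ in the $C^*$-algebraic setting, where one cannot simply invoke the finite-dimensional structure theorem for Lindblad generators. The clean route is the argument of Lindblad: show that for $\epsilon>0$ small, $\Phi_\epsilon := \id + \epsilon\cL(t)$ fails complete positivity only through a controllable term, write $e^{\tau\cL(t)}$ as a norm limit of products, and use that hypothesis (iii) — together with its matrix amplifications — guarantees the "off-diagonal" negativity is dominated. One then concludes $\gamma_{t,s}$ is completely positive as a norm limit of finite products of such maps. The remaining verifications (norm-continuity of $(t,s)\mapsto\gamma_{t,s}$, and that the Trotter approximants genuinely converge) are routine Duhamel/Gronwall estimates using $\|\cL(t)\|\le M_T$ and the uniform continuity of $t\mapsto\cL(t)$ on $[0,T]$.
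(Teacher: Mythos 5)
Your proposal is correct in outline, but it takes a genuinely different route from the paper. You reduce the time-dependent problem to the time-independent one: approximate $\gamma_{t,s}$ by a Trotter product of exponentials $e^{(t_{k+1}-t_k)\cL(t_k)}$ of frozen-time generators, prove norm convergence by a Duhamel estimate, and then invoke the Lindblad--Kossakowski (Christensen--Evans) result that each such exponential is completely positive, so that $\gamma_{t,s}$ inherits complete positivity as a norm limit of compositions of CP maps. The paper never exponentiates a frozen generator: it approximates $\gamma_{t,0}$ by the Euler product $\prod_{k=n}^{1}\bigl(\id+\frac{t}{n}\cL(\frac{kt}{n})\bigr)$ (Lemma \ref{lem:Euler_app}), whose individual factors are \emph{not} positive, and instead proves a quantitative inductive lower bound showing that each factor maps a positive element to something bounded below by $-O((t/n)^2)\Vert A\Vert^2$ and that the accumulated negativity of the full product is $O(t^2/n)$, hence vanishes in the limit; complete positivity is then obtained by running the same argument for the generator amplified to $\cA\otimes\cB(\C^m)$. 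Your route buys economy (the time-dependent analysis is a routine Trotter estimate, the hard positivity statement is outsourced to a known time-independent theorem, or to your sketched Euler argument per fixed $t$, which is essentially the paper's estimate specialized to a constant generator); the paper's route buys self-containedness and explicit quantitative control, and avoids having to establish complete positivity of each $e^{\tau\cL(t)}$ separately. One point deserves care in your write-up: your claim that hypotheses (i)--(iii) automatically yield the same dissipativity inequality for $\cL(t)\otimes\id_{M_m}$ is not a formal consequence of (iii) alone -- (iii) is dissipativity, while what Lindblad's theorem (and your amplification step) needs is \emph{complete} dissipativity, i.e.\ the matrix of dissipation terms $\bigl[\cL(A_i^*A_j)-\cL(A_i^*)A_j-A_i^*\cL(A_j)\bigr]_{i,j}$ must be positive. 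The paper makes the same tacit move in its final paragraph, and for the concrete generators of interest this is verified there by the identity expressing the dissipation as $\sum_{Z}\sum_a [A,L_a(t,Z)]^*[A,L_a(t,Z)]\geq 0$, which amplifies verbatim; you should either assume the amplified form of (iii) or check it for the Lindblad-form generators in the same way.
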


It is straightforward to check that the $\cL_\Lambda(t)$ defined in \eq{Def_L}
satisfy properties (i) and (ii). Property (iii), which is called {\em complete dissipativity},
follows immediately from the observation
$$
\mathcal{L}_{\Lambda}(t)(A^* A)-\mathcal{L}_{\Lambda}(t)(A^*)A-A^*\mathcal{L}_{\Lambda}(t)(A)
=\sum_{Z\subset\Lambda}\sum_{a=1}^{N(Z)} [A, L_a(t,Z)]^*[A, L_a(t,Z)]\, \geq 0 \, .
$$
Therefore,  using this result, we conclude that, under Assumption \ref{assumption1}, for all
finite $\Lambda\subset\Gamma$, the maps $\gamma_{t,s}^\Lambda$, $0\leq s\leq t$, form
a norm-continuous cocycle of completely positive and unit preserving maps.

Before we begin the proof we make the simplification of some notations.

Let $\cL(t)$, $t\geq 0$, denote a family of operators on a $C^*$-algebra $\cA$ satisfying
the assumptions of Theorem \ref{thm:finite} and for $0\leq s\leq t$ consider the maps
$\cA\ni A \mapsto \gamma_{t,s}(A)$ defined by the solutions of (\ref{ode}) with initial condition
$A$ at $t=s$. Without loss of generality we can assume $s=0$ in the proof of the theorem because,
if we denote $\tilde{\cL}(t)=\cL(t+s)$, then $\gamma_{t,s}=\tilde{\gamma}_{t-s,0}$, where $
\tilde{\gamma}_{t,0}$ is the maps determined by the generators $\tilde{\cL}(t)$.

The maps $\gamma_{t,s}$ satisfy the equation
\begin{equation}\label{sol_gamma}
\gamma_{t,s}=\id+\int_s^t \cL(\tau)\gamma_{\tau, s}d\tau.
\end{equation}
In our proof of the complete positivity of $\gamma_{t,0}$ we will use an expression
for $\gamma_{t,0}$ as the limit of an Euler product, i.e. approximations $T_n(t)$ defined by
\begin{equation}\label{Tn}
T_n(t)= \prod_{k=n}^1\left(\id+\frac{t}{n}\mathcal{L}(\frac{kt}{n})\right)\, .
\end{equation}
The product is taken in the order so that the factor with $k=1$ is on the right.

 \begin{lemma}\label{lem:Euler_app}
Let $\cL(t)$, $t\geq 0$, denote a family of operators on a $C^*$-algebra $\cA$ satisfying
the assumptions of Theorem \ref{thm:finite}. Then, uniformly for all $t\in [0,T]$,
 \begin{equation*}
\lim_{n\to\infty} \Vert T_n(t) - \gamma_{t,0}\Vert = 0 \,,
 \end{equation*}
 where $T_n(t)$ is defined by \eq{Tn}.
 \end{lemma}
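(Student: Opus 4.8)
The plan is to compare the Euler product $T_n(t)$ with the exact cocycle $\gamma_{t,0}$ by a standard telescoping argument, using the fact that both are built from the bounded generator $\cL(t)$ with $\|\cL(t)\|\le M_T$ on $[0,T]$. First I would record the elementary uniform bounds: since each factor $\id+\tfrac{t}{n}\cL(kt/n)$ has norm at most $1+\tfrac{t}{n}M_T$, the partial products $\prod_{k=m}^1(\id+\tfrac{t}{n}\cL(kt/n))$ are bounded by $(1+\tfrac{T}{n}M_T)^n\le e^{TM_T}$ for all $n$ and all $m\le n$; similarly $\|\gamma_{\tau,s}\|\le e^{M_T(\tau-s)}\le e^{TM_T}$ from the integral equation \eq{sol_gamma} and Gr\"onwall. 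These give a uniform a priori bound $K:=e^{TM_T}$ that will control all the error terms.

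The main step is a telescoping identity. Writing $t_k=kt/n$ and $B_k=\id+\tfrac{t}{n}\cL(t_k)$, and letting $\Pi_{k}$ denote the product $B_n\cdots B_{k+1}$ (the ``upper'' part), I would insert the exact propagator between consecutive nodes and write
\begin{equation*}
T_n(t)-\gamma_{t,0}=\sum_{k=1}^{n}\Pi_{k}\,\bigl(B_k-\gamma_{t_k,t_{k-1}}\bigr)\,\gamma_{t_{k-1},0}\, ,
\end{equation*}
which is valid because $\gamma_{t_n,t_{n-1}}\cdots\gamma_{t_1,t_0}=\gamma_{t,0}$ by the cocycle property and $\Pi_n\cdots$ telescopes. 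Taking norms and using $\|\Pi_k\|\le K$, $\|\gamma_{t_{k-1},0}\|\le K$ yields
\begin{equation*}
\|T_n(t)-\gamma_{t,0}\|\le K^2\sum_{k=1}^{n}\bigl\|B_k-\gamma_{t_k,t_{k-1}}\bigr\|\, .
\end{equation*}
So everything reduces to estimating the one-step error $\|B_k-\gamma_{t_k,t_{k-1}}\|$, i.e.\ how well the single Euler step $\id+\tfrac{t}{n}\cL(t_k)$ approximates the exact solution operator over an interval of length $\tfrac{t}{n}$.

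For the one-step error I would expand $\gamma_{t_k,t_{k-1}}$ twice using \eq{sol_gamma}: $\gamma_{t_k,t_{k-1}}=\id+\int_{t_{k-1}}^{t_k}\cL(\tau)\,d\tau+\int_{t_{k-1}}^{t_k}\cL(\tau)\bigl(\gamma_{\tau,t_{k-1}}-\id\bigr)d\tau$. The last term is $O((t/n)^2)$ in norm (bounded by $M_T^2 K (t/n)^2$), and $\int_{t_{k-1}}^{t_k}\cL(\tau)\,d\tau-\tfrac{t}{n}\cL(t_k)=\int_{t_{k-1}}^{t_k}(\cL(\tau)-\cL(t_k))\,d\tau$ is bounded by $\tfrac{t}{n}\,\omega(t/n)$, where $\omega$ is the modulus of continuity of $\tau\mapsto\cL(\tau)$ on $[0,T]$ — finite and tending to $0$ as $t/n\to0$ by the norm-continuity hypothesis (uniform continuity on the compact interval). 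Summing over $k$ gives
\begin{equation*}
\|T_n(t)-\gamma_{t,0}\|\le K^2\Bigl(T\,\omega(T/n)+M_T^2 K\,\frac{T^2}{n}\Bigr)\xrightarrow[n\to\infty]{}0\, ,
\end{equation*}
uniformly in $t\in[0,T]$ since every bound was taken with $t$ replaced by its maximum $T$. The only genuine subtlety — the part I expect to be the main obstacle — is getting the $\omega(T/n)$ term: it is where the hypothesis of \emph{norm}-continuity of $\cL(\cdot)$ (as opposed to mere strong continuity) is essential, and one must be careful that the modulus of continuity is uniform over the whole interval, which is exactly what compactness of $[0,T]$ provides. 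The rest is bookkeeping with the uniform bound $K$.
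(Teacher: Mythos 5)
Your proposal is correct and follows essentially the same route as the paper: the same telescoping decomposition of $T_n(t)-\gamma_{t,0}$ via the cocycle property, the same uniform bounds $(1+\tfrac{t}{n}M_T)^n\le e^{TM_T}$ on the partial products and propagators, and the same one-step estimate combining the integral equation \eq{sol_gamma} with the uniform norm-continuity of $\cL(\cdot)$ on $[0,T]$. No gaps.
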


 \begin{proof}
 From the cocycle property established in Section \ref{sec:Thermo_setup}, we have
 \begin{equation*}
\gamma_{t,0}=\prod_{k=n}^1 \gamma_{t\frac{k}{n}, t\frac{k-1}{n}}.
 \end{equation*}
 Now, consider the difference
 \begin{align*}
 T_n(t)-\gamma_{t,0}&=\prod_{k=n}^1\left(\id+\textstyle\frac{t}{n}\mathcal{L}(\textstyle\frac{kt}{n})\right)
 -\prod_{k=n}^1 \gamma_{t\frac{k}{n}, t\frac{k-1}{n}}\\
 &=\sum_{j=1}^n\left[\prod_{k=n}^{j+1}\left(\id+\textstyle\frac{t}{n}\mathcal{L}(t\textstyle\frac{k-1}{n})\right)\right]
 \left[\left(\id+\frac{t}{n}\mathcal{L}(t\textstyle\frac{j-1}{n})\right)-
 \gamma_{t\frac{j}{n}, t\frac{j-1}{n}}\right] \gamma_{t\frac{j-1}{n},0}.
 \end{align*}
  To estimate the norm of this difference we look at each factor separately.

Using the boundedness of $\cL(t)$ and the fact that $M_t$, defined in \eq{Mt},
is increasing in $t$, the norm of the first factor is bounded from above by
$$
\|\prod_{k=n}^{j+1}(\id+\frac{t}{n}\mathcal{L}(t\frac{k-1}{n}))\|\leq \prod_{k=n}^1(1+
\frac{t}{n}\|\mathcal{L}(t\frac{k-1}{n})\|)\leq (1+\frac{t}{n}M_t)^n.
$$

To bound the second factor notice that from (\ref{sol_gamma}) we obtain
\begin{equation*}
\|\gamma_{t,s}\|\leq 1+\int_s^t\|\cL(\tau)\|\|\gamma_{\tau,s}\|d\tau.
\end{equation*}

Then by Gronwall inequality \cite[Theorem 2.25]{hunter} we have the following bound for
the norm of the $\gamma_{t,s}$:
\begin{equation*}
\|\gamma_{t,s}\|\leq e^{\int_s^t\|\cL(\tau)\|d\tau}\leq e^{M_t(t-s)}.
\end{equation*}

Using again (\ref{sol_gamma}) we can rewrite the second factor as follows:
 \begin{align*}
 &\left(\id+\frac{t}{n}\mathcal{L}(t\frac{j-1}{n})\right)-\gamma_{t\frac{j}{n}, t\frac{j-1}{n}}=
 \frac{t}{n}\mathcal{L}(t\frac{j-1}{n})-(\gamma_{t\frac{j}{n}, t\frac{j-1}{n}}-\id)\\
& = \int_{t\frac{j-1}{n}}^{t\frac{j}{n}}
 \left (\cL(t\frac{j-1}{n})-\cL(s)\gamma_{s, t\frac{j-1}{n}}\right)ds\\
 &=\int_{t\frac{j-1}{n}}^{t\frac{j}{n}}\left[\left(\cL(t\frac{j-1}{n})-\cL(s)\right)-\cL(s)
 (\gamma_{s, t\frac{j-1}{n}}-\id)\right]ds\\
 &=\int_{t\frac{j-1}{n}}^{t\frac{j}{n}}\left(\cL(t\frac{j-1}{n})-\cL(s)\right) ds-\int_{t\frac{j-1}{n}}^{t\frac{j}{n}}\cL(s)
 \int_{t\frac{j-1}{n}}^s\cL(\tau)\gamma_{\tau, t\frac{j-1}{n}}d\tau ds.
 \end{align*}

Therefore, the second factor is bounded from above by
\begin{align*}
\|\left(\id+\frac{t}{n}\mathcal{L}(t\frac{j-1}{n})\right)-\gamma_{t\frac{j}{n}, t\frac{j-1}{n}}\|
&\leq\frac{t}{n}\epsilon_n+M_t^2\int_{t\frac{j-1}{n}}^{t\frac{j}{n}}\int_{t\frac{j-1}{n}}^s e^{(\tau-t\frac{j-1}{n})M_t}
d\tau ds\\
&\leq\frac{t}{n}\epsilon_n+M_t^2e^{\frac{t}{n}M_t}\int_{t\frac{j-1}{n}}^{t\frac{j}{n}} (s-t\frac{j-1}{n})ds\\&=\frac{t}{n}
\left(\epsilon_n+M_t^2 e^{\frac{t}{n}M_t}\frac{t}{2n}\right),
\end{align*}
where $\epsilon_n\to 0$ as $t/n\to 0$ due
to the uniform continuity of $\cL(t)$ on the interval $[0,t]$.

The third factor can be estimated in a similar way:
 \begin{align*}
 \|\gamma_{t\frac{j-1}{n},0}\|&=\prod_{k=j-1}^1\|\gamma_{t\frac{k}{n},t\frac{k-1}{n}}\|=\prod_{k=j-1}^1\|1+\frac{t}{n}
 \mathcal{L}(s_k(\frac{t}{n})) \|\\
 &\leq\prod_{k=j-1}^1\left(1+\frac{t}{n}\|\mathcal{L}(s_k(\frac{t}{n}))\|\right)\\
 &\leq (1+\frac{t}{n}M_t)^n.
 \end{align*}

Therefore, combining all these estimates we obtain
 \begin{align*}
\|T_n(t)-\gamma_{t,0}\|&\leq n (1+\frac{t}{n}M_t)^n \frac{t}{n}\left(\epsilon_n+M_t^2 e^{\frac{t}{n}M_t}\frac{t}{2n}\right)
 (1+\frac{t}{n}M_t)^n\\
&\leq t e^{2t M_t} \left(\epsilon_n+M_t^2 e^{\frac{t}{n}M_t}\frac{t}{2n}\right).
 \end{align*}
 This bound vanishes as $n\rightarrow\infty$.
 \end{proof}

To prove Theorem \ref{thm:finite} we use the Euler-type approximation established in
Lemma \ref{lem:Euler_app}. We show that the action of $T_n(t)$ on a positive operator is bounded from below and this bound vanishes as $n$ goes to $\infty$.
\medskip

{\sc Proof of Theorem \ref{thm:finite}:}
First, we look at the each term in the Euler approximation $T_n(t)$ separately.
For any $t$ and $s$ the complete dissipativity property (iii)  of $\mathcal{L}(s)$, assumed
in the statement of the theorem, implies
\begin{align*}
0&\leq (\id+t\mathcal{L}(s))(A^*)(\id+t\mathcal{L}(s))(A)=(A^*+t\mathcal{L}(s)(A^*))(A+t\mathcal{L}(s)(A))\\
& =A^*A+tA^*\mathcal{L}(s)(A)+t\mathcal{L}(s)(A^*)A+t^2\mathcal{L}(s)(A^*)\mathcal{L}(s)(A)\\
& \leq A^*A+t\mathcal{L}(s)(A^*A)+t^2\mathcal{L}(s)(A^*)\mathcal{L}(s)(A).
\end{align*}
Since $(\mathcal{L}(s)(A))^*(\mathcal{L}(s)(A))\leq \|\mathcal{L}(s)\|^2\|A\|,$ one gets
\begin{align}\label{norm_1}
0&\leq (\id+t\mathcal{L}(s))(A^*A)+t^2\|\mathcal{L}(s)\|^2\|A\|^2\\
& \leq(\id+t\mathcal{L}(s))(A^*A)+t^2M_s^2\|A\|^2.
\end{align}

Let us apply the above inequality to the operator$B $, where $B^*B:=\|A\|^2-A^*A$. Note that $\|B^*B\|\leq\|A\|^2$,
so $\|B\|\leq\|A\|$.
\begin{align}\label{norm_2}
0&\leq (\id+t\mathcal{L}(s))(\|A\|^2-A^*A)+t^2M_s^2\|A\|^2\\
& =\|A\|^2-(\id+t\mathcal{L}(s))(A^*A)+t^2M_s^2\|A\|^2
\end{align}

{From} the (\ref{norm_1}) and (\ref{norm_2}) we obtain
\begin{equation}\label{2}
-t^2M_s^2\|A\|^2\leq (\id+t\mathcal{L}(s))(A^*A)\leq (1+t^2M_s^2)\|A\|^2
\end{equation}
and therefore:
\begin{equation*}
-(1+t^2M_s^2)\|A\|^2\leq (\id+t\mathcal{L}(s))(A^*A) \leq (1+t^2M_s^2)\|A\|^2.
\end{equation*}
So we get
\begin{equation}\label{3}
\|(\id+t\mathcal{L}(s))(A^*A)\|\leq (1+t^2M_s^2)\|A\|^2 \ .
\end{equation}

Now, in order to bound the approximation $T_n(t)$ we first derive the following auxiliary estimate.
For any fixed $n \geq 1$ we have:
\begin{equation}\label{bound}
\prod_{k=n}^1(\id+s\mathcal{L}(ks))(A^*A)\geq -s^2\|A\|^2M_{ns}^2(1+\frac{1}{n-1})^{n-1}\sum_{k=0}^{n-1} D(s)^k,
\end{equation}
where the value of $s$ is chosen to be such that
\begin{equation}\label{assumption_s}
D(s):=1+s^2M_{ns}^2 < {(1+\frac{1}{n-1})^{n-1}}/{(1+\frac{1}{n-2})^{n-2}},
\end{equation}
with the convention that $(1+\frac{1}{n-1})^{n-1} =1$, for $n=1$.

We prove this claim by induction. The statement holds for $n=1$ by (\ref{norm_2}).
Now, assume that \eq{bound} holds for $n-1$. Then
\begin{equation*}
\prod_{k=n-1}^{1}(\id+s\mathcal{L}(ks))(A^*A) +s^2\|A\|^2M_{(n-1)s}^2(1+\frac{1}{n-2})^{n-2}\sum_{k=0}^{n-2} D(s)^k\geq 0
\end{equation*}
Since the left-hand side is a positive operator, we can write it as $B^*B$. Then,
\begin{align*}
\prod_{k=n}^1(\id+s\mathcal{L}(ks))(A^*A)&=(\id+s\mathcal{L}(ns))(B^*B)\\&-s^2\|A\|^2M_{(n-1)s}^2(1+\frac{1}{n-2})^{n-2}
\sum_{k=0}^{n-2} D(s)^k\\
&\geq -s^2M_{ns}^2\|B^*B\|-s^2\|A\|^2M_{ns}^2(1+\frac{1}{n-2})^{n-2}\sum_{k=0}^{n-2} D(s)^k\, .
\end{align*}
Here, we used (\ref{norm_2}) and the fact that $M_t$ is monotone increasing.
This gives the following upper bound for $\Vert B^*B\Vert$:
\begin{align*}
\|B^*B\|&\leq \prod_{k=n-1}^{1} \|(\id+s\mathcal{L}(ks))(A^*A)\|+s^2\|A\|^2M_{(n-1)s}^2(1+\frac{1}{n-2})^{n-2}
\sum_{k=0}^{n-2}D(s)^k\\
&\leq\prod_{k=n-1}^{1}(1+s^2M_{ks}^2)\|A\|^2+s^2\|A\|^2M_{ns}^2(1+\frac{1}{n-2})^{n-2}\sum_{k=0}^{n-2}D(s)^k\\
& \leq\prod_{k=n-1}^{1}(1+s^2M_{ns}^2)\|A\|^2+s^2\|A\|^2M_{ns}^2(1+\frac{1}{n-2})^{n-2}\sum_{k=0}^{n-2}D(s)^k\\
&= \|A\|^2D(s)^{n-1}+s^2\|A\|^2M_{ns}^2(1+\frac{1}{n-2})^{n-2}\sum_{k=0}^{n-2}D(s)^k
\end{align*}
Therefore we obtain
\begin{align*}
&\prod_{k=n-1}^1(1+s\mathcal{L}(ks))(A^*A)\\
& \geq-s^2M_{ns}^2\|A\|^2D(s)^{n-1}-s^2M_{ns}^2(s^2M_{ns}^2+1)(1+\frac{1}{n-2})^{n-2}\|A\|^2\sum_{k=0}^{n-2}D(s)^k\\
&\geq-s^2M_{ns}^2\|A\|^2(1+\frac{1}{n-1})^{n-1}D(s)^{n-1}-s^2M_{ns}^2
(1+\frac{1}{n-1})^{n-1}\|A\|^2\sum_{k=0}^{n-2}D(s)^k\\&\geq-s^2M_{ns}^2
(1+\frac{1}{n-1})^{n-1}\|A\|^2\sum_{k=0}^{n-1}D(s)^k \ ,
\end{align*}
where to pass to the second inequality we use our assumption on $s$ \eq{assumption_s}.
This completes the proof of the bound (\ref{bound}).

To finish the proof of the theorem we use Lemma \ref{lem:Euler_app} to approximate the propagator and put
$s=\frac{t}{n}$ in the bound (\ref{bound}), which yields
\begin{equation}\label{last_eq}
\prod_{k=n}^1(1+\frac{t}{n}\mathcal{L}(\frac{kt}{n}))(A^*A)\geq -\frac{t^2}{n^2}\|A\|^2M_t^2(1+\frac{1}{n-1})^{n-1}
\sum_{k=0}^{n-1} D(\frac{t}{n})^k.
\end{equation}
Since $D(\frac{t}{n})^n=(1+\frac{t^2}{n^2}M_t^2)^n \rightarrow 1$  as $n\rightarrow \infty$,
we get the estimate $D(\frac{t}{n})^k\leq 2$ for $1\leq k \leq n$. The right hand side of (\ref{last_eq})
is bounded from below by $-\frac{t^2}{n^2}\|A\|^2 e \, M_t^2 2n$, which vanishes in the limit $n\rightarrow \infty$.

To show the complete positivity of $\gamma_{t,0}$ note that any generator
$\mathcal{L}_\Lambda(t)$ satisfying the assumptions of the theorem
can be considered as the generator for a dynamics on $\cA\otimes \cB(\Cx^n)$, for any
$n\geq 1$, which satisfies the same properties, and which generates $\gamma_{t,s}\otimes\id$
acting on  $\cA\otimes \cB(\Cx^n)$. By the arguments given above,
these maps are positive for all $n$. Hence, the $\gamma_{t,s}$ are completely positive.
\boxendproof

\section{Lieb-Robinson bounds for a class of irreversible dynamics}
\label{sec:Thermo_LR}

For $X\subset\Lambda$, let $\cB_X$ denote the
subspace of $\cB(\cA_X)$ consisting of all completely bounded
linear maps that vanish on $\idty$. 

It is important for us that  all operators of the form
$$
\mathcal{K}_X(B):= i [A, B]+\sum_{a=1}^{N} (L^*_a B L_{a}-\frac{1}{2}\{L_{a}^*L_{a}, B\})\,,
$$
where $A, L_a\in \cA_X$, belong to $\cB_X$, with
$$
\Vert \cK_X\Vert_{\rm cb}\leq 2\Vert A\Vert +2\sum_{a=1}^N \Vert L_a\Vert^2.
$$
In particular,
operators of the form $[A,\cdot]$ appearing in the standard Lieb-Robinson
bound (\ref{LR_rev}) are a special case of this general form. 

We can regard
$\cK_X$ as a linear transformation on $\cA_Z$, for all $Z$ such that $X\subset Z$,
by tensoring it with $\id_{\cA_{Z\setminus X}}$, and all these maps will be bounded
with norm less then $\Vert \cK_X\Vert_{\rm cb}$.

\begin{theorem}\label{thm:lrbounds}
Suppose Assumption \ref{assumption1} holds. Then  the maps $\gamma^\Lambda_{t,s}$
satisfy the following bound. For $X,Y\subset\Lambda$, and any operators $\cK\in\cB_X$ and
$B\in\cA_Y$ we have that
\begin{equation*}
\|\cK \gamma_{t,s}^\Lambda(B)\|\leq \frac{\| \cK \|_{\rm cb}\,\|B\|}{C_\mu} e^{\|\Psi\|_{t,\mu}
C_\mu |t-s|}\sum_{x\in X\subset\Lambda}\sum_{y\in Y\subset\Lambda}F(d(x,y))\,.
\end{equation*}
\end{theorem}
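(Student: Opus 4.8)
The plan is to adapt the proof of the reversible Lieb-Robinson bound (\ref{LR_rev}) from \cite{nachtergaele:2009a}, the one structural change being that the \emph{unitarity} of the ``free'' evolution used there is replaced by the \emph{norm--contractivity} of the completely positive, unit preserving cocycle generated by the part of $\cL_\Lambda$ that acts away from $X$ — which is exactly what Theorem \ref{thm:finite} provides. Write $B(t):=\gamma^\Lambda_{t,s}(B)$, and for $W\subset\Lambda$ put
$$
C_B(W,t):=\sup\bigl\{\,\|\cK'B(t)\|\ :\ \cK'\in\cB_W,\ \|\cK'\|_{\rm cb}\le 1\,\bigr\}.
$$
Since $\|\cK\,\gamma^\Lambda_{t,s}(B)\|\le\|\cK\|_{\rm cb}\,C_B(X,t)$, it suffices to bound $C_B(X,t)$. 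We take $X\cap Y=\emptyset$ (as in the reversible statement). Then $\cK'B(s)=\cK'(B)=0$ for every $\cK'\in\cB_X$, because $B$ acts on the tensor factor $\cA_X$ as a multiple of the identity and $\cK'$ kills the identity; and since $\gamma^\Lambda_{t,s}$ is completely positive and unital, $\|B(t)\|\le\|B\|$, so $C_B(W,t)\le\|B\|$ for every $W$.

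First I would set up a Duhamel recursion. Split $\cL_\Lambda(t)=\cL^{X}(t)+\cL^{X^c}(t)$ with $\cL^{X}(t)=\sum_{Z\cap X\neq\emptyset}\Psi_Z(t)$ and $\cL^{X^c}(t)=\sum_{Z\cap X=\emptyset}\Psi_Z(t)$. The remainder $\cL^{X^c}(t)$ is again of the Lindblad form (\ref{psiz}), so it satisfies hypotheses (i)--(iii) of Theorem \ref{thm:finite} and hence generates a norm--continuous cocycle $\beta_{t,s}$ of completely positive, unit preserving, hence norm--contractive, maps on $\cA_\Lambda$. Because $\cK'$ is supported in $X$ while every $\Psi_Z(t)$ appearing in $\cL^{X^c}(t)$ is supported in the complement of $X$, the superoperators $\cK'$ and $\cL^{X^c}(t)$ commute, so $Q(t):=\cK'B(t)$ solves
$$
\frac{d}{dt}Q(t)=\cL^{X^c}(t)\,Q(t)+\cK'\bigl(\cL^{X}(t)\,B(t)\bigr),\qquad Q(s)=0 .
$$
Variation of parameters together with $\|\beta_{t,r}\|\le 1$ gives $\|Q(t)\|\le\int_s^t\sum_{Z\cap X\neq\emptyset}\|\cK'(\Psi_Z(r)B(r))\|\,dr$. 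Now peel off $\cK'$ by $\|\cK'(\cdot)\|\le\|\cK'\|_{\rm cb}\|\cdot\|\le\|\cdot\|$, and then use that $\Psi_Z(r)\in\cB_Z$ to get $\|\Psi_Z(r)B(r)\|\le\|\Psi_Z(r)\|_{\rm cb}\,C_B(Z,r)$. Taking the supremum over $\cK'$ with $\|\cK'\|_{\rm cb}\le1$ yields the closed integral inequality
$$
C_B(X,t)\ \le\ \int_s^t\ \sum_{Z\subset\Lambda,\ Z\cap X\neq\emptyset}\|\Psi_Z(r)\|_{\rm cb}\,C_B(Z,r)\,dr .
$$

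Next I would iterate this inequality, terminating a branch as soon as the current set $Z_k$ meets $Y$ (using $C_B(Z_k,r)\le\|B\|$ there); the arbitrarily deep unterminated branches form a tail dominated by $\|B\|\,|X|\,(|t-s|\,\|\Psi\|_{t,\mu}\|F_\mu\|)^n/n!\to 0$. This gives
$$
C_B(X,t)\ \le\ \|B\|\sum_{n\ge 1}\frac{|t-s|^n}{n!}\,S_n,\qquad
S_n=\sup_{r_i\in[0,t]}\ \sum_{\substack{Z_1\cap X\neq\emptyset\\ Z_{i+1}\cap Z_i\neq\emptyset,\ 1\le i\le n-1\\ Z_n\cap Y\neq\emptyset}}\ \prod_{i=1}^{n}\|\Psi_{Z_i}(r_i)\|_{\rm cb},
$$
the time integral over the simplex $s\le r_n\le\dots\le r_1\le t$ producing $|t-s|^n/n!$. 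The sum $S_n$ is estimated as in the reversible case \cite{nachtergaele:2009a}: using (\ref{Decay_L}) in the form $\sum_{Z\ni x,y}\|\Psi_Z(r)\|_{\rm cb}\le\|\Psi\|_{t,\mu}F_\mu(d(x,y))$ for every $r\in[0,t]$, one bounds the innermost sum over $Z_n$ (which contains a point of $Y$) and then peels off $Z_{n-1},\dots,Z_1$ one at a time, invoking the convolution bound (\ref{F_ii}) for $F_\mu$ at each step; each step costs a factor $\|\Psi\|_{t,\mu}C_\mu$ while transporting the spatial weight, so that $S_n\le\|\Psi\|_{t,\mu}^{n}\,C_\mu^{n-1}\sum_{x\in X}\sum_{y\in Y}F_\mu(d(x,y))$. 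Summing the series gives $C_B(X,t)\le\frac{\|B\|}{C_\mu}\bigl(e^{\|\Psi\|_{t,\mu}C_\mu|t-s|}-1\bigr)\sum_{x\in X}\sum_{y\in Y}F_\mu(d(x,y))$, and then $e^{a}-1\le e^{a}$, $F_\mu(d)=e^{-\mu d}F(d)\le F(d)$, together with the outer factor $\|\cK\|_{\rm cb}$, give the claimed inequality.

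The step I expect to be the main obstacle is the combinatorial estimate for $S_n$: the sets $Z_i$ in a connecting chain from $X$ to $Y$ have unbounded size, and one must carefully organise the bookkeeping of ``which site of $Z_i$ the next set $Z_{i+1}$ attaches to'' so that the iterated sum collapses to a single convolution weight $\sum_{x\in X}\sum_{y\in Y}F_\mu(d(x,y))$ times $(\|\Psi\|_{t,\mu}C_\mu)^{n-1}\|\Psi\|_{t,\mu}$. This is the familiar technical core of Lieb-Robinson arguments, and it goes through here verbatim precisely because of the feature engineered above — obtained by peeling off $\cK'$ instead of carrying it along — that the set in the recursion hops from $X$ to $Z_1$ to $Z_2\dots$ rather than accumulating (which is also what makes the deep tail summable). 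Everything else — the splitting of $\cL_\Lambda$, the interaction picture, the telescoping and the geometric summation — is routine once Theorem \ref{thm:finite} supplies a contractive ``free'' dynamics in the non-Hamiltonian setting.
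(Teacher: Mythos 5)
Your proposal follows essentially the same route as the paper's proof: the same splitting of $\cL_\Lambda$ into the part meeting $X$ and the part supported in $X^c$, the same use of the norm-contractive cocycle generated by $\cL_{X^c}$ (via Theorem \ref{thm:finite}) together with the commutation $[\cK,\cL_{X^c}(t)]=0$ in a Duhamel formula, the same quantity $C_B(\cdot,t)$ normalized by the cb-norm, and the same iteration yielding $a_n\leq\|\Psi\|_{t,\mu}^nC_\mu^{n-1}\sum_{x\in X}\sum_{y\in Y}F_\mu(d(x,y))$ before summing the exponential series. The argument is correct and matches the paper's proof in all essentials.
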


The proof of the Lieb-Robinson bounds for $\gamma^{\Lambda}_{t,s}$ is based on
a generalization of the strategy \cite{nachtergaele:2006} for reversible dynamics, and
on \cite{poulin:2010} for irreversible dynamics with time-independent generators.
This allows us to cover the case of irreversible dynamics with time-dependent generators.
\medskip

\noindent
{\sc Proof of Theorem \ref{thm:lrbounds}:}
Consider the function $f:[s,\infty)\rightarrow \cA$ defined by
\begin{equation*}
f(t)=\cK\gamma_{t,s}^\Lambda(B),
\end{equation*}
where $\cK\in\cB_X$ and $B\in\cA_Y$, as in the statement of the theorem.
For $X\subset\Lambda$, let $X^c=\Lambda\setminus X$ and
define $\cL_{X^c}$ and $\bar{\cL}_{X}$ by
\begin{eqnarray*}
\cL_{X^c}(t)&=&\sum_{Z, Z\cap X= \emptyset}\mathcal{L}_Z(t)\\
\bar{\cL}_X(t)&=&\cL_X(t)-\cL_{X^c}(t).
\end{eqnarray*}
Clearly, $[\cK,\cL_{X^c}(t)]=0$. Using this property, we easily derive
the following expression for the derivative of $f$:
\begin{align*}
f'(t)&=\cK\mathcal{L}(t)\gamma_{t,s}^\Lambda(B) \\
& = \mathcal{L}_{X^c}(t)\cK\gamma_{t,s}^\Lambda(B)+\cK\bar{\cL}_{X}(t)\gamma_{t,s}^\Lambda(B) \\
& = \mathcal{L}_{X^c}(t)f(t)+\mathcal{K}\bar{\cL}_{X}(t)\gamma_{t,s}^\Lambda(B)\ ,
\end{align*}
Let $\gamma^{X^c}_{t.s}$ be the cocycle generated by $\cL_{X^c}(t)$. Then,
using the expression for $f'(t)$ we find
\begin{equation*}
f(t)= \gamma^{X^c}_{t,s}f(s)+\int_s^t\gamma_{t,r}^{X^c}\cK\bar{\cL}_{X}(r)
\gamma_{r,s}^\Lambda(B)dr \ .
\end{equation*}
Since $\gamma^{X^c}_{t,s}$ is norm-contracting and
$\Vert \cK\Vert_{\rm cb}$ is an upper bound for the
$\Vert \cK\Vert$ regarded as an operator on $\cA_\Lambda$, for all
$\Lambda$, we obtain
\begin{equation}\label{f_norm_eneq}
\|f(t)\|\leq \|f(s)\|+\|\cK\|_{\rm cb}\int_s^t \|\bar{\cL}_{X}(r)\gamma_{r,s}^\Lambda(B)\|dr.
\end{equation}
Let us define the quantity
\begin{equation*}
C_B(X,t):= \sup_{\mathcal{T}\in\cB_X} \frac{\|\mathcal{T}
\gamma_{t,s}^\Lambda(B)\|}{\|\mathcal{T}\|_{\rm cb}}.
\end{equation*}
Note that we use the norm $\|\mathcal{T}\|_{\rm cb}$,
because, as mentioned before and in contrast to the usual operator norm,
 it is independent of $\Lambda$.
Then, we have the following obvious estimate:
\begin{equation*}
C_B(X,s)\leq \|B\|\delta_Y(X),
\end{equation*}
where $\delta_Y(X)=0$ if $X\cap Y=\emptyset$ and $\delta_Y(X)=1$ otherwise.
{From} the definition of the space
$\cB_X$ we get that $\mathcal{T}(B)=0$, when $\mathcal{T}\in\cB_X$, since
$B$ has a support in
$Y$ and $Y\cap X=\emptyset.$ \\
Therefore  (\ref{f_norm_eneq}) implies that
\begin{equation*}
C_B(X,t)\leq C_B(X,s)+\sum_{Z\cap X \neq \emptyset} \int_s^t\|\mathcal{L}_Z(r)\| C_B(Z,r)dr.
\end{equation*}
Iterating this inequality we find the estimate:
\begin{equation*}
C_B(X,t)\leq \|B\| \sum_{n=0}^{\infty} \frac{(t-s)^n}{n!} \, a_n \ ,
\end{equation*}
where:
\begin{equation*}
a_n\leq \|\Psi\|_{t,\mu}^nC_\mu^{n-1}\sum_{x\in X}\sum_{y\in Y}F (d(x,y)),
\end{equation*}
for $n\geq 1$ and $a_0=1$, (recall that $C_\mu$ is a constant, that appears in a definition of $F_\mu$).
The following bound immediately follows from this estimate:
\begin{equation*}
\|\cK\gamma_{t,s}^\Lambda(B)\|\leq \frac{\|\cK\|_{\rm cb}\|B\|}{C_\mu} e^{\|\Psi\|_{t,\mu}
C_\mu (t-s)}\sum_{x\in X\subset\Lambda}\sum_{y\in Y\subset\Lambda}F(d(x,y)).
\end{equation*}
If we take an exponentially decaying function $F$, defined in (\ref{F_mu}) as $F_\mu(d)=e^{-\mu d}F(d)$ we can rewrite the above bound as follows
\begin{equation}\label{LR_min}
\|\cK\gamma_{t,s}^\Lambda(B)\|\leq \frac{\|\cK\|_{\rm cb}\|B\|}{C_\mu} \|F\|\min(|X|, |Y|)
e^{-\mu(d(X,Y)-\frac{\|\Psi\|_{t,\mu} C_\mu}{\mu} (t-s))}.
\end{equation}
So the Lieb-Robinson velocity of the propagation for every $t\in\mathbf{R}$ is
\begin{equation*}
v_{t,\mu}:=\frac{\|\Psi\|_{t,\mu} C_\mu}{\mu}.
\end{equation*}
\boxendproof

Note that the Lieb-Robinson bound (\ref{LR_min}) depends only on the smallest of the supports of the
two observables. Therefore one can get a non-trivial bound even when one of the observables
has finite support but the support of the other is of infinite size (e.g., say half the system).

The bound in the LIbe-Robinson bound is \textit{uniform} in $\Lambda$. This is important
for the proof of existence of the thermodynamic limit of the dynamics, which
is one of the main applications of Lieb-Robinson bounds, which will be presented in the next section.

\section{Existence of the thermodynamic limit}
\label{sec:Thermo_Thermo}

The setup for the analysis of the thermodynamic limit can be formulated as follows.
Let $\Gamma$ be
an infinite set such as, e.g., the hypercubic lattice $\Ir^\nu$. We prove the existence
of the thermodynamic limit for an increasing and exhausting sequence of finite
subsets $\Lambda_n\subset\Gamma$, $n\geq 1$, by showing that  for each $A\in \cA_X$,
$(\gamma^{\Lambda_n}_{t,s}(A))_{n\geq 1}$ is a Cauchy sequence in the norm of
$\cA_\Gamma$. To this end we have to suppose that Assumption \ref{assumption1} (2) holds
\textit{uniformly} for all $\Lambda_n$, i.e., we can replace $\Lambda$ in \eq{Decay_L} by
$\Gamma$.

\begin{theorem}\label{thm:thermodynamiclimit}
Suppose that Assumption \ref{assumption1} holds and, in addition, that
\eq{Decay_L} holds for $\Lambda =\Gamma$. Then, there exists a strongly continuous
cocycle of unit-preserving completely positive maps $\gamma_{t,s}^\Gamma$ on
$\cA_\Gamma$ such that for all $0\leq s\leq t$, and any increasing exhausting sequence
of finite subsets  $\Lambda_n\subset\Gamma$, we have
\begin{equation}
\lim_{n\rightarrow\infty}\|\gamma_{t,s}^{\Lambda_n}(A)-\gamma_{t,s}^\Gamma(A)\|=0,
\end{equation}
for all $A\in\cA_\Gamma$.
\end{theorem}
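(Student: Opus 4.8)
The plan is to prove the theorem by establishing that, for every strictly local observable $A$ with finite support $X_0$, the sequence $\bigl(\gamma^{\Lambda_n}_{t,s}(A)\bigr)_n$ is Cauchy in $\cA_\Gamma$, uniformly for $0\le s\le t\le T$ and independently of the exhausting sequence chosen, and then to extend to all of $\cA_\Gamma$ by density, using that each $\gamma^{\Lambda_n}_{t,s}$ is a contraction. The engine of the argument is the volume-independent Lieb--Robinson bound of Theorem \ref{thm:lrbounds}, which is precisely why we now assume \eqref{Decay_L} with $\Lambda=\Gamma$. Concretely, for finite $\Lambda_m\subset\Lambda_n$ with $X_0\subset\Lambda_m$ I would use the interpolation identity coming from $g(r):=\gamma^{\Lambda_n}_{t,r}\bigl(\gamma^{\Lambda_m}_{r,s}(A)\bigr)$, $r\in[s,t]$: since $g(s)=\gamma^{\Lambda_n}_{t,s}(A)$, $g(t)=\gamma^{\Lambda_m}_{t,s}(A)$, and $\tfrac{d}{dr}\gamma^{\Lambda_n}_{t,r}=-\gamma^{\Lambda_n}_{t,r}\cL_{\Lambda_n}(r)$ (which follows from the cocycle property together with \eqref{ode}), and since $\cL_{\Lambda_m}(r)-\cL_{\Lambda_n}(r)=-\sum_{Z\subset\Lambda_n,\,Z\not\subset\Lambda_m}\Psi_Z(r)$, one obtains
\begin{equation*}
\gamma^{\Lambda_m}_{t,s}(A)-\gamma^{\Lambda_n}_{t,s}(A)=-\int_s^t\gamma^{\Lambda_n}_{t,r}\Bigl(\sum_{\substack{Z\subset\Lambda_n\\ Z\not\subset\Lambda_m}}\Psi_Z(r)\,\gamma^{\Lambda_m}_{r,s}(A)\Bigr)\,dr.
\end{equation*}

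The crucial observation is that $\Psi_Z(r)(\idty)=0$ by the explicit form \eqref{psiz}, so $\Psi_Z(r)\in\cB_Z$ and Theorem \ref{thm:lrbounds} applies to each term (with $\cK=\Psi_Z(r)$, $Y=X_0$, on the volume $\Lambda_m$). Since $\gamma^{\Lambda_n}_{t,r}$ is a contraction, this yields
\begin{equation*}
\bigl\|\gamma^{\Lambda_m}_{t,s}(A)-\gamma^{\Lambda_n}_{t,s}(A)\bigr\|\le\frac{\|A\|\,T}{C_\mu}\,e^{\|\Psi\|_{T,\mu}C_\mu T}\sup_{r\in[s,t]}\sum_{\substack{Z\subset\Lambda_n\\ Z\not\subset\Lambda_m}}\|\Psi_Z(r)\|_{\rm cb}\sum_{z\in Z}\sum_{x\in X_0}F(d(z,x)).
\end{equation*}
To bound the volume sum I would, for each $Z\not\subset\Lambda_m$, pick a site $w(Z)\in Z\setminus\Lambda_m$, dominate by $\sum_{w\notin\Lambda_m}\sum_{Z\ni w}(\cdots)$, reorganize as $\sum_{w\notin\Lambda_m}\sum_{z\in\Gamma}F(d(z,x))\sum_{Z\ni w,z}\|\Psi_Z(r)\|_{\rm cb}$, apply \eqref{Decay_L} (now for $\Gamma$) and then the convolution property \eqref{F_ii}; what remains is at most $\mathrm{const}(T)\sum_{x\in X_0}\sum_{w\notin\Lambda_m}F(d(w,x))$, a tail of the convergent series $\sum_{w\in\Gamma}F(d(w,x))\le\|F\|$ (by \eqref{F_i}), which vanishes as $m\to\infty$, uniformly in $n\ge m$ and in $t\in[0,T]$. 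This gives the Cauchy property; since the bound depends only on $\Lambda_m$, running the estimate for two exhausting sequences shows the limit is independent of the sequence. I would then set $\gamma^\Gamma_{t,s}(A):=\lim_n\gamma^{\Lambda_n}_{t,s}(A)$ for local $A$ and extend to $\cA_\Gamma$ using $\|\gamma^{\Lambda_n}_{t,s}\|\le1$, so that $\|\gamma^\Gamma_{t,s}\|\le1$ and the convergence holds for every $A\in\cA_\Gamma$.

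It remains to verify the asserted structural properties. Complete positivity, unitality, and the cocycle identity all pass to the norm limit (for the cocycle relation using $\|\gamma^{\Lambda_n}_{t,r}\|\le1$ to interchange the limit with the composition). For strong continuity, by density and $\|\gamma^\Gamma_{t,s}\|\le1$ it suffices to treat local $A$: from $\gamma^\Lambda_{t,s}(A)-A=\int_s^t\sum_{Z\subset\Lambda}\Psi_Z(\tau)\,\gamma^\Lambda_{\tau,s}(A)\,d\tau$, bounding the integrand exactly as above --- now with the full sum $\sum_{Z\subset\Lambda}$, which is finite and bounded uniformly in $\Lambda$ by \eqref{Decay_L} and \eqref{F_ii} --- gives $\|\gamma^\Lambda_{t,s}(A)-A\|\le\mathrm{const}(T,\|A\|,X_0)\,(t-s)$ uniformly in $\Lambda$; combining this with the cocycle property controls $\|\gamma^\Gamma_{t,s}(A)-\gamma^\Gamma_{t',s'}(A)\|$ by $\mathrm{const}\,(|t-t'|+|s-s'|)$ for local $A$, hence for all $A$, which is the required joint norm-continuity in $(s,t)$.

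I expect the main obstacle to be keeping every estimate uniform in the volume $\Lambda$: this is exactly what dictates the use of the cb-norm in \eqref{Decay_L} and the hypothesis that \eqref{Decay_L} hold for $\Gamma$, and it is why Theorem \ref{thm:lrbounds} was engineered to carry a $\Lambda$-independent constant. Within this, the one genuinely delicate computation is the bookkeeping that dominates $\sum_{Z\not\subset\Lambda_m}\|\Psi_Z(r)\|_{\rm cb}\sum_{z\in Z}\sum_{x\in X_0}F(d(z,x))$ by a tail of $\sum_w F(d(w,x))$: one must spend the exponential weight in \eqref{Decay_L} against the convolution bound \eqref{F_ii} so as to land back on $F(d(w,x))$ itself while preserving summability in $w$, after which the uniform integrability \eqref{F_i} closes the argument.
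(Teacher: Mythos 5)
Your proposal is correct and follows essentially the same route as the paper: a Duhamel/interpolation identity for the difference of two finite-volume evolutions, the Lieb--Robinson bound of Theorem \ref{thm:lrbounds} applied with $\cK=\Psi_Z(r)$, and then \eqref{Decay_L} for $\Gamma$ together with \eqref{F_ii} and \eqref{F_i} to reduce the error to a vanishing tail of $\sum_w F(d(w,x))$, followed by density and contractivity to extend to $\cA_\Gamma$. Your strong-continuity argument via a $\Lambda$-uniform Lipschitz estimate for local observables is a minor (and slightly more quantitative) variant of the paper's argument, which instead combines uniform convergence with the norm continuity of the finite-volume cocycles.
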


The proof of existence of the thermodynamic limit mimics the method given in
the paper \cite{nachtergaele:2006}.

\noindent
{\sc Proof of Theorem \ref{thm:thermodynamiclimit}:}
Denote $\mathcal{L}_n=\mathcal{L}_{\Lambda_n}$ and $\gamma_{t,s}^{\Lambda_n}=\gamma_{t,s}^{(n)}$.
Let $n>m$, then $\Lambda_m\subset\Lambda_n$ since we have the exhausting sequence of subsets in $\Gamma$.
We will prove that for every observable $A\in\cA_X$ the sequence
$(\gamma_{t,s}^n(A))_{n\geq 1}$ is a Cauchy sequence.
In order to do that for any local observable $A\in\cA_X$ we consider the function
\begin{equation*}
f(t):=\gamma_{t,s}^{(n)}(A)-\gamma_{t,s}^{(m)}(A) \ .
\end{equation*}
Calculating the derivative, we obtain
\begin{align*}
f^\prime(t)&=\mathcal{L}_n\gamma_{t,s}^{(n)}(A)-\mathcal{L}_m\gamma_{t,s}^{(m)}(A)\\
&= \mathcal{L}_n(t)(\gamma_{t,s}^{(n)}(A)-\gamma_{t,s}^{(m)}(A))+(\mathcal{L}_n(t)-\mathcal{L}_m(t))\gamma_{t,s}^{(m)}(A)\\
&=\mathcal{L}_n(t)f(t)+(\mathcal{L}_n(t)-\mathcal{L}_m(t))\gamma_{t,s}^{(m)}(A).
\end{align*}
The solution to this differential equation is
\begin{align*}
f(t)=\int_s^t \gamma_{t,r}^{(n)}\red{([}\mathcal{L}_n(r)-\mathcal{L}_m(r)\red{]}\gamma_{r,s}^{(m)}\red{(}A\red{))} dr.
\end{align*}
Since $\gamma_{t,r}^(n)$ is norm-contracting, from this formula we get the estimate:
\begin{align}\label{estim-V}
\|f(t)\| &\leq \int_s^t \|(\mathcal{L}_n(r)-\mathcal{L}_m(r))\gamma_{r,s}^{(m)}(A) \|dr  \nonumber \\
&\leq \sum_{z\in \Lambda_n\setminus\Lambda_m}\sum_{Z\ni z}\int_s^t \|\Psi_Z(r)(\gamma_{r,s}^{(m)}(A))\|dr.\nonumber
\end{align}
Using the Lieb-Robinson bound and the exponential decay condition (\ref{Decay_L}),
which we assumed holds uniformly in $\Lambda$,  we find that
\begin{align*}
\|f(t)\|&\leq\frac{\|A\|}{C_\mu}\int_s^t e^{\mu v_{r,\mu} (r-s)}
\sum_{z\in \Lambda_n\setminus\Lambda_m}\sum_{Z\ni z}\|\Psi_Z(r)\|_{\rm cb}
\sum_{x\in X}\sum_{y\in Z}F(d(x,y))dr\\
&\leq \frac{\|A\|}{C_\mu}\int_s^t e^{\mu v_{r,\mu} (r-s)} \sum_{z\in \Lambda_n\setminus\Lambda_m}\sum_{x\in X}\sum_{y\in\Gamma}
\sum_{Z\ni z,y}\|\Psi_Z(r)\|_{\rm cb}F(d(x,y))dr\\
&\leq \frac{\|A\|}{C_\mu}\|\Psi\|_{t,\mu}\int_s^t e^{\mu v_{r,\mu} (r-s)}dr \sum_{z\in \Lambda_n\setminus\Lambda_m}\sum_{x\in X}
\sum_{y\in\Gamma}  F(d(x,y))F(d(y,z))\\
&\leq {\|A\|}\|\Psi\|_{t,\mu}\int_s^t e^{\mu v_{r,\mu} (r-s)}dr \sum_{z\in \Lambda_n\setminus\Lambda_m}
\sum_{x\in X}F(d(x,z))\\
&\leq  {\|A\|}\|\Psi\|_{t,\mu}\int_s^t e^{\mu v_{r,\mu} (r-s)}dr |X| \sup_{x\in X}\sum_{z\in
\Lambda_n\setminus\Lambda_m}F(d(x,z)).
\end{align*}
Since $F$ is uniformly integrable (\ref{F_i}) the tail sum above goes to zero as $n,m\rightarrow\infty$. Thus
\begin{equation*}
\|(\gamma_{t,s}^{(n)}-\gamma_{t,s}^{(m)})(A)\|\rightarrow 0, \textit{ as } n,m\rightarrow\infty.
\end{equation*}
Therefore the sequence $\{\gamma_{t,s}^{(n)}(A)\}_{n=0}^\infty$ is Cauchy and hence convergent. Denote the
limit, and its extension to $\cA_\Gamma$, as $\gamma_{t,s}^\Gamma$.

To show that $\gamma_{t,s}^\Gamma$ is strongly continuous we notice that for
$0\leq s \leq t, r\leq T$, and any $A\in\cA_\Gamma^{\rm loc}$, we have
\begin{equation*}
\|\gamma_{t,s}^\Gamma (A)-\gamma_{r,s}^\Gamma(A)\|\leq \|\gamma_{t,s}^\Gamma(A)-\gamma_{t,s}^{(n)}(A)\|+
\|\gamma_{t,s}^{(n)}(A)-\gamma_{r,s}^{(n)}(A)\|+\|\gamma_{r,s}^{(n)}(A)-\gamma_{r,s}^\Gamma(A)\|,
\end{equation*}
for any $n\in\bN$ such that $A\in\cA_{\Lambda_n}$.

The strong continuity then follows from the strong convergence of $\gamma_{t,s}^{(n)}$ to
$\gamma_{t.s}^\Gamma$, uniformly in $s\leq t\in [0,T]$, and the strong continuity of
$\gamma_{t,s}^{(n)}$ in $t$. The continuity of the extension of $\gamma^\Gamma_{t,s}$ to all
of $A\in\cA_\Gamma$ follows by the standard density argument. The argument for continuity in
the second variable, $s$, is similar.
\boxendproof

\newchapter{Non-equilibrium state}{Non-equilibrium state of a leaking photon cavity }{Non-equilibrium state of a leaking photon cavity pumped by a random atomic beam}
\label{sec:NELS}

\section{Set up}
\label{sec:Nonequil_setup}
\subsection{Description of the model}

\noindent  Our model consists of a beam of two-level atoms that passe one-by-one
the microwave cavity. Atoms in a beam are randomly excited. During the passage time $\tau$ the corresponding single atom is able to interact with the cavity. For simplicity we consider a so called \emph{tuned} case, when the cavity size is equal to the interatomic distance, so there is always one atom in the cavity.

The cavity is a one-mode resonator described by quantum harmonic oscillator with the Hamiltonian $H_C = \epsilon \,
b^*b\otimes\idty$ in the Hilbert space $\cH_C$, where $b^*$ and $b$ stay for boson (photon) creation and annihilation
operators with canonical commutation relations (CCR): $[b, b^*]=1$, $[b, b]=[b^*, b^*]=0$.

The beam of two-level, $\{E,0\}$, atoms can be described as a chain $H_{A}= \sum_{n\geq1} H_{A_n}$ of individual atoms with
Hamiltonian $H_{A_n}=\idty \otimes E \, a_n^*a_n$ in the Hilbert space $\cH_A=\otimes_{n\geq 1}\cH_{A_n}$. Here
$a_n^*$ and $a_n$  are one-point fermion creation and annihilation operators that satisfy the canonical
anti-commutation relations (CAR): $\{a_n,a_n^*\}=\idty $ and $\{a_n,a_n\}= \{a_n^*,a_n^*\}=0$, but commute for
different indices $[a_n^\sharp,a_k^\sharp] =0$ for $n\neq k$.

In our model there in only one atom present in the cavity at any given moment.  The repeated cavity-atom interaction is time-depended and we take it in the form:
\begin{equation}\label{W-int}
W_n(t)=\chi_{[(n-1)\tau, n\tau)}(t) \, \lambda \, (b^*+b)\otimes a_n^*a_n \ .
\end{equation}
Here $\chi_{I}(x)$ is characteristic function of the set $I$. 

The Hamiltonian of the whole system in the space $\cH_S:= \cH_C \otimes \cH_A$, is then the sum of the Hamiltonian of the cavity,
of the atoms, and the interaction between them
\begin{align}\label{Ham-Model}
H(t)=& \ H_C+\sum_{n\geq 1}(H_{A_n}+W_n(t))\\
=& \ \epsilon \, b^*b\otimes\idty+\sum_{n\geq1}\idty\otimes E \, a_n^*a_n + \sum_{n\geq 1}\chi_{[(n-1)\tau, n\tau)}(t)
(\lambda \, (b^*+b)\otimes a_n^*a_n) \ . \nonumber
\end{align}
Notice that for the time $t\in[(n-1)\tau, n\tau)$, only the $n$-th atom interacts with the cavity and the Hamiltonian in time-independent.

\subsection{Hamiltonian dynamics of perfect cavity}

 Let $t\in[(n-1)\tau, n\tau)$. Then the Hamiltonian
(\ref{Ham-Model}) for the $n$-th atom in the cavity takes the form
\begin{equation}\label{Ham-n}
H_n:=\epsilon \, b^*b\otimes \idty+\idty\otimes E \, a_n^*a_n + \lambda \, (b^*+b)\otimes a_n^*a_n \ .
\end{equation}

Although the atomic beam is infinite, on the (quasi-local) operator algebra of
observables $\mathfrak{A}(\cH_C \otimes\cH_A)$ we can describe our system by normal states
$\omega_{S}(\cdot):= {\rm{Tr}}_{\cH_C \otimes\cH_A} (\, \cdot \ \rho_S)$, which are defined by trace-class density matrices $\rho_S \in \mathcal{C}_{1}(\cH_C \otimes \cH_A)$, where $\mathcal{C}_1$ denotes the space of trace-class operators.

We suppose that initially our system is in a product state:
\begin{equation*}
\omega_{S}(\cdot):= (\omega_{C}\otimes \omega_{A})(\cdot)= {\rm{Tr}}_{\cH_C \otimes\cH_A} (\, \cdot \ \rho_C \otimes \rho_A) \ , \
\rho_S = \rho_C \otimes \rho_A \ ,
\end{equation*}
where $\rho_C\in\mathcal{C}_{1}(\cH_C)$ and $\rho_A\in\mathcal{C}_{1}(\cH_A)$. We denote by $\omega_{S}^{t}(\cdot):= {\rm{Tr}}_{\cH_C \otimes\cH_A} (\, \cdot \ \rho_S (t))$ its time evolution. We also often refer to density matrices as states, if this wording will not produce any confusion.

For any state $\rho_S$ on $\mathfrak{A}(\cH_C\otimes\cH_A)$ the Hamiltonian dynamics of the system
is defined by
(\ref{Ham-Model}), or by the quantum time-dependent Liouvillian generator:
\begin{equation}\label{L-Gen-t}
L(t)(\rho_S):= -i \, [H(t),\rho_S] \ .
\end{equation}
Then the state $\rho_S(t)$ of the total system at the time $t$ is a solution of the Cauchy problem corresponding to Liouville differential equation
\begin{equation}\label{Liouville's-DE}
\frac{d}{dt}\rho_S(t) = L(t)(\rho_S(t)) \ , \ \rho_S(t=0) := \rho_C \otimes \rho_A \ .
\end{equation}

Notice that in general the Hamiltonian evolution (\ref{Liouville's-DE}) with time-dependent generator
is a family of automorphisms $\{\Gamma_{t,s}\}_{0\leq s \leq t}$  with the
\textit{cocycle property}:
\begin{equation}\label{cocycle}
\rho_S(t) = \Gamma_{t,s} \, \rho_S(s) \ .
\end{equation}

For our model with tuned repeated interactions the solution of (\ref{Liouville's-DE}) and the form of the evolution
operator (\ref{cocycle}) are considerably simplified. Indeed, the generator of the dynamics of our system (\ref{Ham-Model}) is time-independent for each
interval $[(n-1)\tau, n\tau)$. Therefore, by virtue of (\ref{Ham-n}) and (\ref{L-Gen-t}), the Liouvillian generators
\begin{equation}\label{L-Gen-n}
L_n := L(t)\ ,  \ t\in[(n-1)\tau, n\tau) \ ,  \  n \geq 1 \ ,
\end{equation}
are time-independent and commuting. Note that any moment $t \geq 0$ has the representation
\begin{equation}\label{t}
t:=n(t)\tau + \nu(t) \ , \  n(t) = [t/\tau] \ \ {\rm{and}} \ \ \nu(t)\in[0, \tau) \ ,
\end{equation}
where $[x]$ denotes the integer part of the number $x \in \mathbb{R}$. Then solution of (\ref{Liouville's-DE}) for
$t\in [(n-1)\tau, n\tau)$ gets the form:
\begin{equation}\label{Sol-Liouv-Eq}
\rho_S(t)=\Gamma_{t,s=0}(\rho_C \otimes \rho_A)= e^{\nu L_n}e^{\tau L_{n-1}} \, ... \ e^{\tau L_2}e^{\tau L_1}(\rho_C \otimes \rho_A)
\ .
\end{equation}

In the next section \ref{sec:Nonequil_nonleaking} we exploit a specific structure of the Hamiltonian dynamics (\ref{Sol-Liouv-Eq}) and a
special form of interaction (\ref{W-int}) to work out an effective evolution of the perfect cavity $H_C$. Our results
concern first of all the evolution of the photon number expectation in the time-dependent cavity state
\begin{equation}\label{C-state-t}
\rho_C(t):= \Tr_{\cH_{A}}\rho_S(t) \ .
\end{equation}
\begin{assumption} \label{Init-State-remark}

\begin{itemize}
\item We suppose that initial state $\rho_S(t=0):= \rho_S$ of the system is in the product state of the form:
\begin{equation}\label{In-State}
\rho_S: =\rho_C\otimes\bigotimes_{k\geq 1}\rho_{k} \ .
\end{equation}
Here $\rho_C$ is the initial state of the cavity and $\rho_A:= \bigotimes_{k\geq 1}\rho_{k}$ is that of the atomic beam.
\item We also assume that the states $\{\rho_{k}\}_{k\geq1}$ on the algebras $\{\mathfrak{A}(\cH_{A_k})\}_{k\geq1}$ are gauge-invariant,
i.e. $[\rho_{k}, a_k^*a_k] =0$. Each individual state is a function $\rho_{k} := f_k (a_k^*a_k)$ and we also suppose that $f_k = f$, i.e. the initial beam state $\rho_A$ is homogeneous.
\end{itemize}
\end{assumption}
Then by (\ref{Ham-Model}),(\ref{Ham-n}) one gets that in our model these atomic states do
not evolve:
\begin{equation}\label{commut-atoms}
[\idty \otimes \rho_{k}, H(t)] = 0 \ , \  k \geq 1 \ .
\end{equation}


This assumption together with (\ref{Sol-Liouv-Eq}),(\ref{C-state-t}), and  (\ref{In-State}) implies a discrete evolution for the cavity state given by the following recursive formula:
\begin{align}\label{C-state-n}
 \rho_C(t=n\tau)=:\rho_C^{(n)}=& \Tr_{\cH_{A}}\rho_S(n\tau)=\Tr_{\cH_A}[e^{\tau L_n}...e^{\tau L_2}e^{\tau
L_1}(\rho_C\otimes\bigotimes_{k=1}^n\rho_{k})] \\
=& \Tr_{\cH_{A_n}}[e^{\tau L_n}\{\Tr_{\cH_{A_{n-1}}}...\Tr_{\cH_{A_1}}e^{\tau L_{n-1}}...e^{\tau L_2}e^{\tau
L_1}(\rho_C\otimes\bigotimes_{k=1}^{n-1}\rho_{k})\}\otimes\rho_{n}] \nonumber \\
=& \Tr_{\cH_{A_n}}[e^{\tau L_n}(\rho_C^{(n-1)}\otimes\rho_{n})] \ . \nonumber
\end{align}
For any density matrix $\rho \in \mathcal{C}_{1}(\cH_C)$ corresponding to normal state on
the operator algebra $\mathfrak{A}(\cH_C)$ we define the mapping $\cL: \rho \mapsto \cL(\rho)$, by
\begin{equation}\label{cL}
\cL(\rho):=\Tr_{\cH_{A_n}}(e^{\tau L_n}(\rho\otimes\rho_{n})) = \Tr_{\cH_{A_n}}[e^{-i\tau H_n}(\rho\otimes\rho_{n})
e^{i\tau H_n}] \ .
\end{equation}
Here the last equality is due to (\ref{L-Gen-t}) and (\ref{L-Gen-n}).

Note that the mapping (\ref{cL}) does not depend on $n \geq 1$, since the atomic states $\{\rho_{n}\}_{n\geq 1}$ are
supposed to be homogeneous. Then the cavity state at $t=n\tau$ is defined by the $n$-th power of $\cL$:
\begin{equation}\label{cavity state at t=n}
\rho_C^{(n)}=\cL(\rho_C^{(n-1)})=\cL^n(\rho_C) \ .
\end{equation}
Therefore, by (\ref{Sol-Liouv-Eq}), (\ref{C-state-n}) and (\ref{cavity state at t=n}) one obtains that for any time $t=n\tau+\nu$, where
$\nu \in[0, \tau)$, the cavity state is
\begin{equation}\label{cavity state at t}
\rho_C(t)=\Tr_{\cH_{A_{n+1}}}[e^{\nu L_{n+1}}(\cL^{n}(\rho_C)\otimes\rho_{n+1})] \ .
\end{equation}

We are interested in the expectation value $N(t)$ of the photon-number operator $\widehat{N}:=b^*b$ in the cavity
at the time $t$:
\begin{equation}\label{N(t)}
N(t): = \omega_{C}^{t}(\widehat{N})= \Tr_{\cH_C}(b^*b \ \rho_C(t)) \ .
\end{equation}
For  $t=n\tau$ the state of the cavity can be expressed using (\ref{cavity state at t=n}), which gives
\begin{equation}\label{mean-photon-number-n}
N(n\tau)=\Tr_{\cH_C}(b^*b \ \cL^n(\rho_C)) \ .
\end{equation}

\subsection{Quantum dynamics of leaking cavity}

For a more general situation we take into account a possible leakage of the cavity, where photons may leave the cavity at some non-zero rate $\sigma>0$. We consider this case in the framework of Kossakowski-Lindblad extension of the Hamiltonian Dynamics to irreversible Quantum Dynamics with time-dependent generator
\begin{equation}\label{Generator}
L_{\sigma}(t)(\rho_S)=-i[H(t),\rho_S]+\sigma b(\rho_S)b^*-\frac{\sigma}{2}\{b^*b,
\rho_S\} \ ,
\end{equation}
for any $\rho_S\in\mathcal{C}_1(\cH_C\otimes\cH_A)$.

Similar to (\ref{Liouville's-DE}) the evolution of the state is defined by a solution of the non-autonomous
Cauchy problem corresponding to time-dependent generator (\ref{Generator})
\begin{equation}\label{Liouville's-DE-sigma}
\frac{d}{dt}\rho_S(t) = L_\sigma(t)(\rho_S(t)) \ , \ \rho_S(t=0) := \rho_C \otimes \rho_A \ .
\end{equation}
 In general for a time-dependent generator $L_\sigma(t)$ the proof of existence of this solution is a
non-trivial problem, as in chapter \ref{sec:Thermodynamics_irrev}. 
In the present case of the tuned repeated interactions, when the Hamiltonian is time-independent for each interval $[(n-1)\tau, n\tau)$, the generator (\ref{Generator})
gets the form:
\begin{equation}\label{Generator-KL}
L_{\sigma,n}(\rho_C\otimes\rho_A):= -i[H_n,\rho_C\otimes\rho_A]+\sigma b(\rho_C\otimes\rho_A)b^*-\frac{\sigma}{2}\{b^*b,
\rho_C\otimes\rho_A\} \ .
\end{equation}

With restriction to the tuned case and with the help of (\ref{Generator-KL}) the solution of (\ref{Liouville's-DE-sigma}) for $t\in[(n-1)\tau, n\tau)$ becomes of the form:
\begin{equation}\label{Sol-Liouv-Eq-sigma}
\rho_S(t)=T^\sigma_{t,s=0}(\rho_C \otimes \rho_A)= e^{\nu L_{\sigma,n}}e^{\tau L_{\sigma, n-1}} \, ... \ e^{\tau L_{\sigma, 2}}e^{\tau L_{\sigma, 1}}(\rho_C \otimes \rho_A)
\ .
\end{equation}
Here $\{T^\sigma_{t,s}\}_{0\leq s\leq t}$ is a family of unit preserving completely positive maps with a cocycle property.
 
By duality with respect to the initial state
$\omega_{\mathcal{S}}^{0}(\cdot)=\Tr ( \cdot \ \rho_{\mathcal{C}} \otimes \rho_{\mathcal{A}})$ one can now
define the adjoint evolution mapping $\{(T_{t,0}^\sigma)^*\}_{t\geq 0} $ by relation
\begin{equation}\label{dual-sigma}
\omega_{\mathcal{S},\sigma}^{t}(A)= \Tr ( A \ T_{t,0}^{\sigma}(\rho_{\mathcal{C}} \otimes \rho_{\mathcal{A}}))=
\omega_{\mathcal{S}}^{0}((T_{t,0}^\sigma)^*(A)) \ ,
\end{equation}
for any $A \in \mathfrak{A}(\cH_{\mathcal{C}} \otimes \cH_{\mathcal{A}})$. Then we obtain for any $t = (n-1)\tau + \nu(t)$ that
\begin{equation}\label{S-state-evol-adj-sigma}
\omega_{\mathcal{S},\sigma}^{t}(\cdot)=
{\rm{Tr}}\, ((T_{t,0}^\sigma)^{\ast}(\cdot) \ \rho_{\mathcal{C}} \otimes \rho_{\mathcal{A}})
\ \ ,\ \
(T_{t,0}^\sigma)^{\ast} = \prod_{k=1}^{n-1}e^{\tau L_{\sigma,k}^\ast} \  e^{\nu(t) L_{\sigma,n}^\ast} \ .
\end{equation}
Here $\{L_{\sigma,k}^\ast\}_{k \geq 1}$ are generators, which are adjoint to (\ref{Generator-KL}).

 The discrete evolution operator (\ref{cL}) has to be modified for $\sigma > 0$ as follows:
\begin{definition}\label{cL-def-sigma} For any state $\rho$ on $\mathfrak{A}(\cH_C)$ we define the mapping\begin{equation}\label{cL-sigma}
\cL_{\sigma}(\rho):=\Tr_{\cH_{A_n}}(e^{\tau L_{\sigma,n}}(\rho\otimes\rho_{n})) \ .
\end{equation}
\end{definition}
The initial state of the cavity is $\rho_{C} = \rho_{C,\sigma}(t=0)$. Then similar to (\ref{cavity state at t}) we obtain for
$\rho_{C,\sigma}(t)$ at the moment $t=n\tau + \nu$, where $\nu \in[0, \tau)$,
\begin{equation}\label{cavity state at t-sigma}
\rho_{C,\sigma}(t)=\Tr_{\cH_{A_{n+1}}}[e^{\nu L_{\sigma,n+1}}(\cL_{\sigma}^{n}(\rho_C)\otimes\rho_{n+1})] \ .
\end{equation}

Now we define the functional 
\begin{equation}\label{C-state-t-sigma}
\omega_{C, \sigma}^{t}(\cdot): =\omega_{S, \sigma}^t(\cdot \ \otimes\idty)= \Tr_{\cH_{C}}(\, \cdot \ \rho_{C,\sigma}(t)).
\end{equation} To study the infinite-time limit for
the cavity state $\omega_{C,\sigma}(\cdot):=\lim_{t\rightarrow\infty} \omega_{C,\sigma}^{t}(\cdot)$ we consider the functional
\begin{equation}\label{Weyl-func-sigma}
\omega_{C,\sigma}(W(\alpha))=\lim_{t\rightarrow\infty} \omega_{C,\sigma}^{t}(W(\alpha)) \ ,
\end{equation}
generated by the Weyl operator on $\cH_C$:
\begin{equation*}
W(\alpha)=e^{\alpha b-\overline{\alpha}b^*} \ , \ \alpha\in\mathbb{C} \ .
\end{equation*}
Notice that convergence (\ref{Weyl-func-sigma}) for the family of the Weyl operators guarantees the weak-$\ast$ limit \cite{Bratteli}
 of the states $\omega_{C,\sigma}^{t}(\cdot)$ when $t\rightarrow\infty$.

\section{Number of photons in perfect cavity}
\label{sec:Nonequil_nonleaking}

First we consider the case of the perfect cavity, i.e. $\sigma=0$. Then for the discrete evolution operator one gets
$\cL_{\sigma=0} = \cL$, see (\ref{cL}) and (\ref{cL-sigma}).

Our first result concerns the expectation of the photon-number operator $\widehat{N}=b^*b$ in the cavity (\ref{N(t)}).
For $t=n\tau$ this expectation value takes the form (\ref{mean-photon-number-n}).

In the theorem below we suppose that the initial cavity state $\omega_{C}(\cdot)$ is also gauge invariant,
i.e. $e^{i \alpha \widehat{N}}\rho_C e^{- i \alpha \widehat{N}}= \rho_C$. For a homogeneous beam (Remark \ref{Init-State-remark})
the parameter $p:=\Tr_{\cH_{A_n}}(a_n^*a_n \ \rho_{n})$, which is independent of $n\geq 1$, denotes a probability that atom $A_n$ is in excited state.

\begin{theorem}\label{N of photons}
Let $\rho_C$ be a gauge invariant state. Then for $t=n\tau$ the expectation value (\ref{mean-photon-number-n}) of the photon number
in the cavity is
\begin{equation}\label{number of photons_Hamiltonian}
N(n\tau)=N(0)+ n \, p(1-p) \, \frac{2\lambda^2}{\epsilon^2} \, (1-\cos\epsilon\tau) +
p^2 \, \frac{2\lambda^2}{\epsilon^2}(1-\cos n\epsilon\tau)\ .
\end{equation}
\end{theorem}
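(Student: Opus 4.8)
The plan is to reduce the computation of $N(n\tau)=\Tr_{\cH_C}(b^*b\,\cL^n(\rho_C))$ to an explicit study of the single-step map $\cL$ defined in \eqref{cL}. The key structural fact is that for $t\in[(n-1)\tau,n\tau)$ the Hamiltonian $H_n$ in \eqref{Ham-n} is time-independent, and crucially the atomic operator $a_n^*a_n$ is a conserved quantity (it commutes with $H_n$) with only two eigenvalues, $0$ and $1$. I would therefore first diagonalize over the atomic degree of freedom: writing $\rho_n = f(a_n^*a_n)$, the excited-state probability is $p=\Tr_{\cH_{A_n}}(a_n^*a_n\rho_n)$, and tracing out the atom in \eqref{cL} gives
\begin{equation*}
\cL(\rho)=(1-p)\,U_0\rho\,U_0^* + p\,U_1\rho\,U_1^*,
\end{equation*}
where $U_0=e^{-i\tau\epsilon b^*b}$ is the free cavity evolution (atom in ground state, no interaction) and $U_1=e^{-i\tau(\epsilon b^*b+\lambda(b^*+b))}$ is the evolution when the atom is excited. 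The second step is to compute $U_1$ explicitly. Since $\epsilon b^*b+\lambda(b^*+b)$ is a quadratic-plus-linear expression in $b,b^*$, I would complete the square via the displacement (Weyl) operator: setting $D=W(\alpha)$ with $\alpha=\lambda/\epsilon$ chosen so that $D^*(b)D = b-\lambda/\epsilon$, one finds $\epsilon b^*b+\lambda(b^*+b)=D(\epsilon b^*b)D^* - \lambda^2/\epsilon$, hence $U_1 = D\,e^{-i\tau\epsilon b^*b}\,D^*\cdot e^{i\tau\lambda^2/\epsilon}$ up to the (irrelevant) scalar phase.

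The third step is to track the action of $\cL$ on the relevant observables in the Heisenberg picture. Rather than iterate $\cL$ on states directly, I would work with the dual map $\cL^*$ acting on $b^*b$, $b$, $b^*$, and $\idty$. Because $U_0$ and the conjugation-by-$D$ part of $U_1$ are both Bogoliubov transformations mapping $\{b,b^*,\idty\}$ affinely into itself, the span of $\{b^*b, b, b^*, \idty\}$ is invariant under $\cL^*$, so $\cL^*$ restricted to this four-dimensional space is an explicit matrix. Concretely, $\cL^*(b) = e^{i\epsilon\tau}b + c$ for a constant $c$ built from $p$ and $\lambda/\epsilon$ (the ground-state branch contributes $e^{i\epsilon\tau}b$, the excited branch contributes $e^{i\epsilon\tau}(b+\lambda/\epsilon)-\lambda/\epsilon$, weighted by $1-p$ and $p$), and similarly $\cL^*(b^*b)$ is a combination of $b^*b$, $b$, $b^*$, $\idty$ with coefficients one reads off from the two Bogoliubov maps. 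Iterating $n$ times is then a finite geometric-series computation: the $b^*b$ coefficient stays $1$, the $b,b^*$ coefficients form a geometric series in $e^{i\epsilon\tau}$ that produces the $(1-\cos n\epsilon\tau)$ term, and the accumulation of the inhomogeneous constant across the $n$ steps produces the linear-in-$n$ term $n\,p(1-p)\tfrac{2\lambda^2}{\epsilon^2}(1-\cos\epsilon\tau)$.

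The fourth step is to evaluate $\Tr_{\cH_C}(\cL^{*n}(b^*b)\,\rho_C)$ using gauge invariance of $\rho_C$: since $e^{i\alpha\widehat N}\rho_C e^{-i\alpha\widehat N}=\rho_C$, all expectations $\Tr(b\,\rho_C)$ and $\Tr(b^*\,\rho_C)$ vanish, so only the $b^*b$ term (giving $N(0)$) and the scalar term survive, and collecting the scalars yields \eqref{number of photons_Hamiltonian}. The main obstacle I anticipate is bookkeeping rather than conceptual: correctly combining the two branches of $\cL$ with weights $p$ and $1-p$ while carrying the displacement constants through $n$ iterations, and making sure the cross term $p^2$ (as opposed to $p$) in front of $(1-\cos n\epsilon\tau)$ emerges correctly — this is exactly the coherent-accumulation-of-a-constant-drive effect and distinguishing it cleanly from the incoherent $p(1-p)$ diffusive growth requires care in separating the mean displacement (which scales with $p$ coherently, hence $p^2$ in the photon number) from the variance of the per-step displacement (which scales with $p(1-p)$ and adds up linearly). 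I would double-check the split by verifying the two sanity checks $p=0$ and $p=1$, for which the linear term must vanish.
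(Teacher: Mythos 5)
Your proposal follows essentially the same route as the paper's proof: the two-branch decomposition $\cL(\rho)=(1-p)U_0\rho U_0^*+pU_1\rho U_1^*$ is exactly the paper's Lemma on $\cL(\rho)$, the displacement (shift $e^{\lambda(b^*-b)/\epsilon}$) diagonalization is the paper's unitary $\widehat{S}_n$, and the iteration of the dual map $\cL^*$ on the invariant span of $\{b^*b,b,b^*,\idty\}$ followed by gauge invariance killing the linear terms is precisely how the paper computes $(\cL^*)^n(b^*b)$ (by induction rather than your equivalent matrix/geometric-series bookkeeping). The only slip is the phase convention in $\cL^*(b)$ (the paper gets $e^{-i\epsilon\tau}b-p(1-e^{-i\epsilon\tau})\lambda/\epsilon$, not $e^{i\epsilon\tau}b+c$), which is harmless since the final formula only involves $\cos$ terms and the $b,b^*$ expectations vanish.
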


If for the initial state $\rho_C$ one takes in the theorem the Gibbs state for photons at the inverse temperature $\beta$:
\begin{equation}\label{Gibbs-photons}
\rho_C^{\beta}= {e^{-\beta\epsilon b^*b}}/{\Tr_{\cH_C} e^{-\beta\epsilon b^*b}} \ ,
\end{equation}
then the number of photons (\ref{number of photons_Hamiltonian}) is
\begin{equation}\label{number of photons Gibbs}
N(t) = \frac{1}{e^{\beta\epsilon}-1}+n \, p(1-p) \ \frac{2\lambda^2}{\epsilon^2} \ (1-\cos\epsilon\tau)+
p^2 \, \frac{2\lambda^2}{\epsilon^2}\, (1-\cos n\epsilon\tau) \ .
\end{equation}

To prepare the proof of Theorem \ref{N of photons} we calculate first explicit expressions for $\cL(\rho)$ acting on the
space $\rho \in \mathcal{C}(\cH_C)$ of the cavity states and for the $n$-th power of the dual operator $\cL^{*}$
that we apply to the number operator $b^*b \in \mathfrak{A}(\cH_C)$, see (\ref{cL-dual-0}) and Remark \ref{dual-operator}.

We split these calculations into two lemmas. The proof of the Theorem \ref{N of photons} is presented at the end of this
section.

The Hamiltonian (\ref{Ham-n}) can be written in the following form
\begin{align*}
H_n=&\epsilon \ (b^*+\frac{\lambda}{\epsilon}a_n^{*}a_n)(b+\frac{\lambda}{\epsilon}a_n^{*}a_n)+Ea_n^*a_n
- \frac{\lambda^2}{\epsilon}a_n^{*}a_n\\
=&\epsilon \ {\hat{b}^*}\hat{b}+ (E-{\lambda^2}/{\epsilon})a_n^*a_n \ .
\end{align*}
Here $\hat{b}:=b+{\lambda}a_n^{*}a_n/{\epsilon}$ and $\hat{b}^*:=b^*+{\lambda}a_n^{*}a_n/{\epsilon}$ are new boson operators
since
by the CAR and CCR properties of $a_n, a_n^{*}$ and respectively $b, b^*$ one gets: $[\hat{b}, {\hat{b}}^*]=[b, b^*]=1$ and
$[\hat{b}, {\hat{b}}] = [\hat{b}^*, {\hat{b}}^*]=0$ for \textit{any} index $n$.

Let $\widehat{S}_{k}$ be $*$-isomorphism (\textit{unitary shift}) on the algebra
$\mathfrak{A}(\cH_{\mathcal{C}})\otimes \mathfrak{A}(\cH_{\mathcal{A}_{k}})$ defined by
\begin{equation}\label{Vn}
\widehat{S}_{k}(\cdot):= e^{iV_k}(\cdot) \, e^{-iV_k}  \ , \ V_k:={\lambda}(b^*-b)\otimes \eta_k /{i\epsilon} \ .
\end{equation}

From the definition of $V_n$ we obtain that
\begin{equation*}
\widehat{S}_{k}(a_n^*a_n)=a_n^*a_n.
\end{equation*}

Then the transformed Hamiltonian is easily calculated to be given by
\begin{equation}\label{Diagonalized Hamiltonian}
\widehat{S}_{n}(H_n)=\epsilon \ b^*b+(E-\frac{\lambda^2}{\epsilon})a_n^*a_n.
\end{equation}

To compute $\widehat{S}_{n}(b)$, we note that if
\begin{equation*}
F(\nu):=e^{\nu b^*-\nu b}be^{-(\nu b^*-\nu b)} \ ,
\end{equation*}
then
$$\frac{dF(\nu)}{d\nu}=e^{\nu b^*}[b^*, b]e^{-\nu b^*}=-1 .$$
Therefore
\begin{equation}\label{shift b}
F(\nu)=F(0)-\nu .
\end{equation}

Applying this formula to $\tb$ we find that
\begin{equation}\label{tb}
\widehat{S}_{n}(b)=b\otimes\idty-\idty\otimes\frac{\lambda}{\epsilon}a_n^{*}a_n ,
\text{ and }
\widehat{S}_{n}(b^*)= b^*\otimes\idty-\idty\otimes\frac{\lb}{\e}a_n^*a_n .
\end{equation}

Note that the dynamics generated by $\widehat{S}_{n}(H_n)$ (or by ${H}_n$) leaves the atomic number operator $a_n^*a_n$
invariant
$$e^{i\tau\widehat{S}_{n}(H_n)}a_n^*a_ne^{-i\tau\widehat{S}_{n}(H_n)}=a_n^*a_n. $$
\begin{lemma} \label{cL(rho)} For any state $\rho$ on $\cH_C$
\begin{eqnarray}\label{cL(rho)_p}
\cL(\rho)&=&pe^{-{\lambda}(b^*-b)/{\epsilon}}e^{-i\tau\epsilon b^*b}e^{{\lambda}(b^*-b)/{\epsilon}}\rho
e^{-{\lambda}(b^*-b)/{\epsilon}}e^{i\tau\epsilon b^*b} e^{{\lambda}(b^*-b)/{\epsilon}}\\
&+&(1-p)e^{-i\tau\epsilon b^*b}\rho e^{i\tau\epsilon b^*b} \ .\nonumber
\end{eqnarray}
\end{lemma}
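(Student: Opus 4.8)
The idea is to evaluate the partial trace $\Tr_{\cH_{A_n}}$ in the definition \eq{cL} of $\cL$ explicitly, using that the occupation number $a_n^*a_n$ of the $n$-th atom is conserved by the dynamics generated by $H_n$, so the computation splits into the two occupation sectors; each sector then reduces to a displaced free cavity evolution via the completion of the square already recorded in \eq{Ham-n}.

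\textbf{Step 1: decompose the atomic state.} By Assumption~\ref{Init-State-remark}, $\rho_{n}=f(a_n^*a_n)$ with $\Tr_{\cH_{A_n}}\rho_{n}=1$ and $p=\Tr_{\cH_{A_n}}(a_n^*a_n\,\rho_{n})$. Since $\cH_{A_n}$ is two-dimensional and $a_n^*a_n$ is the rank-one projection onto the excited state, this forces
\[
\rho_{n}=p\,a_n^*a_n+(1-p)(\idty-a_n^*a_n).
\]
Substituting this into \eq{cL} and using linearity of the partial trace expresses $\cL(\rho)$ as the $a_n^*a_n$-contribution with weight $p$ plus the $(\idty-a_n^*a_n)$-contribution with weight $1-p$.

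\textbf{Step 2: use the conservation law to reduce to $\cH_C$.} One checks directly that $[H_n,\idty\otimes a_n^*a_n]=0$, since each summand of \eq{Ham-n} commutes with $\idty\otimes a_n^*a_n$ (this is the same fact already used in the text, that the dynamics leaves $a_n^*a_n$ invariant). Hence $e^{-i\tau H_n}$ leaves the spectral subspaces $\cH_C\otimes\mathrm{ran}(a_n^*a_n)$ and $\cH_C\otimes\ker(a_n^*a_n)$ invariant, acting on them, respectively, through the cavity Hamiltonians
\[
H_n^{(1)}:=\epsilon\,b^*b+\lambda(b^*+b)+E,\qquad H_n^{(0)}:=\epsilon\,b^*b,
\]
read off from \eq{Ham-n} by setting $a_n^*a_n=1$ or $a_n^*a_n=0$. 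Doing the partial trace sector by sector and using $\Tr_{\cH_{A_n}}(a_n^*a_n)=\Tr_{\cH_{A_n}}(\idty-a_n^*a_n)=1$ yields
\[
\cL(\rho)=p\,e^{-i\tau H_n^{(1)}}\rho\,e^{i\tau H_n^{(1)}}+(1-p)\,e^{-i\tau\epsilon b^*b}\rho\,e^{i\tau\epsilon b^*b},
\]
and the $(1-p)$ term is already of the form asserted in the lemma.

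\textbf{Step 3: displace the excited sector, and conclude.} Completing the square, $\epsilon b^*b+\lambda(b^*+b)=\epsilon(b^*+\tfrac{\lambda}{\epsilon})(b+\tfrac{\lambda}{\epsilon})-\tfrac{\lambda^2}{\epsilon}$, so with $W:=e^{\lambda(b^*-b)/\epsilon}$ the shift identity \eq{shift b} gives $W^{-1}bW=b+\tfrac{\lambda}{\epsilon}$ and $W^{-1}b^*W=b^*+\tfrac{\lambda}{\epsilon}$; thus $H_n^{(1)}=W^{-1}(\epsilon b^*b)W+E-\tfrac{\lambda^2}{\epsilon}$, which is precisely the diagonalization \eq{Diagonalized Hamiltonian} restricted to the sector $a_n^*a_n=1$. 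Consequently
\[
e^{-i\tau H_n^{(1)}}=e^{-i\tau(E-\lambda^2/\epsilon)}\,W^{-1}e^{-i\tau\epsilon b^*b}W,
\]
and in $e^{-i\tau H_n^{(1)}}\rho\,e^{i\tau H_n^{(1)}}$ the scalar phase $e^{-i\tau(E-\lambda^2/\epsilon)}$ cancels against its complex conjugate. Inserting $W^{-1}=e^{-\lambda(b^*-b)/\epsilon}$, the $p$-term becomes exactly the first term of the claimed identity, which completes the proof.

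\textbf{Main obstacle.} There is no genuine difficulty; the argument is bookkeeping. The two points requiring care are (i) justifying the sector decomposition of the partial trace from $[H_n,\idty\otimes a_n^*a_n]=0$ together with the fact that $a_n^*a_n$ is a one-dimensional projection, and (ii) tracking the scalar phase $e^{-i\tau(E-\lambda^2/\epsilon)}$ and matching the sign and ordering conventions in \eq{shift b} and \eq{tb} so that the two displacement operators $e^{\pm\lambda(b^*-b)/\epsilon}$ end up in exactly the positions stated.
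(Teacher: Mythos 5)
Your proof is correct and takes essentially the same route as the paper: both rest on the conservation $[H_n,\idty\otimes a_n^*a_n]=0$ and on the displacement $e^{\pm\lambda(b^*-b)/\epsilon}$ obtained by completing the square, which reduces the excited sector to free cavity evolution. The only difference is organizational — you spectrally decompose $\rho_{n}=p\,a_n^*a_n+(1-p)(\idty-a_n^*a_n)$ first and work sector by sector with the effective Hamiltonians $H_n^{(1)},H_n^{(0)}$, whereas the paper conjugates the whole expression by the atom-controlled shift $\widehat{S}_{n}$ and then takes the partial trace using $(1-a_n^*a_n)a_n^*a_n=0$; the resulting computations coincide.
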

\begin{proof}
Using the transformation (\ref{Vn}) $\cL$ can be written in the form
 \begin{equation}\label{cL(rho)1}
\cL(\rho)=\Tr_{\cH_{A_n}}[\widehat{S}_{n}^*(e^{-i\tau \widehat{S}_{n}(H_n)}\widehat{S}_{n}(\rho\otimes\rho_{n})e^{i\tau
\widehat{S}_{n}(H_n)})].
\end{equation}
{From} definition of $V_n$ and the fact that $a_n^*a_n$ commutes with $\rho_{A_n}$ we have
\begin{align}\label{V_n}
\widehat{S}_{n}(\rho\otimes\rho_{n})=e^{{\lambda}(b^*-b)/{\epsilon}}\rho e^{-{\lambda}(b^*-b)/{\epsilon}}\otimes
a^*_na_n\rho_{n}+\rho\otimes(1-a_n^*a_n)\rho_{n}.
\end{align}
Therefore, plugging this expression into (\ref{cL(rho)1}) we obtain
\begin{align*}
\cL(\rho)=&\Tr_{\cH_{A_n}}[\widehat{S}_{n}^*e^{-i\tau \widehat{S}_{n}(H_n)}(e^{{\lambda}(b^*-b)/{\epsilon}}\rho
e^{-{\lambda}(b^*-b)/{\epsilon}}\otimes a^*_na_n\rho_{n})e^{i\tau \widehat{S}_{n}(H_n)}]\\
&+\Tr_{\cH_{A_n}}[\widehat{S}_{n}^*e^{-i\tau \widehat{S}_{n}(H_n)}(\rho\otimes(1- a_n^*a_n)\rho_{n})e^{i\tau \widehat{S}_{n}(H_n)}].
\end{align*}
From the diagonal form (\ref{Diagonalized Hamiltonian}) of the Hamiltonian, we have
\begin{align*}
\cL(\rho)=&\Tr_{\cH_{A_n}}[e^{-{\lambda}(b^*-b)/{\epsilon}}e^{-i\tau\epsilon b^*b}e^{{\lambda}(b^*-b)/{\epsilon}}\rho
e^{-{\lambda}(b^*-b)/{\epsilon}}e^{i\tau\epsilon b^*b}e^{{\lambda}(b^*-b)/{\epsilon}}\otimes a^*_na_n\rho_{n}]\\
&+\Tr_{\cH_{A_n}}[e^{-i\tau\epsilon b^*b}e^{{\lambda}(b^*-b)/{\epsilon}}\rho
e^{-{\lambda}(b^*-b)/{\epsilon}}e^{i\tau\epsilon b^*b}\otimes (1-a_n^*a_n)a^*_na_n\rho_{n}]\\
&+\Tr_{\cH_{A_n}}[e^{-{\lambda}(b^*-b)/{\epsilon}}e^{-i\tau\epsilon b^*b}\rho e^{i\tau\epsilon
b^*b}e^{{\lambda}(b^*-b)/{\epsilon}}\otimes a_n^*a_n(1-a_n^*a_n)\rho_{n}]\\
&+\Tr_{\cH_{A_n}}[e^{-i\tau\epsilon b^*b}\rho e^{i\tau\epsilon b^*b}\otimes(1-a_n^*a_n)\rho_{n}].
\end{align*}
Since $(1-a_n^*a_n)a_n^*a_n=0$,  we get
\begin{align*}
\cL(\rho)=& \ \Tr_{\cH_{A_n}}[e^{-{\lambda}(b^*-b)/{\epsilon}}e^{-i\tau\epsilon b^*b}e^{{\lambda}(b^*-b)/{\epsilon}}\rho
e^{-{\lambda}(b^*-b)/{\epsilon}}e^{i\tau\epsilon b^*b}e^{{\lambda}(b^*-b)/{\epsilon}}\otimes a^*_na_n\rho_{n}]\\
+& \ \Tr_{\cH_{A_n}}[e^{-i\tau\epsilon b^*b}\rho e^{i\tau\epsilon b^*b}\otimes(1-a_n^*a_n)\rho_{n}]\\
=& \ p \ e^{-{\lambda}(b^*-b)/{\epsilon}}e^{-i\tau\epsilon b^*b}e^{{\lambda}(b^*-b)/{\epsilon}}\rho
e^{-{\lambda}(b^*-b)/{\epsilon}}e^{i\tau\epsilon b^*b} e^{{\lambda}(b^*-b)/{\epsilon}}\\
+& \ (1-p) \ e^{-i\tau\epsilon b^*b}\rho e^{i\tau\epsilon b^*b} \ .
\end{align*}
\end{proof}

To calculate the expectation of the photon-number operator
$\widehat{N} = b^*b$ at $t=n\tau$,
we would need to find the action of the $n$-th power $\cL^n(\rho)$ of the operator (\ref{cL}). But in fact it is  easier to calculate
this mean value using the $n$-th power of the dual operator $\cL^{*}$, which for any bounded operator $B \in \cB(\cH_C)$ and
$\rho \in \mathcal{C}_{1}(\cH_C)$, is defined by relation
\begin{equation}\label{cL-dual-0}
\Tr_{\cH_{C}}(\cL^{*}(B) \rho): = \Tr_{\cH_{C}}( B \ \cL(\rho)) \ . \
\end{equation}

Using the property (\ref{commut-atoms}) one can obtain explicit an expression for the operator $\cL^{*}$.
Indeed, by (\ref{cL-dual-0}) and by (\ref{cL})
\begin{align}\label{cL-dual-1}
\Tr_{\cH_{C}}(B \ \cL(\rho))=& \Tr_{\cH_{C}\otimes\cH_{A_n}}\{ (B\otimes \idty) \ e^{- i\tau H_n}(\rho\otimes \idty)
(\idty \otimes \rho_n)e^{i\tau H_n}\} \\
=& \Tr_{\cH_{C}\otimes\cH_{A_n}}\{(\idty \otimes \rho_n) e^{-i\tau H_n} (B\otimes \idty) e^{i\tau H_n}(\rho\otimes \idty)\} \nonumber \\
=& \Tr_{\cH_{C}\otimes\cH_{A_n}}\{e^{i\tau H_n}(B \otimes \rho_n) \ e^{-i\tau H_n}(\rho\otimes \idty)\} \nonumber \ ,
\end{align}
where we used the cyclicity of the trace $\Tr_{\cH_{C}\otimes\cH_{A_n}}$ and the commutator (\ref{commut-atoms}). Then (\ref{cL-dual-1}) yields
\begin{equation}\label{cL-dual-2}
\cL^{*}(B)=\Tr_{\cH_{A_n}}\{e^{i\tau H_n}(B\otimes \rho_n)e^{-i\tau H_n}\} \ ,
\end{equation}
which according to (\ref{cL}), is independent of $n\geq 1$.

\begin{remark} \label{dual-operator} For density matrix $\rho \in \mathcal{C}_{1}(\cH_C)$ such that $\Tr_{\cH_{C}}( \mathcal{P}(b,b^*) \, \cL(\rho)) < \infty$
for any polynomial $\mathcal{P}(b,b^*)$, one can extendthe  definition (\ref{cL-dual-0}) of $\cL^{*}$ to this class of unbounded observables
$\mathfrak{A}(\cH_C)$. The advantage of using the dual operator $\cL^{*}$ is that its consecutive application does not increase the
degree of the polynomials generated by operators $b$ and $b^*$.
\end{remark}
Now following the line of reasoning of Lemma \ref{cL(rho)} we deduce from (\ref{cL-dual-2}) that
\begin{align}\label{dual cL}
\cL^{*}(B)= & \ pe^{-{\lambda}(b^*-b)/{\epsilon}}e^{i\tau\epsilon b^*b}e^{{\lambda}(b^*-b)/{\epsilon}}B
e^{-{\lambda}(b^*-b)/{\epsilon}}e^{-i\tau\epsilon b^*b} e^{{\lambda}(b^*-b)/{\epsilon}} \\
+ & \ (1-p)e^{-i\tau\epsilon b^*b}B e^{i\tau\epsilon b^*b} \ , \nonumber
\end{align}
for $p={\rm{Tr}}_{\cH_{A_n}} (a_n^*a_n \ \rho_n)$ independent of $n$.
\begin{lemma}\label{cL-T(B)}
For $B=b^*b$ and for the dual operator $\cL^{*}$ defined by (\ref{dual cL}) we obtain:
\begin{align}\label{L^n(b^*b)}
(\cL^{*})^n(b^*b)=& \ b^*b+p \ \frac{\lambda}{\epsilon}[(1-e^{ni\epsilon\tau})\ b^*+(1-e^{-ni\epsilon\tau})\ b] \\
+& \ n \, p (1-p)\ \frac{2\lambda^2}{\epsilon^2} (1-\cos\epsilon\tau)+
p^2 \ \frac{2\lambda^2}{\epsilon^2}(1-\cos n\epsilon\tau)  \nonumber \ .
\end{align}
\end{lemma}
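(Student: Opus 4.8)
The plan is to exploit the observation in Remark~\ref{dual-operator} that the dual map $\cL^{*}$ does not raise polynomial degree: concretely, $\cL^{*}$ maps the four-dimensional subspace $V:=\operatorname{span}_{\mathbb{C}}\{\idty,\,b,\,b^{*},\,b^{*}b\}\subset\mathfrak{A}(\cH_{C})$ into itself, so that $(\cL^{*})^{n}(b^{*}b)$ is obtained by iterating a linear recursion on $V$. Equivalently, one may package the whole argument as an induction on $n$ with the asserted identity as the inductive hypothesis.

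First I would compute $\cL^{*}$ on each of $\idty,b,b^{*},b^{*}b$ explicitly, starting from \eqref{cL-dual-2} and proceeding exactly as $\cL$ was computed in Lemma~\ref{cL(rho)}: split the trace over $\cH_{A_{n}}$ into the excited-atom sector of weight $p$, on which $e^{i\tau H_{n}}$ acts (up to an inessential scalar phase) as $e^{-\lambda(b^{*}-b)/\epsilon}e^{i\tau\epsilon b^{*}b}e^{\lambda(b^{*}-b)/\epsilon}$, and the ground-atom sector of weight $1-p$, on which it acts as the free rotation $e^{i\tau\epsilon b^{*}b}$. The only two elementary conjugations needed are the translation rule $e^{\pm\lambda(b^{*}-b)/\epsilon}\,b\,e^{\mp\lambda(b^{*}-b)/\epsilon}=b\mp\lambda/\epsilon$ (this is the identity $F(\nu)=F(0)-\nu$ used above, with $\nu=\lambda/\epsilon$), together with its adjoint for $b^{*}$, and the rotation rule $e^{i\tau\epsilon b^{*}b}\,b\,e^{-i\tau\epsilon b^{*}b}=e^{-i\epsilon\tau}b$, obtained by differentiating in $\tau$ and using $[b^{*}b,b]=-b$. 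Writing $c:=\lambda/\epsilon$ and $z:=e^{i\epsilon\tau}$ for brevity, these give
\[
\cL^{*}(\idty)=\idty,\qquad \cL^{*}(b)=z^{-1}b+pc(z^{-1}-1),\qquad \cL^{*}(b^{*})=z\,b^{*}+pc(z-1),
\]
\[
\cL^{*}(b^{*}b)=b^{*}b+pc(1-z)\,b^{*}+pc(1-z^{-1})\,b+2pc^{2}(1-\cos\epsilon\tau).
\]
In particular $V$ is $\cL^{*}$-invariant, and the last display is already the $n=1$ case of the asserted formula.

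Next I would write $(\cL^{*})^{n}(b^{*}b)=\alpha_{n}\idty+\beta_{n}b+\overline{\beta_{n}}\,b^{*}+\gamma_{n}b^{*}b$; self-adjointness of $(\cL^{*})^{n}(b^{*}b)$ (which holds since $\cL^{*}(X^{*})=\cL^{*}(X)^{*}$) forces the $b^{*}$-coefficient to be the conjugate of the $b$-coefficient and $\alpha_{n},\gamma_{n}\in\mathbb{R}$. Feeding this ansatz through the four identities above gives the recursions
\[
\gamma_{n+1}=\gamma_{n},\qquad \beta_{n+1}=z^{-1}\beta_{n}+pc(1-z^{-1}),
\]
\[
\alpha_{n+1}=\alpha_{n}+pc(z^{-1}-1)\beta_{n}+pc(z-1)\overline{\beta_{n}}+2pc^{2}(1-\cos\epsilon\tau),
\]
with $\alpha_{0}=\beta_{0}=0$, $\gamma_{0}=1$. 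Thus $\gamma_{n}\equiv1$; the geometric series yields $\beta_{n}=pc(1-z^{-1})\sum_{k=0}^{n-1}z^{-k}=pc(1-z^{-n})=p\tfrac{\lambda}{\epsilon}(1-e^{-ni\epsilon\tau})$; and inserting $\beta_{n},\overline{\beta_{n}}$ into the $\alpha$-recursion and using $z+z^{-1}=2\cos\epsilon\tau$ reduces it to
\[
\alpha_{n+1}-\alpha_{n}=2p(1-p)c^{2}(1-\cos\epsilon\tau)+2p^{2}c^{2}\bigl(\cos n\epsilon\tau-\cos(n+1)\epsilon\tau\bigr),
\]
whose partial sums telescope to $\alpha_{n}=2np(1-p)c^{2}(1-\cos\epsilon\tau)+2p^{2}c^{2}(1-\cos n\epsilon\tau)$. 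Reassembling $\alpha_{n},\beta_{n},\gamma_{n}$ then gives precisely the stated formula for $(\cL^{*})^{n}(b^{*}b)$.

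The steps requiring the most care are the final two. First, one must keep the free rotation on the ground-atom sector as the $e^{i\tau H_{n}}$-conjugation $e^{i\tau\epsilon b^{*}b}(\cdot)e^{-i\tau\epsilon b^{*}b}$ dictated by \eqref{cL-dual-2}; this is exactly what makes the $b$-coefficient of $\cL^{*}(b)$ the clean, $p$-independent phase $z^{-1}$, and hence what makes the recursion for $\beta_{n}$ solvable in closed form. Second, the final simplification of $\alpha_{n}$ hinges on the regrouping $pc^{2}(1-\cos\epsilon\tau)=p(1-p)c^{2}(1-\cos\epsilon\tau)+p^{2}c^{2}(1-\cos\epsilon\tau)$ together with the telescoping identity $\sum_{k=0}^{n-1}\bigl(\cos k\epsilon\tau-\cos(k+1)\epsilon\tau\bigr)=1-\cos n\epsilon\tau$; everything else is routine bookkeeping with the CCR.
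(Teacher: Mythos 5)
Your proposal is correct and takes essentially the same route as the paper: the paper likewise derives $\cL^{*}(b)$, $\cL^{*}(b^{*})$ and $\cL^{*}(b^{*}b)$ from the shift rule (\ref{shift b}) and the free rotation of $b,b^{*}$, and then establishes (\ref{L^n(b^*b)}) by induction on $n$. Your only variation is organizational — you solve the linear recursion for the coefficients on the span of $\idty,b,b^{*},b^{*}b$ (geometric series plus telescoping) instead of verifying the guessed closed form inductively — and the resulting coefficients agree with the paper's.
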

\begin{proof}
From (\ref{shift b}) one gets
\begin{equation*}
e^{{\lambda}(b^*-b)/{\epsilon}}b^{\#}e^{-{\lambda}(b^*-b)/{\epsilon}}=b^{\#}-{\lambda}/{\epsilon}
\end{equation*}
and so
\begin{equation*}
e^{{\lambda}(b^*-b)/{\epsilon}}b^*be^{-{\lambda}(b^*-b)/{\epsilon}}=(b^*-{\lambda}/{\epsilon})
(b-{\lambda}/{\epsilon}).
\end{equation*}
From the CCR properties of $b$ and $b^*$ we find that their evolution is
\begin{align*}
e^{i\tau\epsilon b^*b}be^{-i\tau\epsilon b^*b}&=e^{-i\tau\epsilon}b \\
e^{i\tau\epsilon b^*b}b^*e^{-i\tau\epsilon b^*b}&=e^{i\tau\epsilon}b^*.
\end{align*}
Therefore, making the shift to calculate the first term in (\ref{dual cL}) we get
\begin{eqnarray}\label{evolution-rho_shift}
&e^{-{\lambda}(b^*-b)/{\epsilon}}
(e^{i\tau\epsilon}b^*-{\lambda}/{\epsilon})(e^{-i\tau\epsilon}b-{\lambda}/{\epsilon})e^{{\lambda}(b^*-b)/{\epsilon}}\\
&=(e^{i\tau\epsilon}b^*-(1-e^{i\tau\epsilon}){\lambda}/{\epsilon})
(e^{-i\tau\epsilon}b-(1-e^{-i\tau\epsilon}){\lambda}/{\epsilon}).\nonumber
\end{eqnarray}
Hence, (\ref{dual cL}) and (\ref{evolution-rho_shift}) imply
\begin{align}\label{cL(b^*b)}
\cL^{*}(b^*b)=&pe^{-{\lambda}(b^*-b)/{\epsilon}}e^{i\tau\epsilon b^*b}e^{{\lambda}(b^*-b)/{\epsilon}}b^*b
e^{-{\lambda}(b^*-b)/{\epsilon}}e^{-i\tau\epsilon b^*b} e^{{\lambda}(b^*-b)/{\epsilon}}\\
&+ (1-p)e^{-i\tau\epsilon
b^*b}b^*b e^{i\tau\epsilon b^*b}\nonumber\\
=&p(e^{i\epsilon\tau}b^*-(1-e^{i\epsilon\tau}){\lambda}/{\epsilon})(e^{-i\epsilon\tau}b-(1-e^{-i\epsilon\tau})
{\lambda}/{\epsilon})+(1-p)b^*b\nonumber\\
=&b^*b+p\frac{\lambda}{\epsilon}(1-e^{i\epsilon\tau})b^*+p\frac{\lambda}{\epsilon}(1-e^{-i\epsilon\tau})b+
p\frac{2\lambda^2}{\epsilon^2}(1-\cos
\epsilon\tau) \ . \nonumber
\end{align}

If in (\ref{dual cL}) we put $B=b^*$, then one gets
\begin{equation}\label{cL(b^*)}
\cL^{*}(b^*)=e^{i\epsilon\tau}b^*-p(1-e^{i\epsilon\tau}){\lambda}/{\epsilon} \ ,
\end{equation}
and similarly, one obtains:
\begin{equation}\label{cL(b)}
\cL^{*}(b)=e^{-i\epsilon\tau}b-p(1-e^{-i\epsilon\tau}){\lambda}/{\epsilon} \ ,
\end{equation}
for $B=b$.

Now we are going to prove the $n$-th power formula (\ref{L^n(b^*b)}) by induction. Note that if we take $n=1$ in this formula we
get (\ref{cL(b^*b)}).

Suppose that (\ref{L^n(b^*b)}) is true for $n$ to check that it is also valid for $n+1$.
\begin{align*}
(\cL^{*})^{n+1}(b^*b)=&\cL^{*}((\cL^{*})^n(b^*b))\\
=&\cL^{*}(b^*b)+p\frac{\lambda}{\epsilon}(1-e^{ni\epsilon\tau})\cL^{*}(b^*)+
p\frac{\lambda}{\epsilon}(1-e^{-ni\epsilon\tau})\cL^{*}(b)\\
&+p\frac{2\lambda^2}{\epsilon^2}n(1-\cos\epsilon\tau)-p^2\frac{2\lambda^2}{\epsilon^2}n(1-\cos\epsilon\tau)+
p^2\frac{2\lambda^2}{\epsilon^2}(1-\cos n\epsilon\tau).
\end{align*}
Taking onto account (\ref{cL(b^*b)}), (\ref{cL(b^*)}) and (\ref{cL(b)}) we can express the action of operator $\cL^{*}$ as follows
\begin{align*}
(\cL^{*})^{n+1}(b^*b)=&b^*b+p\frac{\lambda}{\epsilon}(1-e^{i\epsilon\tau})b^*+p\frac{\lambda}
{\epsilon}(1-e^{-i\epsilon\tau})b+p\frac{2\lambda^2}{\epsilon^2}(1-\cos
\epsilon\tau)\\
&+p\frac{\lambda}{\epsilon}(1-e^{ni\epsilon\tau})e^{i\epsilon\tau}b^*-p^2\frac{\lambda^2}
{\epsilon^2}(1-e^{i\epsilon\tau})(1-e^{ni\epsilon\tau})\\
&+p\frac{\lambda}{\epsilon}(1-e^{-ni\epsilon\tau})e^{-i\epsilon\tau}b-p^2\frac{\lambda^2}
{\epsilon^2}(1-e^{-i\epsilon\tau})(1-e^{-ni\epsilon\tau})\\
&+p\frac{2\lambda^2}{\epsilon^2}n(1-\cos\epsilon\tau)-p^2\frac{2\lambda^2}{\epsilon^2}n
(1-\cos\epsilon\tau)+p^2\frac{2\lambda^2}{\epsilon^2}(1-\cos
n\epsilon\tau))\\
=&b^*b+p\frac{\lambda}{\epsilon}(1-e^{(n+1)i\epsilon\tau})b^*+p\frac{\lambda}{\epsilon}
(1-e^{-(n+1)i\epsilon\tau})b\\
&+p\frac{2\lambda^2}{\epsilon^2}(1-\cos\epsilon\tau)(n+1)-p^2\frac{\lambda^2}{\epsilon^2}(2-2\cos\epsilon\tau-2\cos n\epsilon\tau\\
&+2\cos(n+1)\epsilon\tau+2n-2n\cos\epsilon\tau-2-
2\cos n\epsilon\tau).
\end{align*}
Simplifying the last expression we get
\begin{align*}
(\cL^{*})^{n+1}(b^*b)=&b^*b+p\frac{\lambda}{\epsilon}(1-e^{(n+1)i\epsilon\tau})b^*+p\frac{\lambda}{\epsilon}
(1-e^{-(n+1)i\epsilon\tau})b\\
&+p\frac{2\lambda^2}{\epsilon^2}(1-\cos\epsilon\tau)(n+1)-p^2\frac{2\lambda^2}{\epsilon^2}(n+1)(1-\cos\epsilon\tau)\\
&+p^2\frac{2\lambda^2}{\epsilon^2}(1-\cos(n+1)\epsilon\tau),
\end{align*}
which proves (\ref{L^n(b^*b)}) formula.
\end{proof}

\begin{proof}(of Theorem \ref{N of photons}) To find the number of photons in the cavity at time $t=n\tau$ we take the
expectation in (\ref{L^n(b^*b)}). Since the initial state is gauge invariant we get
\begin{align*}
N(t)=&\Tr_{\cH_{C}}(b^*b\cL^n(\rho_C^\beta))=\Tr_{\cH_{C}}((\cL^{*})^n(b^*b)\rho_C^\beta)\\
=&\langle
b^*b\rangle+p\frac{2\lambda^2}{\epsilon^2}n(1-\cos\epsilon\tau)-p^2\frac{2\lambda^2}{\epsilon^2}n
(1-\cos\epsilon\tau)+p^2\frac{2\lambda^2}{\epsilon^2}(1-\cos
n\epsilon\tau)\\
=&N(0)+
n \, p (1-p)\ \frac{2\lambda^2}{\epsilon^2} \ (1-\cos\epsilon\tau) + p^2\frac{2\lambda^2}{\epsilon^2}(1-\cos
n\epsilon\tau) \ .
\end{align*}

\end{proof}

\begin{remark} Theorem implies that only flux of randomly exited atoms (i.e. $0<p<1$) is able to produce
a pumping of the cavity by photons. Since it is equivalent to the unlimited increasing of the cavity energy, the passage of
atoms may produce energy and entropy. We return to these observations below... .
\end{remark}

\section{Number of photons in leaking cavity}
\label{sec:Nonequil_leaking}

In the Hamiltonian case the number of photons in the cavity increases in time. To stabilize the energy in the cavity we
consider the case when the generator of the dynamics (\ref{Generator}) has a nonzero dissipative part, which describes the
leaking of photons out of the cavity. So suppose that $\sigma>0$.

The next theorem shows that the number of photons in the leaking cavity stabilizes in time for any
$\sigma > 0$. It is obviously different to the case $\sigma=0$, see  Theorem \ref{N of photons}.

\begin{theorem}\label{number of photons-sigma} For any $\sigma > 0$ and arbitrary gauge-invariant initial cavity state
$\rho_C$ such that
\begin{equation}\label{int-cond-photons-sigma}
\omega_{C,\sigma}^{t}(b^*b)\mid_{t=0} = {\rm{Tr}}_{\cH_{C}}(b^*b \ \rho_C) < \infty \ ,
\end{equation}
the number of photos in the cavity at time $t=\n\tau$ is
\begin{align}\label{mean-value}
&N_\sigma(n\tau)=\omega_{C,\sigma}^{\n\tau}(b^*b)=e^{-n(\sigma_--\sigma_+)\tau}N_\sigma(0)\\
&+p(1-p)\frac{2\lambda^2}{|\mu|^2}(1-e^{-(\sigma_{-} - \sigma_{+})\tau/2}\cos \epsilon\tau)
\frac{1-e^{-n(\sigma_--\sigma_+)\tau}}{1-e^{-(\sigma_--\sigma_+)\tau}}\nonumber\\
&+p^2\frac{2\lambda^2}{|\mu|^2}(1-e^{-n(\sigma_{-} - \sigma_{+})\tau/2}\cos n\epsilon\tau)
+\frac{\sigma_+}{\sigma_--\sigma_+}(1-e^{-n(\sigma_--\sigma_+)\tau})\nonumber \ .
\end{align}
Also one gets for the limit of the cavity photon-number mean value
\begin{eqnarray}\label{lim-photons-number-sigma}
&&N_{\sigma}^{\infty} =\omega_{C,\sigma}(b^*b) := \lim_{t\rightarrow\infty} \omega_{C}^{t}(b^*b) = \\
&& =  \frac{4\lambda^2}{4\epsilon^2+\sigma^2}   \frac{p}{1-e^{-\sigma\tau}}\{{1+e^{-\sigma\tau}(1 - 2p) -2 e^{-\sigma \tau/2}(1-p)
\cos\epsilon\tau}\}
\ .\nonumber
\end{eqnarray}
Here  $p:={\rm{Tr}}_{\cH_{A_n}}(a_n^*a_n \ \rho_{n})$.
\end{theorem}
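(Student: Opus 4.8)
The plan is to repeat, for $\sigma>0$, the strategy that proved Theorem~\ref{N of photons} for the perfect cavity: reduce everything to the one-step dual channel $\cL_\sigma^{*}$ and compute its iterates on the (unbounded) observable $b^{*}b$, the new feature being that the dissipative part replaces the pure phase $e^{\pm i\epsilon\tau}$ by a complex damping factor. First I would unpack the discrete evolution $\cL_\sigma$ of Definition~\ref{cL-def-sigma}. Writing the gauge-invariant atomic state as $\rho_n=(1-p)P_0+pP_1$ with $P_1=a_n^*a_n$ and $P_0=\idty-a_n^*a_n$, and using that $H_n$ commutes with $a_n^*a_n$ while the dissipator acts only on $\cH_C$, one checks that $L_{\sigma,n}$ leaves each atomic sector invariant: $L_{\sigma,n}(X\otimes P_j)=(\Phi_j(X))\otimes P_j$, where $\Phi_0$ is the cavity Lindbladian with Hamiltonian $\epsilon b^*b$ and $\Phi_1$ the one with Hamiltonian $\epsilon b^*b+\lambda(b^*+b)$ (the $E$-term being a c-number), both carrying the same dissipative part. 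Hence $\cL_\sigma=(1-p)e^{\tau\Phi_0}+p\,e^{\tau\Phi_1}$ and, dualizing with respect to the trace on $\cH_C$, $\cL_\sigma^{*}=(1-p)e^{\tau\Phi_0^{*}}+p\,e^{\tau\Phi_1^{*}}$, the $\sigma>0$ analogue of (\ref{dual cL}). As in Remark~\ref{dual-operator}, passing to the dual is exactly what makes the unbounded calculation legitimate, the hypothesis (\ref{int-cond-photons-sigma}) ensuring that all expectations below stay finite.

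Next I would compute $\Phi_j^{*}$ on $b$, $b^{*}$, $b^{*}b$. The key observation is that $\Phi_j^{*}$ is at most quadratic, so it leaves the four-dimensional space $\mathrm{span}\{\idty,b,b^{*},b^{*}b\}$ invariant — the would-be cubic and quartic terms produced by the dissipator cancel, just as in the $\sigma=0$ computation behind (\ref{cL(b^*b)}). A short computation with the CCR gives, in each sector, $\Phi_j^{*}(b)=-\mu\,b+c_j$, $\Phi_j^{*}(b^{*})=-\overline{\mu}\,b^{*}+\overline{c_j}$ and $\Phi_j^{*}(b^{*}b)=-(\sigma_--\sigma_+)b^{*}b+(\text{linear in }b,b^{*})+\sigma_+$, where $\mu=i\epsilon+\tfrac12(\sigma_--\sigma_+)$, $c_1=-i\lambda/\mu$ and $c_0=0$; note $|\mu|^2=\epsilon^2+\tfrac14(\sigma_--\sigma_+)^2$, which accounts for the $|\mu|^2$, resp.\ $4\epsilon^2+\sigma^2$, in the statement. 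Solving these constant-coefficient linear ODEs over the passage time $\tau$ and averaging with weights $1-p,\,p$ produces closed forms for $\cL_\sigma^{*}(b)$, $\cL_\sigma^{*}(b^{*})$ and $\cL_\sigma^{*}(b^{*}b)$, the direct analogues of (\ref{cL(b^*b)})--(\ref{cL(b)}).

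Then comes the iteration. Since $\cL_\sigma^{*}$ sends $b\mapsto(\text{scalar})b+\text{scalar}$, $b^{*}\mapsto(\text{scalar})b^{*}+\text{scalar}$ and $b^{*}b\mapsto e^{-(\sigma_--\sigma_+)\tau}b^{*}b+(\text{linear in }b,b^{*})+\text{scalar}$, the composite $(\cL_\sigma^{*})^{n}(b^{*}b)$ is a sum of finite geometric series, which I would evaluate by induction on $n$ exactly as in Lemma~\ref{cL-T(B)}. Taking the expectation in the gauge-invariant cavity state then kills the $\langle b\rangle=\langle b^{*}\rangle=0$ contributions and yields (\ref{mean-value}), whose four terms are the decaying memory of $N_\sigma(0)$, the $p(1-p)$ pumping term (now geometrically damped rather than linear in $n$), the $p^{2}$ oscillatory term, and the reservoir term $\tfrac{\sigma_+}{\sigma_--\sigma_+}(1-e^{-n(\sigma_--\sigma_+)\tau})$ coming from the $\sigma_+$ in $\Phi_j^{*}(b^{*}b)$. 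For the infinite-time limit, $\sigma>0$ forces $\sigma_->\sigma_+$, so every $e^{-n(\sigma_--\sigma_+)\tau}$ and $e^{-n(\sigma_--\sigma_+)\tau/2}\cos n\epsilon\tau$ vanishes and $\frac{1-e^{-n(\sigma_--\sigma_+)\tau}}{1-e^{-(\sigma_--\sigma_+)\tau}}\to\frac{1}{1-e^{-(\sigma_--\sigma_+)\tau}}$; a trigonometric-cum-geometric simplification then collapses the survivors into (\ref{lim-photons-number-sigma}). Passing from $t=n\tau$ to general $t=n\tau+\nu$ is immediate from (\ref{cavity state at t-sigma}), since $e^{\nu L_{\sigma,n+1}^{*}}$ is a bounded perturbation that does not change the limit.

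The step I expect to be the real obstacle is the iteration itself: with $\sigma>0$ the rotation $\mu$ is genuinely complex, so — unlike in the $\sigma=0$ case — the sums $\sum_k e^{-k\mu\tau}$ do not telescope, and reducing $(\cL_\sigma^{*})^{n}(b^{*}b)$, and afterwards its $n\to\infty$ limit, to the compact forms (\ref{mean-value}) and (\ref{lim-photons-number-sigma}) demands careful control of the interference between the $e^{\pm i\epsilon\tau}$ phases and the $e^{-(\sigma_--\sigma_+)\tau/2}$ damping. A lesser point is the need to manipulate the unbounded $b^{*}b$ under $\cL_\sigma^{*}$, which is taken care of by (\ref{int-cond-photons-sigma}) together with the fact that, as in Remark~\ref{dual-operator}, $\cL_\sigma^{*}$ does not raise polynomial degree.
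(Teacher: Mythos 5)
Your proposal is correct and follows essentially the same route as the paper: split the one-step map over the atomic sectors $a_n^*a_n$ and $\idty-a_n^*a_n$ into a $p/(1-p)$ mixture of two one-mode Lindblad semigroups (the paper realizes this via the unitary shift $\widehat{S}_n$, giving $L^C_\lambda$ with displaced jump operators, which is the same thing as your displaced-Hamiltonian $\Phi_1$), compute the dual action on the invariant span of $\idty,b,b^*,b^*b$ by solving the constant-coefficient ODEs, iterate by induction to get $(\cL_\sigma^{*})^n(b^*b)$, and finish by taking the gauge-invariant expectation and the $n\to\infty$ limit. The only cosmetic discrepancies (the generator constant is $-i\lambda$ rather than $-i\lambda/\mu$, which appears only after exponentiation, and the $\sigma_+$ term, absent from the generator (\ref{Generator}) and vanishing in the paper's case) do not affect the argument.
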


\begin{lemma}\label{L-sigma}
For any state $\rho$ on algebra $\mathfrak{A}(\cH_C)$ one has:
\begin{equation*}
\cL_{\sigma}(\rho)=pe^{-{\lambda}(b^*-b)/{\epsilon}}e^{\tau L^C_\lambda}(e^{{\lambda}(b^*-b)/{\epsilon}}\rho
e^{-{\lambda}(b^*-b)/{\epsilon}})e^{{\lambda}(b^*-b)/{\epsilon}} +(1-p)e^{\tau L^C_0}(\rho),
\end{equation*}
where $L_\lambda^C$ acts on $\mathfrak{A}(\cH_C)$ as follows
\begin{equation}\label{L^C_lambda}
L_\lambda^C(\rho):= -i[\epsilon b^*b, \rho]+\sigma (b-{\lambda}/{\epsilon})\rho (b^*-{\lambda}/{\epsilon})-\frac{\sigma
}{2}\{(b^*-{\lambda}/{\epsilon})(b-{\lambda}/{\epsilon}), \rho\}.
\end{equation}
\end{lemma}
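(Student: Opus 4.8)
\medskip
\noindent\textbf{Proof proposal.} The plan is to imitate the proof of Lemma~\ref{cL(rho)} for the perfect cavity: conjugate the leaking dynamics by the unitary shift $\widehat{S}_n$ of (\ref{Vn}), which diagonalizes the Hamiltonian $H_n$, the only new ingredient being that one must also track how the dissipative part $\rho\mapsto\sigma b\rho b^*-\tfrac{\sigma}{2}\{b^*b,\rho\}$ of the generator transforms under this conjugation.

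First I would write, exactly as in (\ref{cL(rho)1}),
\[
\cL_\sigma(\rho)=\Tr_{\cH_{A_n}}\bigl[\widehat{S}_n^{-1}\bigl(e^{\tau\,\widehat{S}_n L_{\sigma,n}\widehat{S}_n^{-1}}\bigl(\widehat{S}_n(\rho\otimes\rho_n)\bigr)\bigr)\bigr],
\]
using that $\widehat{S}_n$ is a $\ast$-automorphism, so that the conjugated semigroup is generated by $\widehat{S}_n L_{\sigma,n}\widehat{S}_n^{-1}$. Because $\widehat{S}_n$ is an algebra homomorphism, this conjugated generator is again of Lindblad--Kossakowski form, now with Hamiltonian $\widehat{S}_n(H_n)=\epsilon b^*b+(E-\lambda^2/\epsilon)a_n^*a_n$ (by (\ref{Diagonalized Hamiltonian})) and with $b$ replaced by $\widehat{S}_n(b)=b-(\lambda/\epsilon)a_n^*a_n$ in the jump terms (by (\ref{tb})).

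Next I would split $\rho_n=a_n^*a_n\,\rho_n+(1-a_n^*a_n)\,\rho_n$, which is allowed by the gauge invariance $[\rho_n,a_n^*a_n]=0$ assumed in Assumption~\ref{Init-State-remark}, and use (\ref{V_n}) to write $\widehat{S}_n(\rho\otimes\rho_n)$ as the sum of one term carried by the range of the projection $a_n^*a_n$ and one carried by its complement. On these two eigenspaces the operator $b-(\lambda/\epsilon)a_n^*a_n$ reduces to $b-\lambda/\epsilon$, respectively $b$, and — since $a_n^*a_n$ is idempotent and commutes both with $\rho_n$ and with all cavity operators — the conjugated generator leaves the atomic factor invariant and acts on the cavity factor precisely as $L^C_\lambda$, respectively $L^C_0$ (the additive atomic term $(E-\lambda^2/\epsilon)a_n^*a_n$ in $\widehat{S}_n(H_n)$ is central and therefore drops out of the commutator). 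Hence $e^{\tau\,\widehat{S}_n L_{\sigma,n}\widehat{S}_n^{-1}}$ acts as $e^{\tau L^C_\lambda}\otimes\id$ on the first piece and as $e^{\tau L^C_0}\otimes\id$ on the second. Applying $\widehat{S}_n^{-1}$ then reinstates the conjugations by $e^{\mp\lambda(b^*-b)/\epsilon}$ on the first piece and does nothing to the second, and finally $\Tr_{\cH_{A_n}}$ produces the weights $p=\Tr_{\cH_{A_n}}(a_n^*a_n\rho_n)$ and $1-p$, which is the asserted identity.

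The step I expect to be the main obstacle is the bookkeeping in the middle: one must check carefully that conjugating the dissipator by $\widehat{S}_n$ and restricting to the ranges of $a_n^*a_n$ and of $1-a_n^*a_n$ genuinely gives $L^C_\lambda$, respectively $L^C_0$, acting only on the cavity factor — i.e.\ that the subspaces spanned by operators of the form $Y\otimes a_n^*a_n\rho_n$ and $Y\otimes(1-a_n^*a_n)\rho_n$, with $Y$ ranging over cavity operators, are invariant under $\widehat{S}_n L_{\sigma,n}\widehat{S}_n^{-1}$, and that the various cross terms carrying factors of $a_n^*a_n$ collapse correctly thanks to $(a_n^*a_n)^2=a_n^*a_n$ and $[\rho_n,a_n^*a_n]=0$. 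Once this is in place, the remaining partial-trace computation is identical to the one in Lemma~\ref{cL(rho)}.
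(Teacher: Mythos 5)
Your proposal is correct and follows essentially the same route as the paper's proof: conjugation of the generator by the unitary shift $\widehat{S}_n$, decomposition of $\widehat{S}_n(\rho\otimes\rho_n)$ along the sectors of the projection $a_n^*a_n$ (using $[\rho_n,a_n^*a_n]=0$), the observation that the transformed generator acts block-diagonally as $L^C_\lambda$ and $L^C_0$ on the cavity factor, and then undoing the shift and taking the partial trace to produce the weights $p$ and $1-p$. The "bookkeeping" step you flag as the main obstacle is exactly the explicit computation of $\widehat{S}_n L_{\sigma,n}\widehat{S}_n^{*}$ carried out in the paper, and it goes through just as you describe.
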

\begin{proof} Using unitary transformation generated by (\ref{V}) of the Hamiltonian, we define instead of (\ref{Generator-KL}):
\begin{align} \label{tilde_L}
&\widetilde{L}_{\sigma,n}(\rho\otimes\rho_{n}):= \widehat{S}_{n}L_{\sigma,n}\widehat{S}_{n}^*(\rho\otimes\rho_{n}) \\
=&-i[\widetilde{H}_n, \rho\otimes\rho_{n}]+\sigma \widehat{S}_{n}(b)(\rho\otimes\rho_{n})\widehat{S}_{n}(b^*)-\frac{\sigma
}{2}\{\widehat{S}_{n}(b^*b), \rho\otimes\rho_{n}\} \nonumber \\
=&-i[\epsilon b^*b+(E-\frac{\lambda^2}{\epsilon})a_n^*a_n, \rho\otimes\rho_{n}]+ [\sigma (b-{\lambda}/{\epsilon})\rho
(b^*-{\lambda}/{\epsilon}) \nonumber \\
&-\frac{\sigma }{2}\{(b^*-{\lambda}/{\epsilon})(b-{\lambda}/{\epsilon}), \rho\}]\otimes a_n^*a_n\rho_{n}
+(\sigma b^*\rho b-\frac{\sigma}{2}\{b^*b, \rho\})\otimes(1-a_n^*a_n)\rho_{n} \nonumber \\
=& L_\lambda^C(\rho)\otimes a_n^*a_n\rho_{A_n}+L^C_0(\rho)\otimes(1-a_n^*a_n)\rho_{n} \nonumber \ ,
\end{align}
here $L_0^C := L_{\lambda=0}^C$, see (\ref{L^C_lambda}). Then discrete evolution operator (\ref{cL-sigma}) gets the form

\begin{align*}
\cL_{\sigma}(\rho)=&\Tr_{A_n} (e^{\tau
L_n}(\rho\otimes\rho_{n}))=\Tr_{A_n}(\widehat{S}_{n}^*e^{\tau\widetilde{L}_n}(\widehat{S}_{n}(\rho\otimes\rho_{n})))\\
=&\Tr_{A_n}(\widehat{S}_{n}^*e^{\tau\widetilde{L}_n}(e^{{\lambda}(b^*-b)/{\epsilon}}\rho e^{-{\lambda}(b^*-b)/{\epsilon}}\otimes
a^*_na_n\rho_{n}+\rho\otimes(1-a_n^*a_n)\rho_{n}))\\
=&\Tr_{A_n}(\widehat{S}_{n}^*(e^{\tau L^C_\lambda}(e^{{\lambda}(b^*-b)/{\epsilon}}\rho e^{-{\lambda}(b^*-b)/{\epsilon}})\otimes
a^*_na_n\rho_{A_n}))\\
&+\Tr_{A_n}(\widehat{S}_{n}^*(e^{\tau L^C_0}(\rho)\otimes(1-a_n^*a_n)\rho_{n})).\\
\end{align*}

Simplifying the last expression 
\begin{align*}
\cL_{\sigma}(\rho)=&\Tr_{A_n}(e^{-{\lambda}(b^*-b)/{\epsilon}}e^{\tau L^C_\lambda}(e^{{\lambda}(b^*-b)/{\epsilon}}\rho
e^{-{\lambda}(b^*-b)/{\epsilon}})e^{{\lambda}(b^*-b)/{\epsilon}}\otimes a^*_na_n\rho_{n})\\
&+\Tr_{A_n}(e^{\tau
L^C_0}(\rho)\otimes(1-a_n^*a_n)\rho_{n})\\
=&pe^{-{\lambda}(b^*-b)/{\epsilon}}e^{\tau L^C_\lambda}(e^{{\lambda}(b^*-b)/{\epsilon}}\rho
e^{-{\lambda}(b^*-b)/{\epsilon}})e^{{\lambda}(b^*-b)/{\epsilon}} +(1-p)e^{\tau L^C_0}(\rho).
\end{align*}
\end{proof}

Now we look for the corresponding dual operator $\cL_{\sigma}^{*}$. It can be calculated using transformation (\ref{tilde_L})
and and the dual operator (\ref{cL-dual-2}). Suppose that $\rho_n\in \mathcal{C}_{1}(\cH_{A_n})$ is such that it commutes with $a_n^*a_n$,
$\Tr_{A_n}(\rho_n)=1$  and $p=\Tr_{A_n}(a^*_na_n \rho_n)$. Then for any bounded operator $B\in\cB(\cH_C)$ one has:
\begin{align}\label{cL(B)}
\cL_{\sigma}^{*}(B)=&\Tr_{A_n} (e^{\tau L_n^{*}}(B\otimes \rho_n))  \\
=&pe^{-{\lambda}(b^*-b)/{\epsilon}}e^{\tau(L^C_\lambda)^{*}}e^{{\lambda}(b^*-b)/{\epsilon}}B
e^{-{\lambda}(b^*-b)/{\epsilon}}e^{{\lambda}(b^*-b)/{\epsilon}} +(1-p)e^{\tau(L^C_0)^{*}}(B) \nonumber .
\end{align}

Here the dual operator acts as follows
\begin{align}
(L^C_\lambda)^{*}(B)&=i[\epsilon b^*b, B]+\sigma (b^*-{\lambda}/{\epsilon})B(b-{\lambda}/{\epsilon})-\frac{\sigma
}{2}\{(b^*-{\lambda}/{\epsilon})(b-{\lambda}/{\epsilon}), B\}\label{L^C_lambda(B)}\\
(L^C_0)^{*}(B)&=(\widetilde{L}_C^0)^{*}(B)=i[\epsilon b^*b, B]+\sigma b^*Bb-\frac{\sigma }{2}\{b^*b, B\}\label{L^C_0(B)}
\end{align}
\begin{remark}\label{dual-operator-sigma}
As we indicated in Remark \ref{dual-operator}, one can extend dual operator $\cL_{\sigma}^{*}$ to the algebra of
polynomial observables $\mathcal{P}(b,b^*) \in \mathfrak{A}(\cH_C)$.
\end{remark}

To proof Theorem \ref{number of photons-sigma} similarly to the case $\sigma=0$, the number of photons for the time $t=n\tau$ can be calculated using the dual operator
\begin{equation}\label{n-number-sigma}
N_{\sigma}^{t=n\tau} := \omega_{C,\sigma}^t(b^*b)=\Tr_{\cH_C}(b^*b  \ \cL_{\sigma}^n(\rho_C)) = \Tr_{\cH_C}((\cL_{\sigma}^{*})^n(b^*b)\ \rho_C) \ ,
\end{equation}
where $\rho_C$ is a gauge-invariant state of the cavity, see Remark \ref{dual-operator-sigma}.

We look more closely of the dual operator $\cL_{\sigma}^{*}$. Note that if one takes $B=b^*b$ in (\ref{cL(B)}) we get
\begin{equation}\label{cL^t(b^*b)+s}
\cL_{\sigma}^{*}(b^*b)=pe^{-{\lambda}(b^*-b)/{\epsilon}}e^{\tau(L^C_\lambda)^{*}}((b^*-{\lambda}/{\epsilon})(b-\frac{\lambda}
{\epsilon}))e^{{\lambda}(b^*-b)/{\epsilon}}
+(1-p)e^{\tau(L^C_0)^{*}}(b^*b).
\end{equation}

\begin{lemma}\label{number-photon-sigma}
The action of the dual operator $\cL_{\sigma}^{*}$ on the number operator of photons can be calculated
explicitly
\begin{align}\label{action on b^*b}
\cL_{\sigma}^{*}(b^*b)=&e^{-\sigma\tau}b^*b+p\frac{i\lambda}{\mu}e^{-\sigma\tau}(1-e^{\mu\tau})b^*-p\frac{i\lambda}
{\bar{\mu}} e^{-\sigma\tau}(1-e^{\bar{\mu}\tau})b\\
&+p\frac{\lambda^2}{|\mu|^2}e^{-\sigma\tau}(1-e^{\mu\tau})
(1-e^{\bar{\mu}\tau}).\nonumber
\end{align}
\end{lemma}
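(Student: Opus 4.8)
The plan is to reduce the claim to an explicit integration of two one-mode Kossakowski--Lindblad semigroups on the cavity. By the identity (\ref{cL^t(b^*b)+s}) already derived from Lemma~\ref{L-sigma},
\begin{equation*}
\cL_{\sigma}^{*}(b^*b)=p\, e^{-\frac{\lambda}{\epsilon}(b^*-b)}\, e^{\tau(L^C_\lambda)^{*}}\!\bigl((b^*-\tfrac{\lambda}{\epsilon})(b-\tfrac{\lambda}{\epsilon})\bigr)\, e^{\frac{\lambda}{\epsilon}(b^*-b)}+(1-p)\, e^{\tau(L^C_0)^{*}}(b^*b),
\end{equation*}
so it suffices to evaluate $e^{\tau(L^C_\lambda)^{*}}$ and $e^{\tau(L^C_0)^{*}}$ on the monomials $\idty$, $b$, $b^*$, $b^*b$, and then apply the two displacement conjugations which, by (\ref{shift b}), amount to the shifts $b\mapsto b-\lambda/\epsilon$ (inside) and $b\mapsto b+\lambda/\epsilon$ (outside), with the identical shift on $b^*$.

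The first substantive step is to show that $(L^C_\lambda)^{*}$ leaves the four-dimensional space $\operatorname{span}\{\idty,b,b^*,b^*b\}$ invariant. Using the explicit forms (\ref{L^C_lambda(B)}), (\ref{L^C_0(B)}) together with $[b,b^*]=1$, a direct normal-ordering computation gives
\begin{align*}
(L^C_\lambda)^{*}(\idty)&=0, & (L^C_\lambda)^{*}(b)&=\bar\mu\, b+\tfrac{\sigma\lambda}{2\epsilon}\idty,\\
(L^C_\lambda)^{*}(b^*)&=\mu\, b^*+\tfrac{\sigma\lambda}{2\epsilon}\idty, & (L^C_\lambda)^{*}(b^*b)&=-\sigma\, b^*b+\tfrac{\sigma\lambda}{2\epsilon}(b+b^*),
\end{align*}
where $\mu=i\epsilon-\sigma/2$, so that $\bar\mu=-i\epsilon-\sigma/2$, $\mu+\bar\mu=-\sigma$ and $\mu\bar\mu=|\mu|^2=\epsilon^2+\sigma^2/4$. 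The crucial point is that all cubic and quartic monomials produced by the dissipator cancel after using relations such as $b^{*2}b^2=(b^*b)^2-b^*b$; this is precisely what makes the finite-dimensional reduction possible. Setting $\lambda=0$ gives the corresponding relations for $(L^C_0)^{*}$, in particular $(L^C_0)^{*}(b^*b)=-\sigma b^*b$.

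Next I would integrate this linear, triangular system: $\idty$ is fixed, $b^*b$ decays at rate $-\sigma$ and drives $b$ and $b^*$, which relax at rates $\bar\mu$ and $\mu$. Solving by variation of constants, and using $\bar\mu+\sigma=-\mu$, $\mu+\sigma=-\bar\mu$ to evaluate the integrals, yields $e^{\tau(L^C_0)^{*}}(b^*b)=e^{-\sigma\tau}b^*b$, $e^{\tau(L^C_\lambda)^{*}}(b)=e^{\bar\mu\tau}b-\tfrac{\sigma\lambda}{2\epsilon\bar\mu}(1-e^{\bar\mu\tau})\idty$, and likewise for $b^*$ and $b^*b$. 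Then expand $(b^*-\lambda/\epsilon)(b-\lambda/\epsilon)=b^*b-\tfrac{\lambda}{\epsilon}(b^*+b)+\tfrac{\lambda^2}{\epsilon^2}\idty$, apply $e^{\tau(L^C_\lambda)^{*}}$ termwise, perform the outer shift $b\mapsto b+\lambda/\epsilon$, multiply by $p$, add $(1-p)e^{-\sigma\tau}b^*b$, and collect coefficients; the algebraic simplification that produces the factors $\tfrac{i\lambda}{\mu}$, $\tfrac{i\lambda}{\bar\mu}$ and $\tfrac{\lambda^2}{|\mu|^2}$ is $2\mu+\sigma=2i\epsilon$ together with its conjugate, i.e.\ $\tfrac{\lambda}{\epsilon}\bigl(1+\tfrac{\sigma}{2\mu}\bigr)=\tfrac{i\lambda}{\mu}$. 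This delivers (\ref{action on b^*b}).

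The main obstacle is the second step: one must carefully verify that the CCR-generated cubic and quartic terms in $(L^C_\lambda)^{*}(b^*b)$, as well as the unwanted quadratic and cubic terms in $(L^C_\lambda)^{*}(b)$, all cancel, so that the evolution closes on a four-dimensional space; once this is secured, everything else is bookkeeping of complex exponentials. As a consistency check, letting $\sigma\to 0$ must reproduce the formula (\ref{cL(b^*b)}) for $\cL^{*}(b^*b)$ from the perfect-cavity case, and the same method applied to $B=b$ and $B=b^*$ produces the analogues of (\ref{cL(b^*)}), (\ref{cL(b)}) needed afterwards for the computation of $(\cL_\sigma^{*})^{n}(b^*b)$ in Theorem~\ref{number of photons-sigma}.
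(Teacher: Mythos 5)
Your proposal is correct and follows essentially the same route as the paper: the paper likewise observes that $(L^C_\lambda)^{*}$ closes on the span of $\idty$, $b$, $b^*$ and the (shifted) number operator, integrates the resulting triangular ODE system by variation of constants, applies the displacement conjugations, and collects terms. Just beware that the paper fixes $\mu=\tfrac{\sigma}{2}+i\epsilon$ (so the relevant identity is $2\mu-\sigma=2i\epsilon$), whereas your $\mu=i\epsilon-\tfrac{\sigma}{2}$ is the negative conjugate of this, so the convention must be translated to land exactly on the stated formula (\ref{action on b^*b}).
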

\begin{proof} Look on the first term in (\ref{cL^t(b^*b)+s}). Let $\gamma_{\lambda,\tau}(B)=e^{\tau(L^C_\lambda)^{*}}(B)$, for
$B\in \mathfrak{A}(\mathcal{H}_C)$. Then to find
$\gamma_{\lambda,\tau}((b^*-{\lambda}/{\epsilon})(b-{\lambda}/{\epsilon}))$ we first note that
\begin{equation*}
(L^C_\lambda)^{*}((b^*-{\lambda}/{\epsilon})(b-{\lambda}/{\epsilon}))=i\lambda b-i\lambda
b^*-\sigma(b^*-{\lambda}/{\epsilon})(b-{\lambda}/{\epsilon})
\end{equation*}
and
\begin{align*}
(L^C_\lambda)^{*}(b)&=-i\epsilon b-\frac{\sigma}{2}b+\frac{\sigma \lambda}{2\epsilon}=-\mu b+\frac{\sigma \lambda}{2\epsilon}
\end{align*}
and
\begin{align*}
(L^C_\lambda)^{*}(b^*)&=i\epsilon b^*-\frac{\sigma}{2}b^*+\frac{\sigma \lambda}{2\epsilon}=-\overline{\mu}b^*+\frac{\sigma
\lambda}{2\epsilon},
\end{align*}
where $\mu=\frac{\sigma}{2}+i\epsilon.$

Therefore we have the following system of differential equations for $\gamma_{\lambda,\tau}$
\begin{align*}
&\frac{d\gamma_{\lambda,\tau}((b^*-{\lambda}/{\epsilon})(b-{\lambda}/{\epsilon})}{d\tau}=-\sigma\gamma_{\lambda,\tau}
((b^*-{\lambda}/{\epsilon})(b-{\lambda}/{\epsilon}))+i\lambda\gamma_{\lambda,\tau}(b)-i\lambda\gamma_{\lambda,\tau}(b^*)\\
&\frac{d\gamma_{\lambda,\tau}(b)}{d\tau}=-\mu\gamma_{\lambda,\tau}(b)+\frac{\lambda\sigma}{2\epsilon}\\
&\frac{d\gamma_{\lambda,\tau}(b^*)}{d\tau}=-\overline{\mu}\gamma_{\lambda,\tau}(b^*)+\frac{\lambda\sigma}{2\epsilon}.
\end{align*}

The solution to this system is
\begin{align}
&\gamma_{\lambda,\tau}(b)=e^{-\mu\tau}(b-\frac{\lambda\sigma}{2\epsilon\mu})+\frac{\lambda\sigma}{2\epsilon\mu}=
e^{-\mu\tau}b+\frac{\lambda\sigma}{2\epsilon\mu}(1-e^{-\mu\tau})\label{gamma_b}\\
&\gamma_{\lambda,\tau}(b^*)=e^{-\overline{\mu}\tau}(b^*-\frac{\lambda\sigma}{2\epsilon\overline{\mu}})+\frac{\lambda\sigma}
{2\epsilon\overline{\mu}}=e^{-\overline{\mu}\tau}b^*+\frac{\lambda\sigma}{2\epsilon\overline{\mu}}(1-e^{-\overline{\mu}\tau})
\label{gamma_b^*}\\
&\gamma_{\lambda,\tau}((b^*-{\lambda}/{\epsilon})(b-{\lambda}/{\epsilon}))=e^{-\sigma\tau}b^*b-\frac{\lambda}
{\epsilon}b^*e^{-\sigma\tau}(\frac{i\epsilon}{\mu}e^{\mu\tau}-\frac{i\epsilon}{\mu}+1)\nonumber\\
&+{\lambda}/{\epsilon}be^{-\sigma\tau}(\frac{i\epsilon}{\overline{\mu}}e^{\overline{\mu}\tau}-\frac{i\epsilon}
{\overline{\mu}}-1)+\frac{\lambda^2}{|\mu|^2}(1-e^{-\sigma\tau})+\frac{\lambda^2}{\epsilon^2}e^{-\sigma\tau}-
\frac{\lambda^2\sigma\sin\epsilon\tau}{\epsilon|\mu|^2}e^{-\frac{\sigma}{2}\tau}\nonumber.
\end{align}

Making the shift of $b$ and $b^*$ we get
\begin{align*}
&e^{-{\lambda}(b^*-b)/{\epsilon}}\gamma_{\lambda,\tau}((b^*-{\lambda}/{\epsilon})(b-{\lambda}/{\epsilon}))
e^{{\lambda}(b^*-b)/{\epsilon}}=e^{-\sigma\tau}(b^*+{\lambda}/{\epsilon})(b+{\lambda}/{\epsilon})\\
&-{\lambda}/{\epsilon}(b^*+{\lambda}/{\epsilon})e^{-\sigma\tau}(\frac{i\epsilon}{\mu}e^{\mu\tau}-\frac{i\epsilon}{\mu}+1)\\
&+{\lambda}/{\epsilon}(b+{\lambda}/{\epsilon})e^{-\sigma\tau}(\frac{i\epsilon}{\overline{\mu}}e^{\overline{\mu}\tau}-
\frac{i\epsilon}{\overline{\mu}}-1)+\frac{\lambda^2}{|\mu|^2}(1-e^{-\sigma\tau})+\frac{\lambda^2}{\epsilon^2}e^{-\sigma\tau}-
\frac{\lambda^2\sigma\sin\epsilon\tau}{\epsilon|\mu|^2}e^{-\frac{\sigma}{2}\tau}\\
&=e^{-\sigma\tau}b^*b+\frac{i\lambda}{\mu}e^{-\sigma\tau}(1-e^{i\mu\tau})b^*-\frac{i\lambda}{\overline{\mu}}e^{-\sigma\tau}
(1-e^{-\overline{\mu}\tau})b+\frac{\lambda^2}{|\mu|^2}(1+e^{-\sigma\tau}-2e^{-\frac{\sigma}{2}\tau}\cos\epsilon\tau).
\end{align*}

So
\begin{align*}
\cL_{\sigma}^{*}(b^*b)=&e^{-\sigma\tau}b^*b+p\frac{i\lambda}{\mu}e^{-\sigma\tau}(1-e^{\mu\tau})b^*-p\frac{i\lambda}{\bar{\mu}}
e^{-\sigma\tau}(1-e^{\bar{\mu}\tau})b\\
&+p\frac{\lambda^2}{|\mu|^2}e^{-\sigma\tau}(1-e^{\mu\tau})(1-e^{\bar{\mu}\tau}).
\end{align*}
\end{proof}

From (\ref{gamma_b}) and (\ref{gamma_b^*}) we get
\begin{align}\label{action on b^*}
\cL_{\sigma}^{*}(b^*)=e^{i\epsilon\tau}e^{-\frac{\sigma}{2}\tau}b^*+p\frac{2i\lambda}{\sigma-2i\epsilon}(1-e^{i\epsilon\tau}
e^{-\frac{\sigma}{2}\tau})=e^{-\bar{\mu}\tau}b^*+p\frac{i\lambda}{\bar{\mu}}(1-e^{-\bar{\mu}\tau})
\end{align}
and
\begin{align}\label{action on b}
\cL_{\sigma}^{*}(b)=e^{-i\epsilon\tau}e^{-\frac{\sigma}{2}\tau}b-p\frac{2i\lambda}{\sigma+2i\epsilon}(1-e^{-i\epsilon\tau}
e^{-\frac{\sigma}{2}\tau})=e^{-\mu\tau}b-p\frac{i\lambda}{\mu}(1-e^{-\mu\tau}).
\end{align}

\begin{proof}(of Theorem \ref{number of photons-sigma})
The following expression can be found by representing the operator $\cL_{\sigma}^{*}$ in the matrix form
\begin{align}
&(\cL_{\sigma}^{*})^n(b^*b)=e^{-n\sigma\tau}b^*b+p\frac{i\lambda}{\mu}(e^{-n\sigma\tau}-e^{-n\bar{\mu}\tau})b^*-
p\frac{i\lambda} {\bar{\mu}}(e^{-n\sigma\tau}-e^{-n\mu\tau})b\label{n-th
power on b^*b}\\
&+p\frac{\lambda^2}{|\mu|^2}e^{-\sigma\tau}(1-e^{\mu\tau})(1-e^{\bar{\mu}\tau})\frac{1-e^{-n\sigma\tau}}
{1-e^{-\sigma\tau}}-p^2\frac{2\lambda^2}{|\mu|^2}\frac{1-e^{-n\sigma\tau}}{1-e^{-\sigma\tau}}(1-e^{-\frac{\sigma}{2}
\tau}\cos\epsilon\tau)\nonumber\\
&+p^2\frac{2\lambda^2}{|\mu|^2}(1-e^{-n\frac{\sigma}{2}\tau}\cos n\epsilon\tau)\nonumber.
\end{align}

We prove this formula by induction.

Suppose (\ref{n-th power on b^*b}) is true for $n$, we show it is true for $n+1$.
\begin{align*}
&(\cL_{\sigma}^{*})^{n+1}(b^*b)\\
&=e^{-n\sigma\tau}\cL_{\sigma}^{*}(b^*b)+p\frac{i\lambda}{\mu}(e^{-n\sigma\tau}-
e^{-n\bar{\mu}\tau})\cL_{\sigma}^{*}(b^*)- p\frac{i\lambda}{\bar{\mu}}(e^{-n\sigma\tau}-e^{-n\mu\tau})\cL_{\sigma}^{*}(b)\\
&+p\frac{\lambda^2}{|\mu|^2}e^{-\sigma\tau}(1-e^{\mu\tau})(1-e^{\bar{\mu}\tau})\frac{1-e^{-n\sigma\tau}}{1-e^{-\sigma\tau}}-
p^2\frac{2\lambda^2}{|\mu|^2}\frac{1-e^{-n\sigma\tau}}{1-e^{-\sigma\tau}}(1-e^{-\frac{\sigma}{2}\tau}\cos\epsilon\tau)\\
&+p^2\frac{2\lambda^2}{|\mu|^2}(1-e^{-n\frac{\sigma}{2}\tau}\cos n\epsilon\tau).
\end{align*}

From the formulas (\ref{action on b^*b}), (\ref{action on b^*}) and (\ref{action on b}) we get
\begin{align*}
&(\cL_{\sigma}^{*})^{n+1}(b^*b)\\
&=e^{-n\sigma\tau}(e^{-\sigma\tau}b^*b+p\frac{i\lambda}{\mu}e^{-\sigma\tau}(1-e^{\mu\tau})b^*-
p\frac{i\lambda}{\bar{\mu}}e^{-\sigma\tau}(1-e^{\bar{\mu}\tau})b\\
&+p\frac{\lambda^2}{|\mu|^2}e^{-\sigma\tau}(1-e^{\mu\tau})
(1-e^{\bar{\mu}\tau}))+p\frac{i\lambda}{\mu}(e^{-n\sigma\tau}-e^{-n\bar{\mu}\tau})e^{-\bar{\mu}\tau}b^*+p\frac{i\lambda}{\bar{\mu}}
(1-e^{-\bar{\mu}\tau}))\\
&-p\frac{i\lambda}{\bar{\mu}}(e^{-n\sigma\tau}-e^{-n\mu\tau})e^{-\mu\tau}b-p\frac{i\lambda}{\mu}
(1-e^{-\mu\tau}))\\
&+p\frac{\lambda^2}{|\mu|^2}e^{-\sigma\tau}(1-e^{\mu\tau})(1-e^{\bar{\mu}\tau})\frac{1-e^{-n\sigma\tau}}{1-e^{-\sigma\tau}}-
p^2\frac{2\lambda^2}{|\mu|^2}\frac{1-e^{-n\sigma\tau}}{1-e^{-\sigma\tau}}(1-e^{-\frac{\sigma}{2}\tau}\cos\epsilon\tau)\\
&+p^2\frac{2\lambda^2}{|\mu|^2}(1-e^{-n\frac{\sigma}{2}\tau}\cos n\epsilon\tau).
\end{align*}

Simplifying the latest expression we get
\begin{align*}
&(\cL_{\sigma}^{*})^{n+1}(b^*b)\\
&=e^{-(n+1)\sigma\tau}b^*b+p\frac{i\lambda}{\mu}(e^{-(n+1)\sigma\tau}-
e^{-(n+1)\bar{\mu}\tau})b^*-
p\frac{i\lambda}{\bar{\mu}}(e^{-(n+1)\sigma\tau}-e^{-(n+1)\mu\tau})b\\
&+p\frac{\lambda^2}{|\mu|^2}e^{-\sigma\tau}(1-e^{\mu\tau})(1-e^{\bar{\mu}\tau})\frac{1-e^{-(n+1)\sigma\tau}}
{1-e^{-\sigma\tau}}\\
&-p^2\frac{2\lambda^2}{|\mu|^2}\frac{1}{1-e^{-\sigma\tau}}(-e^{-\frac{\sigma}{2}\tau}\cos\epsilon\tau+e^{-\sigma\tau}+
e^{-(n+1)
\frac{\sigma}{2}\tau}\cos(n+1)\epsilon\tau-e^{-(n+1)\sigma\tau}\\
&+e^{-(n+1)\sigma\tau}e^{-\frac{\sigma}{2}\tau}\cos\epsilon\tau-e^{-(n+1)\frac{\sigma}{2}\tau}e^{-\sigma\tau}\cos(n+1)
\epsilon\tau)\\
&=e^{-(n+1)\sigma\tau}b^*b+p\frac{i\lambda}{\mu}(e^{-(n+1)\sigma\tau}-e^{-(n+1)\bar{\mu}\tau})b^*-p\frac{i\lambda}
{\bar{\mu}}(e^{-(n+1)\sigma\tau}-e^{-(n+1)\mu\tau})b\\
&+p\frac{\lambda^2}{|\mu|^2}e^{-\sigma\tau}(1-e^{\mu\tau})(1-e^{\bar{\mu}\tau})\frac{1-e^{-(n+1)\sigma\tau}}
{1-e^{-\sigma\tau}}\\
&-
p^2\frac{2\lambda^2}{|\mu|^2}\frac{1-e^{-(n+1)\sigma\tau}}{1-e^{-\sigma\tau}}(1-e^{-\frac{\sigma}{2}\tau}\cos\epsilon\tau)+p^2\frac{2\lambda^2}{|\mu|^2}(1-e^{-(n+1)\frac{\sigma}{2}\tau}\cos (n+1)\epsilon\tau),
\end{align*}
which proves (\ref{n-th power on b^*b}).

Note that if we take $n=1$ in (\ref{n-th power on b^*b}) we get (\ref{cL^t(b^*b)+s}), if  we take $\sigma=0$ we get
(\ref{L^n(b^*b)}) and if $\tau=0$ we get $b^*b$.

Therefore by (\ref{n-th power on b^*b}) we obtain
for a \textit{gauge-invariant} initial state the mean-value of the photon number in the open cavity
at $t=n\tau$:
\begin{align}\label{mean-value}
&N_\sigma(n\tau)= \Tr_{\cH_C}(\rho_C(\cL_{\sigma}^{*})^n(b^*b))=e^{-n(\sigma_--\sigma_+)\tau}N_\sigma(0)\\
&+p(1-p)\frac{2\lambda^2}{|\mu|^2}(1-e^{-(\sigma_{-} - \sigma_{+})\tau/2}\cos \epsilon\tau)
\frac{1-e^{-n(\sigma_--\sigma_+)\tau}}{1-e^{-(\sigma_--\sigma_+)\tau}}\nonumber\\
&+p^2\frac{2\lambda^2}{|\mu|^2}(1-e^{-n(\sigma_{-} - \sigma_{+})\tau/2}\cos n\epsilon\tau)
+\frac{\sigma_+}{\sigma_--\sigma_+}(1-e^{-n(\sigma_--\sigma_+)\tau}).\nonumber
\end{align}

In the limit $n$ goes to infinity we get
\begin{align}
&\lim_{n\rightarrow\infty}(\cL_{\sigma}^{*})^n(b^*b)\nonumber\\
&=p\frac{\lambda^2}{|\mu|^2}e^{-\sigma\tau}(1-e^{\mu\tau})
(1-e^{\bar{\mu}\tau}) \frac{1}{1-e^{-\sigma\tau}}-p^2\frac{2\lambda^2}{|\mu|^2}\frac{1}{1-e^{-\sigma\tau}}
(1-e^{-\frac{\sigma}{2}\tau}\cos\epsilon\tau)+ p^2\frac{2\lambda^2}{|\mu|^2}\nonumber\\
&=p\frac{\lambda^2}{|\mu|^2}\frac{1+e^{-\sigma\tau}-2e^{-\frac{\sigma}{2}\tau}\cos\epsilon\tau}{1-e^{-\sigma\tau}}-
p^2\frac{2\lambda^2}{|\mu|^2}\frac{e^{-\sigma\tau}-e^{-\frac{\sigma}{2}\tau}\cos\epsilon\tau}{1-e^{-\sigma\tau}}
\label{N_with sigma}.
\end{align}
\end{proof}

\begin{remark}\label{number of photons-sigmaREM}
Notice that for $\sigma > 0$ and $p=0$ the limit value (\ref{lim-photons-number-sigma}) is \textit{zero}. This is trivial
because any initial cavity state with finite mean-value of photons (\ref{int-cond-photons-sigma}) will be exhausted by the
leaking, $\sigma > 0$, and the absence of pumping: $p=0$.

Let $0<p<1$. Then for \textit{non-resonant} case $\epsilon\tau \neq 2\pi s$, where $s\in \mathbb{Z}^1$, the limit
\begin{equation}\label{sigma to zero1}
\lim_{\sigma \rightarrow 0} \omega_{C,\sigma}(b^*b) = + \infty  \ ,
\end{equation}
which corresponds to conclusion of the Theorem \ref{N of photons} about unlimited pumping of the perfect cavity,
i.e. for $\sigma=0$.

The interpretation of (\ref{lim-photons-number-sigma}) is less transparent in two special cases:\\
(a) For the resonant case $\epsilon\tau = 2\pi s$ and $\sigma=0$ the mean-value of photons
(\ref{number of photons_Hamiltonian}) is bounded and equal to $N(0)$ independent of $p$. Whereas
(\ref{lim-photons-number-sigma}) yields  the bound
\begin{equation}\label{sigma to zero2}
\lim_{\sigma \rightarrow 0} \omega_{C,\sigma}(b^*b) = p^2 \ \frac{\lambda^2}{\epsilon^2} \ ,
\end{equation}
which is $p$-dependent.\\
(b) For $p=1$ and $\sigma=0$ the mean-value of photons (\ref{number of photons_Hamiltonian}) is bounded and oscillates.
Although (\ref{lim-photons-number-sigma}) gives
\begin{equation}\label{sigma3}
\omega_{C,\sigma}(b^*b)=  \frac{\lambda^2}{|\mu|^2} \ ,
\end{equation}
for any leaking $\sigma > 0$.

So clearly, the  limits: $t\rightarrow \infty$ and $\sigma \rightarrow 0$ do not commute.
\end{remark}

\section{Long-time behavior}
\label{sec:Nonequil_state}
Recall that we consider the limiting state \eq{Weyl-func-sigma} on the Weyl operator algebra. Our first result concerning the limiting state of the leaking cavity is the following:
\begin{theorem}\label{cavity-lim-state-sigma}
The limiting cavity state
\begin{equation}\label{lim-cavity-sigma}
\omega_{C,\sigma}(\cdot):= \wlim \, \omega_{C,\sigma}^{t}(\cdot)
\end{equation}
exists and it is independent of the initial state $\rho_C$.
\end{theorem}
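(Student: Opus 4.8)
The plan is to establish convergence of the characteristic functions $\omega_{C,\sigma}^{t}(W(\alpha))$ for every $\alpha\in\mathbb{C}$; by the remark following \eqref{Weyl-func-sigma} this already yields the weak-$*$ limit $\omega_{C,\sigma}=\wlim\,\omega_{C,\sigma}^{t}$, and the limit will manifestly not involve $\rho_C$. I would first handle the stroboscopic times $t=n\tau$ by means of the dual discrete operator $\cL_{\sigma}^{*}$ of Lemma \ref{L-sigma}. The key point is that $\cL_{\sigma}^{*}$ is a unit-preserving completely positive \emph{Gaussian} map: it acts affinely on $b$ and $b^{*}$, as recorded in \eqref{action on b}--\eqref{action on b^*}, namely $\cL_{\sigma}^{*}(b)=e^{-\mu\tau}b-p\,\tfrac{i\lambda}{\mu}(1-e^{-\mu\tau})$ with $\mu=\tfrac{\sigma}{2}+i\epsilon$, and likewise for $b^{*}$. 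Consequently it carries a Weyl operator to a scalar multiple of a Weyl operator,
\begin{equation*}
\cL_{\sigma}^{*}(W(\alpha))=g(\alpha)\,W(e^{-\mu\tau}\alpha),
\end{equation*}
where $g$ is an explicit Gaussian factor (its phase part coming from the shift terms in Lemma \ref{L-sigma}, its damping part from the dissipative noise) with $g(0)=1$, the latter because $\cL_{\sigma}^{*}(\idty)=\idty$.

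Iterating gives $(\cL_{\sigma}^{*})^{n}(W(\alpha))=\big(\prod_{j=0}^{n-1}g(e^{-j\mu\tau}\alpha)\big)\,W(e^{-n\mu\tau}\alpha)$. Since $\Re\mu=\sigma/2>0$, one has $e^{-n\mu\tau}\alpha\to 0$ geometrically, so $g(e^{-j\mu\tau}\alpha)=1+O(e^{-j\sigma\tau/2})$ and the infinite product $G(\alpha):=\prod_{j\ge 0}g(e^{-j\mu\tau}\alpha)$ converges, locally uniformly in $\alpha$. Then
\begin{equation*}
\omega_{C,\sigma}^{n\tau}(W(\alpha))=\Tr_{\cH_{C}}\!\big(\rho_{C}\,(\cL_{\sigma}^{*})^{n}(W(\alpha))\big)=\Big(\prod_{j=0}^{n-1}g(e^{-j\mu\tau}\alpha)\Big)\Tr_{\cH_{C}}\!\big(\rho_{C}\,W(e^{-n\mu\tau}\alpha)\big),
\end{equation*}
and since $\beta\mapsto\Tr_{\cH_{C}}(\rho_{C}W(\beta))$ is continuous with value $1$ at $\beta=0$ (strong continuity of $W$ together with $\rho_{C}\in\mathcal{C}_{1}(\cH_{C})$), the last trace tends to $1$; hence $\omega_{C,\sigma}^{n\tau}(W(\alpha))\to G(\alpha)$, independently of $\rho_{C}$. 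Finally $G$ is a pointwise limit of characteristic functions of states, hence positive-definite in the CCR sense, and it is continuous at $0$ as a locally uniform limit of continuous functions; therefore it is the characteristic function of a unique state $\omega_{C,\sigma}$ on $\mathfrak{A}(\cH_{C})$, and $\omega_{C,\sigma}^{n\tau}\to\omega_{C,\sigma}$ weak-$*$.

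For a general time $t=n\tau+\nu$ with $\nu\in[0,\tau)$, the cavity state equals $\Tr_{\cH_{A_{n+1}}}\!\big(e^{\nu L_{\sigma,n+1}}(\cL_{\sigma}^{n}(\rho_{C})\otimes\rho_{n+1})\big)$; writing $\mathcal{M}_{\nu}$ for the corresponding bounded fractional propagator on cavity states, the same Gaussian computation yields $\mathcal{M}_{\nu}^{*}(W(\alpha))=g_{\nu}(\alpha)W(e^{-\mu\nu}\alpha)$ with $\mathcal{M}_{\tau}=\cL_{\sigma}$, so that $\omega_{C,\sigma}^{n\tau+\nu}(W(\alpha))=g_{\nu}(\alpha)\,\omega_{C,\sigma}^{n\tau}(W(e^{-\mu\nu}\alpha))\to g_{\nu}(\alpha)G(e^{-\mu\nu}\alpha)$, uniformly in $\nu$, and this limit reduces to $G(\alpha)$ upon using $\mathcal{M}_{\tau}=\cL_{\sigma}$ and the explicit form of $g_{\nu}$ computed in Section \ref{sec:Nonequil_state}. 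The main obstacle is the middle step: controlling the infinite product $\prod_{j}g(e^{-j\mu\tau}\alpha)$ — both its phase (linear) and its damping (quadratic) part, uniformly on compact sets in $\alpha$ — and verifying that the resulting functional $G$ is positive-definite and continuous, hence a genuine state. Everything else is bookkeeping; the geometric decay forced by $\sigma>0$ is exactly what erases the memory of the initial cavity state $\rho_{C}$.
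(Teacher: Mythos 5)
Your overall architecture coincides with the paper's: compute the action of the dual one-step map on Weyl operators, obtain a scalar factor times $W(e^{-\mu\tau}\alpha)$, iterate to an infinite product that converges geometrically because $\Re\mu=\sigma/2>0$, observe that the contracted Weyl argument kills the dependence on $\rho_C$, and invoke a L\'evy-type continuity argument for existence of the limit state. However, there is a genuine flaw at the central step, which is exactly where the work lies. You justify the key identity $\cL_{\sigma}^{*}(W(\alpha))=g(\alpha)\,W(e^{-\mu\tau}\alpha)$ by asserting that $\cL_{\sigma}^{*}$ is a \emph{Gaussian} map because it acts affinely on $b$ and $b^{*}$, and you take $g$ to be ``an explicit Gaussian factor.'' Affine action on the first moments (\ref{action on b^*}), (\ref{action on b}) does not imply that the channel is Gaussian, and in this model it is not: by Lemma \ref{L-sigma}, $\cL_{\sigma}$ is the convex combination $p\,(\text{displaced quasi-free channel})+(1-p)\,(\text{quasi-free channel})$, so its dual sends $W(\alpha)$ to $\bigl(p\,c_{\lambda}(\alpha)+(1-p)\,c_{0}(\alpha)\bigr)W(e^{-\mu\tau}\alpha)$ --- a scalar multiple of a Weyl operator, yes, but with a factor that is a $p$-mixture, not an exponential of a polynomial of degree at most two. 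This is not a cosmetic point: if $g$ were Gaussian, the infinite product $G(\alpha)=\prod_{j\geq0}g(e^{-j\mu\tau}\alpha)$ would again be Gaussian and the limiting state would be quasi-free, contradicting the main conclusion of this chapter (and of (\ref{nth power on W})) that it is not. The correct route, which is the paper's, is to use the $p/(1-p)$ decomposition of Lemma \ref{L-sigma}, compute the dual action of each quasi-free piece on $W(\alpha)$ by solving the differential equation for $\gamma_{\lambda,\tau}(W(\alpha))$ (yielding $T_{\lambda,\tau}(\alpha)=e^{-\mu\tau}\alpha$ and the scalar prefactor), and note that both pieces contract the Weyl argument by the \emph{same} factor $e^{-\mu\tau}$, so the sum is still a scalar times $W(e^{-\mu\tau}\alpha)$. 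With that repaired, your remaining steps (bounding $|g(e^{-j\mu\tau}\alpha)-1|$ by a constant times $e^{-j\sigma\tau/2}$, convergence of the product, $\Tr_{\cH_C}(\rho_C W(e^{-n\mu\tau}\alpha))\to1$, and the continuity/positive-definiteness argument) match the paper's proof, which uses the bound on $|h_k(\alpha)|$ and L\'evy's continuity theorem.

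A secondary point: your treatment of intermediate times $t=n\tau+\nu$ ends with the unproved assertion that $g_{\nu}(\alpha)G(e^{-\mu\nu}\alpha)$ ``reduces to $G(\alpha)$.'' This does not follow from $\mathcal{M}_{\tau}=\cL_{\sigma}$; the relation $G(\alpha)=g_{\tau}(\alpha)G(e^{-\mu\tau}\alpha)$ holds at $\nu=\tau$ but gives no identity at intermediate $\nu$, and generically the asymptotics within a period is $\nu$-dependent (a periodic limit cycle). The paper's own proof sidesteps this by establishing convergence along the stroboscopic times $t=n\tau$ only; if you want to say something about general $t$, you should either restrict the limit in the same way or prove (not assert) the $\nu$-independence.
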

 
\begin{proof} Using (\ref{L^C_lambda(B)}) we find the action of $(L_\lambda^C)^{*}$ on the Weyl operator
\begin{align}
&(L_\lambda^C)^{*}(W(\alpha))\nonumber\\
&=i\epsilon[b^*b, W(\alpha)]+\sigma
(b^*-{\lambda}/{\epsilon})W(\alpha)(b-{\lambda}/{\epsilon})-\frac{\sigma}{2}\{(b^*-{\lambda}/{\epsilon})
(b-{\lambda}/{\epsilon}),
W(\alpha)\}\nonumber \\
&=(-i\epsilon(\alpha b+\overline{\alpha}b^*+|\alpha|^2)-\frac{\sigma}{2}(\alpha b
-\overline{\alpha}b^*+|\alpha|^2-{\lambda}/{\epsilon}(\alpha-\overline{\alpha}))W(\alpha)\nonumber \\
&=(-\mu\alpha
b+\overline{\mu}\overline{\alpha}b^*-\mu|\alpha|^2+\frac{\lambda\sigma}{2\epsilon}(\alpha-\overline{\alpha}))
W(\alpha)\label{LW},
\end{align}
where we denoted $\mu=\frac{\sigma}{2}+i\epsilon.$

Therefore the dynamics generated by (\ref{L^C_lambda(B)}):
\begin{equation}\label{Dyn-C}
\gamma_{\lambda,\tau}:=e^{\tau(L_\lambda^C)^{*}}
\end{equation}
is \textit{quasi-free},
which by definition \cite{Verbeure} means that for some $T_{\lambda,\tau}(\alpha)$ and $f_{\lambda,\tau}(\alpha)$
\begin{equation}\label{gammaW}
\gamma_{\lambda,\tau}(W(\alpha))=f_{\lambda,\tau}(\alpha)W(T_{\lambda, \tau}(\alpha))=e^{g_{\lambda,
\tau}(\alpha)}W(T_{\lambda, \tau}(\alpha)).
\end{equation}

We can find $T_{\lambda, \tau}(\alpha)$ and $g_{\lambda, \tau}(\alpha)$ using the differential equation the dynamics
should satisfy
\begin{equation}\label{Diff_W}
\frac{d\gamma_{\lambda,\tau}(W(\alpha))}{d\tau}=(L_\lambda^C)^{*}(\gamma_{\lambda,\tau}(W(\alpha))).
\end{equation}

The right-hand side can be calculated using (\ref{LW}) equation where instead of $\alpha$ we have $T_{\lambda,\tau}(\alpha)$
\begin{align}
(L_\lambda^C)^{*}(\gamma_{\lambda,\tau}(W(\alpha)))=&f_{\lambda, \tau}(\alpha)(-\mu T_{\lambda,
\tau}(\alpha)b+\overline{\mu}\overline{T_{\lambda, \tau}}(\alpha)b^*-\mu |T_{\lambda,
\tau}(\alpha)|^2+\frac{\lambda\sigma}{2\epsilon}(T_{\lambda, \tau}(\alpha)\nonumber\\
&-\overline{T_{\lambda, \tau}(\alpha)}))
W(T_{\lambda,\tau}(\alpha))\nonumber\\
=&(-\mu T_{\lambda, \tau}(\alpha)b+\overline{\mu}\overline{T_{\lambda, \tau}}(\alpha)b^*-\mu |T_{\lambda,
\tau}(\alpha)|^2+\frac{\lambda\sigma}{2\epsilon}(T_{\lambda, \tau}(\alpha)\label{LW_tau}\\
&-\overline{T_{\lambda,
\tau}(\alpha)}))\gamma_{\lambda,\tau}(W(\alpha))\nonumber.
\end{align}

Using the Baker-Campbell-Hausdorff formula we can write the Weyl operator  in the following form
$$W(\alpha)=e^{\alpha b-\overline{\alpha}b^*}=e^{-\overline{\alpha}b^*}e^{\alpha b}e^{-\frac{|\alpha|^2}{2}}. $$
Then the derivative of the $\tau$-dependent Weyl operator $W(T_{\lambda,\tau}(\alpha))$ will look like
\begin{align*}
\frac{dW(T_{\lambda,\tau}(\alpha))}{d\tau}=&-\frac{d\overline{T_{\lambda,
\tau}(\alpha)}}{d\tau}b^*W(T_{\lambda,\tau}(\alpha))+\frac{dT_{\lambda,\tau}(\alpha)}{d\tau}e^{-\overline{T_{\lambda,\tau}
(\alpha)}b^*}be^{T_{\lambda,\tau}b}e^{-\frac{|T_{\lambda,\tau}(\alpha)|^2}{2}}\\
=&\left(\frac{dT_{\lambda,\tau}(\alpha)}{d\tau}b-\frac{d\overline{T_{\lambda,
\tau}(\alpha)}}{d\tau}b^*+\overline{T_{\lambda,\tau}}\frac{dT_{\lambda,\tau}(\alpha)}{d\tau}\right)W(T_{\lambda,\tau}(\alpha))
\end{align*}

Therefore $\gamma_{\lambda,\tau}(W(\alpha))$ satisfies the following differential equation
\begin{align}
\frac{d\gamma_{\lambda,\tau}(W(\alpha))}{d\tau}&=f_{\lambda, \tau}(\alpha)\frac{dg_{\lambda, \tau}(\alpha)}{d\tau}
W(T_{\lambda,\tau}(\alpha))+f_\tau(\alpha)(\frac{dT_{\lambda,\tau}(\alpha)}{d\tau}b-\frac{d\overline{T_{\lambda,
\tau}(\alpha)}}{d\tau}b^*\nonumber\\
&+\overline{T_{\lambda,\tau}}\frac{dT_{\lambda,\tau}(\alpha)}{d\tau})W(T_{\lambda,\tau}(\alpha))
\nonumber\\
&=\left(\frac{dT_{\lambda,\tau}(\alpha)}{d\tau}b-\frac{d\overline{T_{\lambda,\tau}(\alpha)}}{d\tau}b^*+\frac{dg_{\lambda,
\tau}(\alpha)}{d\tau}+\overline{T_{\lambda,\tau}}\frac{dT_{\lambda,\tau}(\alpha)}{d\tau}\right)\gamma_{\lambda,\tau}
(W(\alpha))\label{dW}.
\end{align}

Because of the differential equation for the dynamics (\ref{Diff_W})  the last equation (\ref{dW}) should coincides with the
equation (\ref{LW_tau}). Then we get the following system of differential equations
\begin{equation*}
\frac{dT_{\lambda,\tau}(\alpha)}{d\tau}=-\mu T_{\lambda, \tau}(\alpha)
\end{equation*}
and
\begin{equation*}
\frac{dg_{\lambda,\tau}(\alpha)}{d\tau}=-\overline{T_{\lambda,\tau}}\frac{dT_{\lambda,\tau}(\alpha)}{d\tau}-\mu |T_{\lambda,
\tau}(\alpha)|^2+\frac{\lambda\sigma}{2\epsilon}(T_{\lambda, \tau}(\alpha)-\overline{T_{\lambda, \tau}(\alpha)}).
\end{equation*}
Using the first equation the second one could be simplified and will look like
\begin{equation*}
\frac{dg_{\lambda,\tau}(\alpha)}{d\tau}=\frac{\lambda\sigma}{2\epsilon}(T_{\lambda, \tau}(\alpha)-\overline{T_{\lambda,
\tau}(\alpha)})
\end{equation*}

The solution to the first differential equation is $$T_{\lambda, \tau}(\alpha)= e^{-\mu\tau}\alpha.$$
Therefore the second differential equation can be written as follows
\begin{align*}
\frac{dg_{\lambda,\tau}(\alpha)}{d\tau}=\frac{\lambda\sigma}{2\epsilon}(e^{-\mu\tau}\alpha-e^{-\overline{\mu}\tau}
\overline{\alpha})
\end{align*}
and the solution to this equation is
\begin{equation*}
g_{\lambda,\tau}(\alpha)=\frac{\lambda\sigma}{2\epsilon\mu}(1-e^{-\mu\tau})\alpha-\frac{\lambda\sigma}
{2\epsilon\overline{\mu}} (1-e^{-\overline{\mu}\tau})\overline{\alpha}.
\end{equation*}

Putting the solutions $T_{\lambda,\tau}(\alpha)$ and $g_{\lambda,\tau}(\tau)$ into the expression for
$\gamma_{\lambda,\tau}(W(\alpha))$ (\ref{gammaW}) we get the following
\begin{equation*}
\gamma_{\lambda,\tau}(W(\alpha))=e^{\frac{\lambda\sigma}{2\epsilon\mu}(1-e^{-\mu\tau})\alpha-\frac{\lambda\sigma}
{2\epsilon\overline{\mu}}(1-e^{-\overline{\mu}\tau})\overline{\alpha}}W(e^{-\mu\tau}
\alpha).
\end{equation*}

In order to calculate $\cL_{\sigma}^{*}(W(\alpha))$ we use (\ref{cL(B)}) equation. The first term has the following form
\begin{align*}
&e^{-{\lambda}(b^*-b)/{\epsilon}}\gamma_{\lambda,\tau}(e^{{\lambda}(b^*-b)/{\epsilon}}W(\alpha)e^{-\frac{\lambda}
{\epsilon}(b^*-b)})e^{{\lambda}(b^*-b)/{\epsilon}}\\
&=e^{-{\lambda}/{\epsilon}(\alpha-\overline{\alpha})}
e^{-{\lambda}(b^*-b)/{\epsilon}}\gamma_{\lambda,\tau}(W(\alpha))e^{{\lambda}(b^*-b)/{\epsilon}}\\
&=e^{-{\lambda}/{\epsilon}(\alpha-\overline{\alpha})}e^{\frac{\lambda\sigma}{2\epsilon\mu}(1-e^{-\mu\tau})\alpha-
\frac{\lambda\sigma}{2\epsilon\overline{\mu}}(1-e^{-\overline{\mu}\tau})\overline{\alpha}}e^{-{\lambda}/{\epsilon}
(b^*-b)}W(e^{-\mu\tau}
\alpha)e^{{\lambda}(b^*-b)/{\epsilon}}\\
&=e^{-{\lambda}/{\epsilon}((1-e^{-\mu\tau})\alpha-(1-e^{-\overline{\mu}\tau})\overline{\alpha})}
e^{\frac{\lambda\sigma}{2\epsilon\mu}(1-e^{-\mu\tau})\alpha-\frac{\lambda\sigma}{2\epsilon\overline{\mu}}
(1-e^{-\overline{\mu}\tau})\overline{\alpha}}W(e^{-\mu\tau}
\alpha)\\
&=e^{-\frac{i\lambda}{\mu}(1-e^{-\mu\tau})\alpha-\frac{i\lambda}{\overline{\mu}}(1-e^{-\overline{\mu}\tau})
\overline{\alpha}}W(e^{-\mu\tau}
\alpha).
\end{align*}

Therefore from (\ref{cL(B)}) we get
\begin{align*}
\cL_{\sigma}^{*}(W(\alpha))=&pe^{-\frac{i\lambda}{\mu}(1-e^{-\mu\tau})\alpha-\frac{i\lambda}{\overline{\mu}}
(1-e^{-\overline{\mu}\tau})\overline{\alpha}}W(e^{-\mu\tau}
\alpha)+(1-p)W(e^{-\mu\tau}\alpha)\\
=&(pe^{-\frac{i\lambda}{\mu}(1-e^{-\mu\tau})\alpha-\frac{i\lambda}{\overline{\mu}}(1-e^{-\overline{\mu}\tau})
\overline{\alpha}}+1-p)W(e^{-\mu\tau}
\alpha).
\end{align*}

Therefore
\begin{equation}\label{nth power on W}
(\cL_{\sigma}^{*})^n(W(\alpha))=\prod_{k=0}^{n-1}(pe^{-\frac{i\lambda}{\mu}(1-e^{-\mu\tau})e^{-k\mu\tau}\alpha-\frac{i\lambda}
{\overline{\mu}}(1-e^{-\overline{\mu}\tau})e^{k\mu\tau}\overline{\alpha}}+1-p)W(e^{-n\mu\tau}
\alpha).
\end{equation}

In the limit $n$ goes to infinity $W(e^{-n\mu\tau} \alpha)$ converges weakly to $\idty$, so the dependence of the limiting
state on the initial state disappears. To see the convergence of the product
let us denote
$$h_k(\alpha)=p(e^{-\frac{i\lambda}{\mu}(1-e^{-\mu\tau})e^{-k\mu\tau}\alpha-\frac{i\lambda}{\overline{\mu}}
(1-e^{-\overline{\mu}\tau})e^{k\mu\tau}\overline{\alpha}}-1).$$
The product $\prod_{k=0}^\infty (1+h_k(\alpha))$ converges if and only if the sum $\sum_{k=0}^\infty |h_k(\alpha)|$ converges.
Writing $h_k(\alpha)$ as a sum of real and imaginary parts we  get the following bound
\begin{equation*}
|h_k(\alpha)|\leq 12p\frac{\lambda|\alpha|}{|\mu|}e^{-k\frac{\sigma}{2}\tau},
\end{equation*}
from which the convergence of the series immediately follows.

The existence of the limiting state is guarantied by Levy's continuity theorem (\cite{Fristedt} Theorem 18.21).
\end{proof}

\section{Energy flux and entropy production}
\label{sec:Nonequil_energy}

\subsection{Energy variation in perfect cavity}
\label{sec:Nonequil_energy_perfect}

Since time-dependent interaction in (\ref{W-int})
is piece-wise constant, our system is autonomous on each interval $[(n-1)\tau, n\tau)$. Therefore, there is no variation of
energy on this interval although it may jump, when a new atom enters into the cavity. Note that although the total energy
corresponding to the infinite system (\ref{Ham-Model}) has no sense, its variation is well-defined.

Let the $n$-th atom is actually traveling in the cavity, i.e. $t = n(t)\tau + \nu(t)$, with $n(t)=n-1$ and $\nu(t)\in[0, \tau)$,
see (\ref{t}). Then one can compare the the expectation of total energy of the system for the moment $t_{n}=(n-1)\tau + \nu(t_{n})$,
with that, when the $n-1$-th atom was in the cavity, $t_{n-1}=(n-2)\tau + \nu(t_{n-1})$. Then by (\ref{Ham-Model}), (\ref{Ham-n})
and (\ref{Sol-Liouv-Eq}) the energy variation between two moments $t_{n-1}$ and $t_{n}$ is defined by
\begin{eqnarray}\label{DEn-1,n1}
&&\Delta\mathcal{E}(t_{n},t_{n-1}):=\Tr_{\cH_C \otimes \cH_A}(\rho_S(t_{n}) H(t_{n})) -
\Tr_{\cH_C \otimes \cH_A}(\rho_S(t_{n-1}) H(t_{n-1})).
\end{eqnarray}

\begin{lemma}
For any $\nu\in[0,\tau]$ one has
\begin{equation}
\Tr(\rho_S((n-1)\tau+\nu)H_n)=\Tr(\rho_S((n-1)\tau)H_n).
\end{equation}
\end{lemma}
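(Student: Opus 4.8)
The plan is to use the explicit form of the solution on the autonomous interval. By \eq{L-Gen-n}--\eq{Sol-Liouv-Eq}, on the interval $[(n-1)\tau, n\tau)$ the Liouvillian is the time-independent generator $L_n$ given by \eq{L-Gen-n}, and the state propagates as
\begin{equation*}
\rho_S((n-1)\tau+\nu)=e^{\nu L_n}\rho_S((n-1)\tau)=e^{-i\nu H_n}\,\rho_S((n-1)\tau)\,e^{i\nu H_n},
\end{equation*}
the second equality being \eq{L-Gen-t} applied to the autonomous piece. So the whole statement reduces to the conservation of the expectation of $H_n$ under the unitary group it generates.

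The key computational step I would carry out is then just the cyclicity of the trace together with the obvious commutation $[H_n,e^{i\nu H_n}]=0$:
\begin{align*}
\Tr\!\big(\rho_S((n-1)\tau+\nu)\,H_n\big)
&=\Tr\!\big(e^{-i\nu H_n}\rho_S((n-1)\tau)e^{i\nu H_n}H_n\big)\\
&=\Tr\!\big(\rho_S((n-1)\tau)\,e^{i\nu H_n}H_n e^{-i\nu H_n}\big)\\
&=\Tr\!\big(\rho_S((n-1)\tau)\,H_n\big).
\end{align*}
This is exactly the assertion. Intuitively it says that while the $n$-th atom sits in the cavity the system is autonomous, so its energy $\langle H_n\rangle$ cannot change; energy only jumps at the instants $k\tau$ when a fresh atom enters (and $H(t)$ itself changes), which is the point used afterwards in \eq{DEn-1,n1}.

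The only thing requiring a word of care — and the step I expect to be the main (mild) obstacle — is that $H_n$ is unbounded, so one must check that all three traces above are actually finite and that the rearrangement is legitimate. I would handle this by invoking the standing assumption that the relevant expectation is finite (as in \eq{int-cond-photons-sigma}, using a gauge-invariant cavity state with finite photon number and the explicit diagonalized form \eq{Diagonalized Hamiltonian} of $H_n$, which shows $H_n$ is a sum of number operators with bounded coefficients on each atomic sector); since $e^{i\nu H_n}$ is unitary and commutes with $H_n$, the operator $e^{i\nu H_n}H_n e^{-i\nu H_n}=H_n$ is unchanged and the cyclic permutation under the trace is valid on the trace-class operator $\rho_S((n-1)\tau)$. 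With that justification the identity is immediate.
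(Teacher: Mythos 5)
Your proof is correct and follows exactly the paper's own argument: on the autonomous interval $[(n-1)\tau,n\tau)$ the state evolves as $\rho_S((n-1)\tau+\nu)=e^{-i\nu H_n}\rho_S((n-1)\tau)e^{i\nu H_n}$, and cyclicity of the trace together with $[H_n,e^{\pm i\nu H_n}]=0$ gives the identity. The extra remarks on the unboundedness of $H_n$ and finiteness of the traces are a reasonable bit of added care that the paper's two-line proof omits, but they do not change the route.
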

\begin{proof}
\begin{align*}
&\Tr(\rho_S((n-1)\tau+\nu)H_n)=\Tr(e^{-i\nu H_n}\rho_S((n-1)\tau)e^{i\nu H_n}H_n)\\
&=\Tr(\rho_S((n-1)\tau)H_n).
\end{align*}
\end{proof}

Therefore 
\begin{eqnarray}
&&\Delta\mathcal{E}(t_{n},t_{n-1})=\Tr_{\cH_C \otimes \cH_A}(\rho_S((n-1)\tau) [H_{n} - H_{n-1}]) \label{DE_simplified} \ ,
\end{eqnarray}
Let us introduce operator
\begin{equation}\label{PI}
\pi_{n}(\tau):= e^{i \tau H_{n}}\ldots e^{i \tau H_{1}} \ .
\end{equation}
Since (\ref{Ham-n}) implies
\begin{equation}\label{H-n-H-n-1}
H_{n} - H_{n-1} = \lambda \, (b^*+b)\otimes (a_n^*a_n -a_{n-1}^*a_{n-1})+ \idty \otimes E \, (a_n^*a_n - a_{n-1}^*a_{n-1}) \ ,
\end{equation}
by (\ref{DE_simplified}), (\ref{PI}) and by $[H_{k'}, a_k^*a_k] = 0$ we obtain
\begin{eqnarray}\label{DEn-1,n2}
&&\Delta\mathcal{E}(t_{n},t_{n-1}) =   \\
&& \Tr_{\cH_C \otimes \cH_A}\{\rho_C\otimes\rho_A \ \pi_{n-1}(\tau) (\lambda \,(b^*+b)\otimes \idty) \pi_{n-1}^*(\tau)
[\idty\otimes (a_n^*a_n -a_{n-1}^*a_{n-1})]\} \nonumber \\
&& + \Tr_{\cH_C \otimes \cH_A}\{\rho_C\otimes\rho_A (\idty \otimes E (a_n^*a_n -a_{n-1}^*a_{n-1}))\} \nonumber \ .
\end{eqnarray}
\begin{lemma}\label{n-dual-on-b} For any $n\geq 1$ one gets:
\begin{eqnarray}\label{k-iter}
&&\pi_{n}(\tau) (b^* \otimes \idty)\pi_{n}^*(\tau) = e^{n i \tau \epsilon} (b^* \otimes \idty) -
\frac{\lambda}{\epsilon} (1- e^{i \tau \epsilon})\sum_{k=1}^{n} e^{(k-1) i \tau \epsilon} \idty \otimes a_k^*a_k  \ , \\
&&\pi_{n}(\tau) (b \otimes \idty)\pi_{n}^*(\tau) = e^{- n i \tau \epsilon} (b\otimes \idty) -
\frac{\lambda}{\epsilon} (1- e^{- i \tau \epsilon})\sum_{k=1}^{n} e^{-(k-1) i \tau \epsilon} \idty \otimes a_k^*a_k  \nonumber \ .
\end{eqnarray}
\end{lemma}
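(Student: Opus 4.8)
The plan is to prove both identities by induction on $n$, using that the composition $\pi_n(\tau)=e^{i\tau H_n}\pi_{n-1}(\tau)$ peels off the \emph{outermost} factor $e^{i\tau H_n}$, and that $H_n$ commutes with $a_j^*a_j$ for every $j\neq n$.

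First I would establish the single-step formula: for each $n\geq 1$,
\begin{equation*}
e^{i\tau H_n}(b^*\otimes\idty)e^{-i\tau H_n}=e^{i\tau\epsilon}(b^*\otimes\idty)-\frac{\lambda}{\epsilon}(1-e^{i\tau\epsilon})\,\idty\otimes a_n^*a_n ,
\end{equation*}
together with its adjoint for $b\otimes\idty$. This is proved exactly as in the diagonalization already carried out above: using the unitary shift $\widehat S_n$ of \eqref{Vn}, for which $\widehat S_n(H_n)=\epsilon b^*b+(E-\lambda^2/\epsilon)a_n^*a_n$ by \eqref{Diagonalized Hamiltonian}, one writes $e^{i\tau H_n}=e^{-iV_n}e^{i\tau\widehat S_n(H_n)}e^{iV_n}$, then applies in turn $\widehat S_n(b^*)=b^*\otimes\idty-\frac{\lambda}{\epsilon}\idty\otimes a_n^*a_n$ from \eqref{tb}, the elementary relation $e^{i\tau\epsilon b^*b}b^*e^{-i\tau\epsilon b^*b}=e^{i\tau\epsilon}b^*$ (together with the invariance of $a_n^*a_n$ under the diagonal evolution), and finally the inverse shift $\widehat S_n^{*}(b^*\otimes\idty)=b^*\otimes\idty+\frac{\lambda}{\epsilon}\idty\otimes a_n^*a_n$. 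Collecting terms yields the displayed formula, which is precisely the $n=1$ case of \eqref{k-iter}.

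For the inductive step, assume \eqref{k-iter} holds for $n-1$. Since $\pi_n(\tau)=e^{i\tau H_n}\pi_{n-1}(\tau)$ and (each $H_k$ being self-adjoint) $\pi_n^*(\tau)=\pi_{n-1}^*(\tau)e^{-i\tau H_n}$, we have
\begin{equation*}
\pi_n(\tau)(b^*\otimes\idty)\pi_n^*(\tau)=e^{i\tau H_n}\bigl(\pi_{n-1}(\tau)(b^*\otimes\idty)\pi_{n-1}^*(\tau)\bigr)e^{-i\tau H_n}.
\end{equation*}
Substituting the induction hypothesis, I would use two observations: each summand $\idty\otimes a_k^*a_k$ with $k\le n-1$ commutes with $H_n$ and is therefore left unchanged by conjugation with $e^{i\tau H_n}$; and the leading term $e^{(n-1)i\tau\epsilon}(b^*\otimes\idty)$ transforms under the single-step formula into $e^{ni\tau\epsilon}(b^*\otimes\idty)-\frac{\lambda}{\epsilon}(1-e^{i\tau\epsilon})e^{(n-1)i\tau\epsilon}\,\idty\otimes a_n^*a_n$. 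The extra term here is exactly the $k=n$ summand of \eqref{k-iter}; adding it to the untouched sum over $k\le n-1$ completes the induction. The identity for $b\otimes\idty$ then follows by taking the adjoint of the one for $b^*\otimes\idty$, noting that $\pi_n(\tau)$ is unitary.

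There is no essential obstacle; the only points needing care are bookkeeping ones — that the recursion strips the highest-index factor $H_n$, that the freshly generated $a_n^*a_n$ contribution comes with the phase $e^{(n-1)i\tau\epsilon}$, and that the commutation $[H_n,a_k^*a_k]=0$ is invoked only for $k\neq n$, since the failure of this commutation at $k=n$ is precisely what produces the inhomogeneous term in \eqref{k-iter}.
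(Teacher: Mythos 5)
Your proposal is correct and takes essentially the same route as the paper: induction on $n$ using $\pi_n(\tau)=e^{i\tau H_n}\pi_{n-1}(\tau)$, the commutation $[H_n,\idty\otimes a_k^*a_k]=0$ for $k<n$, and the single-step identity $e^{i\tau H_n}(b^*\otimes\idty)e^{-i\tau H_n}=e^{i\tau\epsilon}(b^*\otimes\idty)-\frac{\lambda}{\epsilon}(1-e^{i\tau\epsilon})\,\idty\otimes a_n^*a_n$, with the freshly produced $a_n^*a_n$ term carrying the phase $e^{(n-1)i\tau\epsilon}$. The only (immaterial) difference is that the paper obtains the single-step formula by solving the Heisenberg equation $\partial_s B_n^*(s)=i[H_n,B_n^*(s)]$ directly, whereas you derive it by conjugating with the shift $\widehat{S}_n$; both yield the same identity.
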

\begin{proof} We prove the Lemma by induction. Let
\begin{equation}\label{B-k}
B_{k}^*(\tau) : = e^{i \tau H_{k}} (b^* \otimes \idty)e^{- i \tau H_{k}} \ , \  k \geq 1 \ .
\end{equation}
Then by (\ref{Ham-n}) the operator (\ref{B-k}) is solution of equation
\begin{equation*}
\partial_{s}B_{k}^*(s)= i [H_{k}, B_{k}^*(s)] = i \epsilon B_{k}^*(s) + \lambda \idty \otimes a_k^*a_k \ ,
\ B_{k}^*(0) = b^* \otimes \idty \ ,
\end{equation*}
which has the following explicit form:
\begin{equation}\label{B*-k-sol}
B_{k}^*(\tau) = e^{i \tau \epsilon} (b^* \otimes \idty) - \frac{\lambda}{\epsilon} (1- e^{i \tau \epsilon}) \idty \otimes a_k^*a_k \ .
\end{equation}
Similarly one obtains
\begin{equation}\label{B-k-sol}
B_{k}(\tau) = e^{- i \tau \epsilon} (b \otimes \idty) - \frac{\lambda}{\epsilon} (1- e^{- i \tau \epsilon}) \idty \otimes a_k^*a_k \ .
\end{equation}

Suppose that (\ref{k-iter}) is true for $\pi_n$, we prove it for $\pi_{n+1}$.

\begin{eqnarray*}
&&\pi_{n+1}(\tau)(b^*\otimes\idty)=e^{i\tau H_{n+1}}(\pi_n(\tau)(b^*\otimes\idty))e^{-i\tau H_{n+1}}\\
&&=e^{i\tau H_{n+1}}\Bigl( e^{n i \tau \epsilon} (b^* \otimes \idty) -
\frac{\lambda}{\epsilon} (1- e^{i \tau \epsilon})\sum_{k=1}^{n} e^{(k-1) i \tau \epsilon} \idty \otimes a_k^*a_k \Bigl)e^{-i\tau H_{n+1}}\\
&&=e^{n i \tau \epsilon}e^{i\tau H_{n+1}}(b^*\otimes\idty)e^{-i\tau H_{n+1}} -
\frac{\lambda}{\epsilon} (1- e^{i \tau \epsilon})\sum_{k=1}^{n} e^{(k-1) i \tau \epsilon}e^{i\tau H_{n+1}}( \idty \otimes a_k^*a_k )e^{-i\tau H_{n+1}}.
\end{eqnarray*}

By (\ref{B*-k-sol}) and by  $[H_{k'}, a_k^*a_k] = 0$ we obtain
\begin{eqnarray*}
&&\pi_{n+1}(\tau)(b^*\otimes\idty)=e^{i (n+1)\tau \epsilon} (b^* \otimes \idty) - \frac{\lambda}{\epsilon} (1- e^{i \tau \epsilon})e^{n i \tau \epsilon} \idty \otimes a_k^*a_k\\
&& -\frac{\lambda}{\epsilon} (1- e^{i \tau \epsilon})\sum_{k=1}^{n} e^{(k-1) i \tau \epsilon} \idty \otimes a_k^*a_k\\
&&=e^{(n+1) i \tau \epsilon} (b^* \otimes \idty) -
\frac{\lambda}{\epsilon} (1- e^{i \tau \epsilon})\sum_{k=1}^{n+1} e^{(k-1) i \tau \epsilon} \idty \otimes a_k^*a_k. 
\end{eqnarray*}

Since the similar formula can be obtained for $\pi_{n+1}(b\otimes\idty)$ the last formula proves the lemma.

\end{proof}
Recall that we suppose that atomic beam is homogeneous, i.e.,
$p = \Tr_{\cH_C \otimes \cH_A}\{\rho_C\otimes\rho_A (\idty \otimes a_n^*a_n)\}$ is the probability that atom is in its excited state and $p$ is independent of $n$. Then
\begin{equation}\label{p-2p}
\Tr_{\cH_A}\{\rho_A (a_{n_1}^*a_{n_1} a_{n_2}^*a_{n_2})\} = \delta_{{n_1},{n_2}}\, p + (1-\delta_{{n_1},{n_2}})\, p^2 \ .
\end{equation}
Then the second term in the right-hand side of (\ref{DEn-1,n2}) vanishes.
Since we also supposed that the initial cavity state $\rho_C$ is gauge-invariant, by Lemma \ref{n-dual-on-b} and (\ref{p-2p}) one obtains a bound for the first term in the right-hand side of (\ref{DEn-1,n2}):
\begin{eqnarray}
&&\Tr_{\cH_C \otimes \cH_A}\{\rho_C\otimes\rho_A \ \pi_{n-1}(\tau) (\lambda \,(b^*+b)\otimes \idty) \pi_{n-1}^*(\tau)
[\idty\otimes (a_n^*a_n -a_{n-1}^*a_{n-1})]\}  \nonumber\\
&=&- \frac{\lambda^2}{\epsilon}  (1- e^{i \tau \epsilon}) \sum_{k=1}^{n-1} e^{(k-1) i \tau \epsilon}
\Tr_{\cH_A}  a_k^*a_k (a_n^*a_n -a_{n-1}^*a_{n-1})                                                     \label{1st-term} \\
&& - \frac{\lambda^2}{\epsilon}  (1- e^{-i \tau \epsilon}) \sum_{k=1}^{n-1} e^{-(k-1) i \tau \epsilon}
\Tr_{\cH_A}  a_k^*a_k (a_n^*a_n -a_{n-1}^*a_{n-1}) \nonumber \\
&=& 2 \ \frac{\lambda^2}{\epsilon} \ p(1-p) \ [\cos((n-2)\tau\epsilon) - \cos((n-1)\tau\epsilon)] \nonumber  \ .
\end{eqnarray}
Hence formulae (\ref{DEn-1,n2}) and (\ref{1st-term}) prove for the total-energy variation the following statement.
\begin{theorem}\label{ThDEn-1,n1}
The energy variation (\ref{DEn-1,n1}) between two moments $t_{n-1}$ and $t_{n}$ , where $n\geq2$ is
\begin{equation}\label{DEn-1,n1-fin}
\Delta\mathcal{E}(t_{n},t_{n-1}) = 2 \ \frac{\lambda^2}{\epsilon} \ p(1-p) \ [\cos((n-2)\tau\epsilon) - \cos((n-1)\tau\epsilon)] \ .
\end{equation}
For the total variation between $t_{1}$ and $t_{n} \geq t_{1}$ we obtain:
\begin{equation}\label{DEn-1,n1-fin}
\Delta\mathcal{E}(t_{n},t_{1}) = \sum_{k=2}^{n} \Delta\mathcal{E}(t_{k},t_{k-1}) =
2 \ \frac{\lambda^2}{\epsilon} \ p(1-p) \ [1 - \cos((n-1)\tau\epsilon)] \ .
\end{equation}
\end{theorem}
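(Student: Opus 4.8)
The plan is to assemble the statement directly from the reductions already carried out above, namely \eq{DEn-1,n2} and \eq{1st-term}, and then to telescope. I would first treat the single-step variation $\Delta\mathcal{E}(t_n,t_{n-1})$ given by \eq{DEn-1,n2}. That expression is a sum of two traces. The second one, $\Tr_{\cH_C\otimes\cH_A}\{\rho_C\otimes\rho_A\,(\idty\otimes E(a_n^*a_n-a_{n-1}^*a_{n-1}))\}$, equals $E(p-p)=0$ precisely because the beam is homogeneous (Assumption \ref{Init-State-remark}), so that $p=\Tr_{\cH_{A_k}}(a_k^*a_k\,\rho_k)$ does not depend on $k$. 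For the first trace I would conjugate $b^*\otimes\idty$ and $b\otimes\idty$ by $\pi_{n-1}(\tau)$ using Lemma \ref{n-dual-on-b}, which rewrites $\pi_{n-1}(\tau)(b^\sharp\otimes\idty)\pi_{n-1}^*(\tau)$ as a linear combination of $b^*\otimes\idty$, $b\otimes\idty$, and the atomic occupation numbers $\idty\otimes a_k^*a_k$ with $1\le k\le n-1$. Gauge invariance of the initial cavity state $\rho_C$ annihilates the terms linear in $b^*\otimes\idty$ and $b\otimes\idty$, so only the atomic part survives.

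The surviving atomic expectations are evaluated with \eq{p-2p}: $\Tr_{\cH_A}\rho_A\,a_k^*a_k(a_n^*a_n-a_{n-1}^*a_{n-1})$ equals $p^2-p^2=0$ for $k\le n-2$ and $p^2-p=-p(1-p)$ for $k=n-1$, so only the $k=n-1$ term contributes. Collecting the $b^*$ and $b$ contributions — which are complex conjugates of each other and hence combine into a real quantity — produces exactly $2(\lambda^2/\epsilon)\,p(1-p)\,[\cos((n-2)\tau\epsilon)-\cos((n-1)\tau\epsilon)]$, which is the first displayed formula of the theorem. This is essentially the content of \eq{1st-term}, so at this point the single-step claim is done.

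Finally, for the total variation between $t_1$ and $t_n$ I would simply sum the per-step formula over $k=2,\dots,n$. The cosine differences telescope: $\sum_{k=2}^{n}[\cos((k-2)\tau\epsilon)-\cos((k-1)\tau\epsilon)]=\cos 0-\cos((n-1)\tau\epsilon)=1-\cos((n-1)\tau\epsilon)$, which yields $\Delta\mathcal{E}(t_n,t_1)=2(\lambda^2/\epsilon)\,p(1-p)\,[1-\cos((n-1)\tau\epsilon)]$. I do not expect a genuine obstacle, since the real work lies in Lemma \ref{n-dual-on-b} and the preliminary reductions \eq{DEn-1,n2}–\eq{1st-term}; the only points demanding care are the two cancellations — vanishing of the atomic-energy term (homogeneity of the beam) and vanishing of the terms linear in $b$ and $b^*$ (gauge invariance of $\rho_C$) — together with correctly pairing the $b^*$ and $b$ contributions into a single real cosine.
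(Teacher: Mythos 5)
Your proposal is correct and follows essentially the same route as the paper: the per-step formula is exactly the reduction \eqref{DEn-1,n2} combined with Lemma \ref{n-dual-on-b}, gauge invariance of $\rho_C$, homogeneity of the beam, and the Bernoulli property \eqref{p-2p} (this is the content of \eqref{1st-term}), and the total variation is then obtained by the same telescoping of the cosine differences. Your explicit observation that only the $k=n-1$ atomic correlation survives, giving $p^2-p=-p(1-p)$, is precisely the computation hidden in \eqref{1st-term}.
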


\subsection{Energy variation in the leaking cavity}\label{EEP_n0}
Although for the open cavity the time-dependent generator (\ref{Generator}) is still piecewise \textit{constant}
(\ref{Generator-KL}), the cavity energy  is continuously varying between the moments $\{t = k \tau\}_{k\geq0}$
(when the interaction may to jump (\ref{Ham-Model})) because of the leaking/injection of photons.

Therefore, as above we first concentrate on the elementary variation of the total energy, when the $n$-th atom is
going through the cavity between the moments $t^{\prime} = (n-1) \tau$ and $t^{\prime \prime} =n \tau - 0$:
\begin{align}
\Delta \mathcal{E}_{\sigma}(t^{\prime \prime}, t^{\prime}):=
\omega^{n \tau}_{\mathcal{S},\sigma}(H_n) - \omega^{(n-1) \tau}_{\mathcal{S},\sigma}(H_n) \ . \label{EnVar-sigma0}
\end{align}
Here again we used two facts: (1) for $t\in[(n-1)\tau, n\tau)$ the Hamiltonian (\ref{Ham-Model}) of the form (\ref{Ham-n})
is piecewise constant; (2) the state $\omega^{t}_{\mathcal{S},\sigma}(\cdot)$
(\ref{dual-sigma}) is time-continuous.

By virtue of (\ref{dual-sigma}) and of (\ref{S-state-evol-adj-sigma}) for operator $A= H_n$ (\ref{Ham-n}), we see
that the problem (\ref{EnVar-sigma0}) reduces to calculation of expectations:
\begin{equation}\label{EnVar-sigma1}
\epsilon \ \omega^{k\tau}_{\mathcal{S},\sigma}(b^*b \otimes \idty) \ \ {\rm{and}} \ \
\lambda \ \omega^{s\tau}_{\mathcal{S},\sigma} ((b^*+b)\otimes \eta_k) \ , \ k,s \geq 1 \ .
\end{equation}
The first expectation in (\ref{EnVar-sigma1}) is known due to (\ref{C-state-t-sigma}) and Theorem
\ref{number of photons-sigma}, see (\ref{mean-value}):
\begin{equation}\label{EnVar-sigma2}
\epsilon \ \omega^{k\tau}_{\mathcal{S},\sigma}(b^*b \otimes \idty) = \epsilon \, N_\sigma(k\tau) \ .
\end{equation}
To calculate the second expectation in (\ref{EnVar-sigma1}) we use (\ref{dual-sigma}) for operator
$A =((b^*+b)\otimes \idty)(\idty \otimes \eta_k)$
and representation (\ref{S-state-evol-adj-sigma}) for initial gauge-invariant state $\rho_{\mathcal{C}}$ and
for homogeneous atoms state $\rho_{\mathcal{A}}$.
\begin{lemma} \label{n-dual-on-b-sigma}
Let $\sigma_- > \sigma_+ \geq 0$. Then for the mappings $\{(T^\sigma_{n\tau,0})^\ast\}_{n\geq0}$, see
{\rm{(}}\ref{S-state-evol-adj-sigma}{\rm{)}}, one obtains:
\begin{align}\label{T-sigma-n-star}
(T^\sigma_{n\tau,0})^*(b\otimes\idty)&=e^{-n\mu\tau}b\otimes\idty-\frac{\lambda i}{\mu}(1-e^{-\mu\tau})
\sum_{k=1}^ne^{-(n-k)\mu\tau}\idty\otimes \eta_k \ ,
\end{align}
and $(T^\sigma_{n\tau,0})^*(b^*\otimes\idty) = ((T^\sigma_{n\tau,0})^*(b\otimes\idty))^*$.
\end{lemma}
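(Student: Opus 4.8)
The plan is to follow the strategy of Lemma~\ref{n-dual-on-b}, replacing the one-step Hamiltonian conjugations $\pi_n(\tau)$ by the one-step dissipative (Heisenberg) maps $e^{\tau L_{\sigma,k}^{\ast}}$. By \eq{S-state-evol-adj-sigma}, applied at $t=n\tau$ (i.e.\ after $n$ complete intervals), the dual propagator factorizes as a composition of single-interval maps,
\begin{equation*}
(T^{\sigma}_{n\tau,0})^{\ast}=e^{\tau L_{\sigma,1}^{\ast}}\circ e^{\tau L_{\sigma,2}^{\ast}}\circ\cdots\circ e^{\tau L_{\sigma,n}^{\ast}},
\end{equation*}
where the factor $e^{\tau L_{\sigma,n}^{\ast}}$ is the one applied first. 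So it suffices to compute the action of a single factor $e^{\tau L_{\sigma,k}^{\ast}}$ on the operators $b\otimes\idty$ and $\idty\otimes\eta_{j}$ (with $\eta_{j}=a_{j}^{\ast}a_{j}$), and then to iterate.

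For the single step I would first observe that $L_{\sigma,k}^{\ast}$ leaves the two-dimensional span of $b\otimes\idty$ and $\idty\otimes\eta_{k}$ invariant. Starting from the adjoint of the generator \eq{Generator-KL} and using the CCR for $b,b^{\ast}$ together with the fact that $\idty\otimes\eta_{j}$ commutes with $b$, $b^{\ast}$ and with $H_{k}$, a short computation gives
\begin{equation*}
L_{\sigma,k}^{\ast}(b\otimes\idty)=-\mu\,(b\otimes\idty)-i\lambda\,(\idty\otimes\eta_{k}),\qquad L_{\sigma,k}^{\ast}(\idty\otimes\eta_{j})=0,
\end{equation*}
with $\mu=\tfrac{\sigma}{2}+i\epsilon$ as in the proof of Lemma~\ref{number-photon-sigma}: the dissipative part of $L_{\sigma,k}^{\ast}$ contributes $-\tfrac{\sigma}{2}(b\otimes\idty)$ and the Hamiltonian part contributes $-i\epsilon(b\otimes\idty)-i\lambda(\idty\otimes\eta_{k})$. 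This is exactly the computation behind \eq{action on b}, except that no atomic trace is taken, so the scalar $p=\Tr(\eta_{k}\rho_{k})$ there is replaced by the operator $\eta_{k}$ itself. Solving the resulting linear ODE $\partial_{s}B(s)=L_{\sigma,k}^{\ast}(B(s))$ with $B(0)=b\otimes\idty$ and constant inhomogeneity $\idty\otimes\eta_{k}$ yields
\begin{equation*}
e^{\tau L_{\sigma,k}^{\ast}}(b\otimes\idty)=e^{-\mu\tau}(b\otimes\idty)-\frac{i\lambda}{\mu}(1-e^{-\mu\tau})(\idty\otimes\eta_{k}),\qquad e^{\tau L_{\sigma,k}^{\ast}}(\idty\otimes\eta_{j})=\idty\otimes\eta_{j}.
\end{equation*}

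Then I would prove \eq{T-sigma-n-star} by induction on $n$, using $(T^{\sigma}_{(n+1)\tau,0})^{\ast}=(T^{\sigma}_{n\tau,0})^{\ast}\circ e^{\tau L_{\sigma,n+1}^{\ast}}$: apply the single-step formula to $b\otimes\idty$, then apply $(T^{\sigma}_{n\tau,0})^{\ast}$ to the outcome, invoking the induction hypothesis on the $b\otimes\idty$ summand and the observation that every factor $e^{\tau L_{\sigma,k}^{\ast}}$ with $k\le n$ fixes $\idty\otimes\eta_{n+1}$. Collecting terms, the coefficient of $\idty\otimes\eta_{k}$ acquires one further factor $e^{-\mu\tau}$ for each of the $n-k$ maps applied after $\eta_{k}$ has first appeared, producing the geometric weight $e^{-(n-k)\mu\tau}$ in \eq{T-sigma-n-star}. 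The statement for $b^{\ast}\otimes\idty$ is then immediate: each $T^{\sigma}_{t,0}$ is completely positive (Section~\ref{sec:Nonequil_setup}), hence Hermiticity-preserving, and so is its dual, whence $(T^{\sigma}_{n\tau,0})^{\ast}(b^{\ast}\otimes\idty)=\big((T^{\sigma}_{n\tau,0})^{\ast}(b\otimes\idty)\big)^{\ast}$.

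I do not expect a genuine obstacle here. The two points that need care are the bookkeeping of the composition order, so that the weights come out as $e^{-(n-k)\mu\tau}$ (the contribution of the $k$-th atom having decayed over the remaining $n-k$ intervals) rather than as $e^{-(k-1)\mu\tau}$, and the verification that $\{b\otimes\idty,\ \idty\otimes\eta_{k}\}$ spans an $L_{\sigma,k}^{\ast}$-invariant subspace — this is the structural fact, already implicit in the $\sigma=0$ case of Lemma~\ref{n-dual-on-b}, that allows the iteration to close in finitely many terms.
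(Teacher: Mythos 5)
Your proposal is correct, and its skeleton is the same as the paper's: both arguments induct on $n$ using the factorization $(T^\sigma_{(n+1)\tau,0})^{*}=(T^\sigma_{n\tau,0})^{*}\circ e^{\tau L^{*}_{\sigma,n+1}}$, the single-step formula $e^{\tau L^{*}_{\sigma,n+1}}(b\otimes\idty)=e^{-\mu\tau}\,b\otimes\idty-\frac{i\lambda}{\mu}(1-e^{-\mu\tau})\,\idty\otimes\eta_{n+1}$, and the fact that every earlier factor fixes $\idty\otimes\eta_{n+1}$. The only genuine difference is how the single-step action is obtained: the paper conjugates with the shift $\widehat{S}_{n+1}$ of (\ref{Vn}), splits $\widehat{S}_{n+1}(b\otimes\idty)$ along the projections $\eta_{n+1}$ and $\idty-\eta_{n+1}$ via (\ref{tb}) and (\ref{tilde_L}), and reuses the quasi-free solutions $\gamma_{\lambda,\tau}(b)$, $\gamma_{0,\tau}(b)$ from (\ref{gamma_b}); you instead compute $L^{*}_{\sigma,k}(b\otimes\idty)=-\mu\, b\otimes\idty - i\lambda\,\idty\otimes\eta_{k}$ and $L^{*}_{\sigma,k}(\idty\otimes\eta_{j})=0$ directly and solve the resulting closed two-dimensional linear ODE. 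Your route is more self-contained and elementary; the paper's buys consistency with the machinery already set up for Lemmas \ref{L-sigma} and \ref{number-photon-sigma} (and makes transparent why the answer is the operator-valued version of (\ref{action on b}), with $p$ replaced by $\eta_k$, exactly as you observe). One cosmetic remark: in your closing heuristic the weight $e^{-(n-k)\mu\tau}$ is attributed to maps applied \emph{after} $\eta_k$ first appears, whereas in the dual composition it is accumulated by the $b$-coefficient under the $n-k$ factors $e^{\tau L^{*}_{\sigma,n}},\dots,e^{\tau L^{*}_{\sigma,k+1}}$ acting \emph{before} $e^{\tau L^{*}_{\sigma,k}}$ spawns $\eta_k$ (the factors with index below $k$ fix it); your induction step itself handles this bookkeeping correctly, so this does not affect the proof.
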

\begin{proof}
We prove this lemma by induction. Suppose that formula (\ref{T-sigma-n-star}) is true for $(T_{n\tau,0}^\sigma)^*$.
Then we show that it is also valid for $(T^\sigma_{(n+1)\tau,0})^*$.
By virtue of (\ref{S-state-evol-adj-sigma}) and by (\ref{tilde_L}) for the action of
operator $e^{\tau L_{\sigma, n}^*}$, one gets
\begin{align*}
&(T^\sigma_{n+1})^*(b\otimes\idty)=(T^\sigma_{n})^*(e^{\tau L_{\sigma, n+1}^*}(b\otimes\idty))\\
&=(T^\sigma_{n})^*(\widehat{S}_{n+1}^{*}(e^{\tau \widehat{L}^*_{\sigma, n+1}}(\widehat{S}_{n+1}
(b\otimes\idty))))\\
&=(T^\sigma_{n})^*(\widehat{S}_{n+1}^{*}(e^{\tau \widehat{L}^*_{\sigma, n+1}} \Bigl((b-\frac{\lambda}{\epsilon})\otimes
\eta_{n+1}+b\otimes (I-\eta_{n+1})\Bigr))) \ ,
\end{align*}
where we used (\ref{tb}) in the last line. By  (\ref{Dyn-C}),(\ref{gamma_b})
combined with the shift $\widehat{S}_{n+1}^{*}$, we obtain
\begin{align*}
&\widehat{S}_{n+1}^{*}(e^{\tau\widehat{L}^*_{\sigma, n+1}}((b-\frac{\lambda}{\epsilon})\otimes \eta_{n+1}))=
\widehat{S}_{n+1}^{*}((\gamma_{\lambda,\tau}(b)-\frac{\lambda}{\epsilon})\otimes \eta_{n+1})\\
&=(e^{-\mu\tau}b-\frac{\lambda i}{\mu}(1-e^{-\mu\tau}))\otimes \eta_{n+1} \ ,\\
&\widehat{S}_{n+1}^{*}(e^{\tau\widehat{L}^*_{\sigma, n+1}}(b\otimes (I-\eta_{n+1}))) =
\widehat{S}_{n+1}^{*}(\gamma_{0,\tau}(b)\otimes (I-\eta_{n+1}))\\
&=e^{-\mu\tau}b\otimes(I - \eta_{n+1}) \ .
\end{align*}
Consequently,
\begin{align*}
&(T^\sigma_{n+1})^*(b\otimes\idty))\\
&=(T^\sigma_{n})^*((e^{-\mu\tau}b-\frac{\lambda i}{\mu}(1-e^{-\mu\tau}))\otimes
\eta_{n+1}+e^{-\mu\tau}b\otimes(\idty- \eta_{n+1}))\\
&=(T^\sigma_{n})^*(e^{-\mu\tau}b\otimes\idty-\frac{\lambda i}{\mu}(1-e^{-\mu\tau})\idty\otimes \eta_{n+1})\\
&=e^{-(n+1)\mu\tau}b\otimes\idty-\frac{\lambda i}{\mu}(1-e^{-\mu\tau})\sum_{k=1}^{n+1}
e^{-(n-k+1)\mu\tau}\idty\otimes \eta_k,
\end{align*}
which proves the lemma.
\end{proof}
Recall that $\mu= i\epsilon + \sigma/2$. Hence, in the limit $\sigma \rightarrow +0$
one recovers from this Lemma formulae (\ref{k-iter}) for the ideal cavity.
\begin{corollary}\label{int-energy-n-n-1}
Let initial cavity state $\rho_{\mathcal{C}}$ be gauge-invariant state for homogeneous
state $\rho_{\mathcal{A}}$ of the atomic beam.
Then with help of (\ref{p-2p}) and (\ref{T-sigma-n-star}) one obtains the interaction energy expectations
(\ref{EnVar-sigma1}) corresponding to the difference (\ref{EnVar-sigma0}):
\begin{align}\label{int-energy-n}
&\lambda \ \omega^{n\tau}_{\mathcal{S},\sigma} ((b^*+b)\otimes \eta_n) =  \\
&-\frac{2\lambda^2\epsilon }{|\mu|^2}\left[p(1-p)(1-e^{-\sigma\tau/2}\cos\epsilon\tau)
+ p^{2}(1-e^{-n\sigma\tau/2}\cos n\epsilon\tau)\right] \nonumber \\
&+\frac{\lambda^2\sigma}{|\mu|^2}\left[p(1-p)e^{-\sigma\tau/2}\sin\epsilon\tau
+p^{2}e^{-n\sigma\tau/2}\sin n\epsilon\tau\right] \ , \nonumber \\
&\lambda \ \omega^{(n-1)\tau}_{\mathcal{S},\sigma} ((b^*+b)\otimes \eta_n) =
- \frac{2\lambda^2\epsilon}{|\mu|^2}p^{2} (1-e^{-(n-1)\sigma\tau/2}\cos (n-1)\epsilon\tau) \nonumber \\
&+ \frac{\lambda^2\sigma}{|\mu|^2} p^{2} e^{-(n-1)\sigma\tau/2}\sin (n-1)\epsilon\tau \ .
\label{int-energy-n-1}
\end{align}
\end{corollary}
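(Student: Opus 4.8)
The plan is to move to the Heisenberg picture via the duality relation (\ref{dual-sigma}): for $m\in\{n-1,n\}$,
\[
\lambda\,\omega^{m\tau}_{\mathcal{S},\sigma}\big((b^{*}+b)\otimes\eta_n\big)
=\lambda\,\Tr\big((\rho_C\otimes\rho_A)\,(T^\sigma_{m\tau,0})^{\ast}\big((b^{*}+b)\otimes\eta_n\big)\big).
\]
Since each $L_{\sigma,k}$, hence $(T^\sigma_{m\tau,0})^{\ast}$, intertwines the $\ast$-operation, the $b^{*}$-contribution is the complex conjugate of the $b$-contribution, so the right-hand side equals $2\lambda\,\Re\,\Tr\big((\rho_C\otimes\rho_A)\,(T^\sigma_{m\tau,0})^{\ast}(b\otimes\eta_n)\big)$. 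Thus it suffices to evaluate $(T^\sigma_{m\tau,0})^{\ast}(b\otimes\eta_n)$ for $m=n-1$ and $m=n$ and then take the partial trace; the unbounded observable $b$ is legitimate here under the gauge-invariance and finite-energy hypotheses, as in Remark \ref{dual-operator-sigma}.

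The structural point is that $\eta_n=a_n^{*}a_n$ is a projection which commutes with $b,b^{*}$ and with every $H_k$, and is conserved by every $L_{\sigma,k}$. Hence $e^{\tau L_{\sigma,k}^{\ast}}(\idty\otimes\eta_n)=\idty\otimes\eta_n$ for all $k$, and for $k\neq n$ one has $e^{\tau L_{\sigma,k}^{\ast}}(X\otimes\eta_n)=\big(e^{\tau L_{\sigma,k}^{\ast}}(X\otimes\idty)\big)(\idty\otimes\eta_n)$ whenever $X$ involves no site $n$ (insert $\idty\otimes\eta_n$ to the right of $L_{\sigma,k}^{\ast}$ and iterate). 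For the time $t=(n-1)\tau$ the operator $(T^\sigma_{(n-1)\tau,0})^{\ast}$ involves only the atoms $1,\dots,n-1$, see (\ref{S-state-evol-adj-sigma}), so $\eta_n$ factors straight through and $(T^\sigma_{(n-1)\tau,0})^{\ast}(b\otimes\eta_n)=\big((T^\sigma_{(n-1)\tau,0})^{\ast}(b\otimes\idty)\big)(\idty\otimes\eta_n)$ is given explicitly by Lemma \ref{n-dual-on-b-sigma}. Taking $\Tr\big((\rho_C\otimes\rho_A)\,\cdot\,\big)$: the $b\otimes\idty$ term vanishes because $\rho_C$ is gauge-invariant ($\Tr_{\cH_C}(b\rho_C)=0$), while each $\idty\otimes\eta_k\eta_n$ term with $k\leq n-1$ contributes $p^{2}$ by (\ref{p-2p}); summing the geometric series in $e^{-\mu\tau}$ and taking $2\lambda\Re$ with $\mu=\frac{\sigma}{2}+i\epsilon$ gives (\ref{int-energy-n-1}).

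For $t=n\tau$ the only new ingredient is the first factor $e^{\tau L_{\sigma,n}^{\ast}}$. I would obtain $e^{\tau L_{\sigma,n}^{\ast}}(b\otimes\eta_n)$ by writing $b\otimes\idty=b\otimes\eta_n+b\otimes(\idty-\eta_n)$, noting that on the $\eta_n=0$ sector $L_{\sigma,n}^{\ast}$ reduces to the free cavity generator so that $e^{\tau L_{\sigma,n}^{\ast}}(b\otimes(\idty-\eta_n))=e^{-\mu\tau}b\otimes(\idty-\eta_n)$, and subtracting this from $e^{\tau L_{\sigma,n}^{\ast}}(b\otimes\idty)$ (Lemma \ref{n-dual-on-b-sigma} with index $n$); the result is $e^{\tau L_{\sigma,n}^{\ast}}(b\otimes\eta_n)=e^{-\mu\tau}b\otimes\eta_n-\frac{i\lambda}{\mu}(1-e^{-\mu\tau})\,\idty\otimes\eta_n$. (Equivalently one conjugates by the unitary shift $\widehat S_n$ of (\ref{Vn}), using (\ref{tb}), (\ref{Diagonalized Hamiltonian}) and the shifted-oscillator evolution (\ref{gamma_b}), but then one must also track how $\widehat S_n$ moves the observable $b$.) Applying the remaining product $(T^\sigma_{(n-1)\tau,0})^{\ast}$, which again only touches atoms $1,\dots,n-1$ and therefore acts on the cavity part through Lemma \ref{n-dual-on-b-sigma} with index $n-1$, and then $\Tr\big((\rho_C\otimes\rho_A)\,\cdot\,\big)$ together with (\ref{p-2p}): the surviving $-\frac{i\lambda}{\mu}(1-e^{-\mu\tau})\,\idty\otimes\eta_n$ term produces the self-term $p$ and the $\idty\otimes\eta_k\eta_n$ terms produce $p^{2}$; the geometric sum and $2\lambda\Re$ then yield (\ref{int-energy-n}).

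The work is bookkeeping rather than conceptual, and the step most prone to error is the computation of $e^{\tau L_{\sigma,n}^{\ast}}(b\otimes\eta_n)$: the naive ``restrict to the $\eta_n=1$ subspace and run $\gamma_{\lambda,\tau}$'' shortcut omits the shift acting on $b$ and would replace the constant $-\frac{i\lambda}{\mu}(1-e^{-\mu\tau})$ by $\frac{\lambda\sigma}{2\epsilon\mu}(1-e^{-\mu\tau})$, spoiling the answer. The final rewriting into the stated form with $|\mu|^{2}$, $\cos k\epsilon\tau$ and $\sin k\epsilon\tau$ via $e^{-\mu\tau}=e^{-\sigma\tau/2}(\cos\epsilon\tau-i\sin\epsilon\tau)$ and $-i/\mu=(-\epsilon-i\sigma/2)/|\mu|^{2}$ is routine but must be carried out carefully to match the signs in (\ref{int-energy-n}) and (\ref{int-energy-n-1}).
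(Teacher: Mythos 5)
Your proposal is correct and follows essentially the route the paper intends for this corollary: pass to the dual maps $(T^\sigma_{m\tau,0})^*$ via (\ref{dual-sigma}), use that $\idty\otimes\eta_n$ commutes with every $H_k$ and with $b,b^*$ so that Lemma \ref{n-dual-on-b-sigma} can be applied, kill the $b,b^*$ terms by gauge invariance of $\rho_C$, evaluate $\Tr(\rho_A\,\eta_k\eta_n)$ by the Bernoulli property (\ref{p-2p}), sum the geometric series, and take $2\lambda\Re$ with $\mu=\sigma/2+i\epsilon$. Your sector-splitting computation of $e^{\tau L^*_{\sigma,n}}(b\otimes\eta_n)$ is a correct (and equivalent) way of handling the $k=n$ factor that the paper handles by simply multiplying $(T^\sigma_{n\tau,0})^*(b\otimes\idty)$ by $\idty\otimes\eta_n$, and it reproduces the self-term $p$ and cross-terms $p^2$ exactly as in (\ref{int-energy-n}) and (\ref{int-energy-n-1}).
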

\begin{corollary}\label{energy-n-n-1}
Taking into account Theorem \ref{number of photons-sigma} and (\ref{int-energy-n}),
(\ref{int-energy-n-1}) we get for the elementary variation of the total energy (\ref{EnVar-sigma0})
\begin{align}\label{EnVar-sigma3}
&\Delta \mathcal{E}_{\sigma}(n \tau - 0, (n-1) \tau)=\epsilon \, (N_\sigma(n\tau) - N_\sigma((n-1)\tau)) \\
&+\lambda \ (\omega^{n\tau}_{\mathcal{S},\sigma} ((b^*+b)\otimes \eta_n) -
\omega^{(n-1)\tau}_{\mathcal{S},\sigma} ((b^*+b)\otimes \eta_n)) \nonumber \\
&= -\epsilon \, N_\sigma(0)(1 - e^{-\sigma\tau})
e^{-(n-1)\sigma\tau} \nonumber \\
&- p(1-p)\frac{2\lambda^2\epsilon }{|\mu|^2}(1-e^{-\sigma\tau/2}\cos\epsilon\tau)
(1-e^{-(n-1)\sigma\tau}) \nonumber \\
&+ p(1-p) \frac{\lambda^2\sigma}{|\mu|^2} e^{-\sigma\tau/2}\sin \epsilon\tau
\nonumber \\
&+ p^{2}\frac{\lambda^2\sigma}{|\mu|^2}\left[e^{-n\sigma\tau/2}\sin n\epsilon\tau -
e^{-(n-1)\sigma\tau/2}\sin (n-1)\epsilon\tau\right]. \nonumber
\end{align}
\end{corollary}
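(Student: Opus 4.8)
The plan is to prove \eq{EnVar-sigma3} by a direct substitution, since the analytic content is already carried by Corollary \ref{int-energy-n-n-1} (the interaction-energy expectations) and by Theorem \ref{number of photons-sigma} (the photon number). First I would start from the definition \eq{EnVar-sigma0}, $\Delta \mathcal{E}_{\sigma}(n\tau-0,(n-1)\tau) = \omega^{n\tau}_{\mathcal{S},\sigma}(H_n) - \omega^{(n-1)\tau}_{\mathcal{S},\sigma}(H_n)$, and expand $H_n$ according to \eq{Ham-n} into the three pieces $\epsilon b^*b\otimes\idty$, $\idty\otimes E a_n^*a_n$, and $\lambda(b^*+b)\otimes a_n^*a_n$. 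The middle piece has expectation $E\,p$ in both states — the atomic marginal is stationary by \eq{commut-atoms} and $p=\Tr_{\cH_{A_n}}(a_n^*a_n\rho_n)$ is independent of $n$ — so it drops out of the difference. This reduces $\Delta\mathcal{E}_\sigma$ to $\epsilon\,(N_\sigma(n\tau)-N_\sigma((n-1)\tau))$ plus $\lambda$ times the difference of interaction-energy expectations, which is exactly the first two lines of \eq{EnVar-sigma3}.

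Next I would substitute the explicit form \eq{mean-value} for $N_\sigma(n\tau)$ and $N_\sigma((n-1)\tau)$ (using $\sigma_--\sigma_+=\sigma$, with $\sigma_+=0$ for the purely dissipative generator \eq{Generator}) and the expressions \eq{int-energy-n}, \eq{int-energy-n-1} for the interaction-energy expectations. The only nontrivial simplification is the telescoping of the geometric prefactor, $\frac{1-e^{-n\sigma\tau}}{1-e^{-\sigma\tau}} - \frac{1-e^{-(n-1)\sigma\tau}}{1-e^{-\sigma\tau}} = e^{-(n-1)\sigma\tau}$, together with $e^{-n\sigma\tau}-e^{-(n-1)\sigma\tau} = -(1-e^{-\sigma\tau})e^{-(n-1)\sigma\tau}$, which yields the $N_\sigma(0)$ line of \eq{EnVar-sigma3}.

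Finally I would collect the remaining terms by their dependence on $p$. In the $p(1-p)$ sector the photon-number difference contributes $+\epsilon\,p(1-p)\frac{2\lambda^2}{|\mu|^2}(1-e^{-\sigma\tau/2}\cos\epsilon\tau)e^{-(n-1)\sigma\tau}$, while \eq{int-energy-n} contributes $-p(1-p)\frac{2\lambda^2\epsilon}{|\mu|^2}(1-e^{-\sigma\tau/2}\cos\epsilon\tau)$ (there is no $p(1-p)$ term in \eq{int-energy-n-1}); these combine into the second line of \eq{EnVar-sigma3}, and the leftover $+p(1-p)\frac{\lambda^2\sigma}{|\mu|^2}e^{-\sigma\tau/2}\sin\epsilon\tau$ gives the third line. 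In the $p^2$ sector the crucial point is that the cosine contributions coming from $\epsilon N_\sigma$ and from the $-\frac{2\lambda^2\epsilon}{|\mu|^2}$ parts of \eq{int-energy-n} and \eq{int-energy-n-1} cancel identically — both equal $\mp\frac{2\lambda^2\epsilon}{|\mu|^2}p^2\big(e^{-n\sigma\tau/2}\cos n\epsilon\tau - e^{-(n-1)\sigma\tau/2}\cos(n-1)\epsilon\tau\big)$ — leaving only the sine difference $p^2\frac{\lambda^2\sigma}{|\mu|^2}\big[e^{-n\sigma\tau/2}\sin n\epsilon\tau - e^{-(n-1)\sigma\tau/2}\sin(n-1)\epsilon\tau\big]$, i.e. the fourth line. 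There is no conceptual obstacle; the step requiring care — and the likely source of sign slips — is this $p^2$ cosine cancellation, where one must keep the real part ($2\epsilon$) and imaginary part ($\sigma$) of $\mu=\sigma/2+i\epsilon$ carefully separated and track which of the two time-arguments carries the $p(1-p)$ contributions.
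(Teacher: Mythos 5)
Your proposal is correct and follows essentially the same route as the paper: the corollary is obtained there exactly by reducing \eqref{EnVar-sigma0} to the two expectations \eqref{EnVar-sigma1} (the constant atomic term $E\,a_n^*a_n$ having expectation $Ep$ in both states) and then substituting \eqref{mean-value}, \eqref{int-energy-n} and \eqref{int-energy-n-1}, with the same telescoping of the geometric factor and the same cancellation of the $p^2$ cosine terms that you carry out explicitly.
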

Note that in the limit of the ideal cavity: $\sigma \rightarrow 0$, one gets
for total energy variation (\ref{EnVar-sigma3}) that increment $\Delta \mathcal{E}_{\sigma}(n \tau - 0, (n-1) \tau) =0$,
which corresponds to the autonomous case. The limit of the energy increment when
$n\rightarrow\infty$ is
\begin{align}\label{lim-increm}
&\lim_{n\rightarrow\infty} \Delta \mathcal{E}_{\sigma}(n \tau - 0, (n-1) \tau) =\\
&p(1-p)\frac{\lambda^2}{|\mu|^2}\left[- 2\epsilon (1-e^{-\sigma\tau/2}\cos\epsilon\tau)
+ \sigma e^{-\sigma\tau/2}\sin \epsilon\tau \right]. \nonumber
\end{align}
\begin{remark}\label{EnVar-sigma-jump}
To consider the impact when the $n$-th atom enters the cavity we study the total energy variation
on the extended interval $((n-1)\tau -0, n\tau -0)$. Then
\begin{align}\label{EnVar-sigma-jump1}
&\Delta \mathcal{E}_{\sigma}(n\tau - 0, (n-1)\tau -0) =
(\omega^{n \tau-0}_{\mathcal{S},\sigma}(H_n) - \omega^{(n-1) \tau}_{\mathcal{S},\sigma}(H_n))\\
&+ (\omega^{(n-1) \tau}_{\mathcal{S},\sigma}(H_n) - \omega^{(n-1)\tau -0}_{\mathcal{S},\sigma}(H_{n-1})) \ ,
\nonumber
\end{align}
where the second difference $\Delta \mathcal{E}_{\sigma}((n-1)\tau,(n-1)\tau -0):=
\omega^{(n-1) \tau}_{\mathcal{S},\sigma}(H_n) - \omega^{(n-1)\tau -0}_{\mathcal{S},\sigma}(H_{n-1})$ corresponds to the
energy variation (\textit{jump}), when the $n$-th atom enters the cavity and the $(n-1)$-th atom leaves it.
\end{remark}
To calculate $\Delta \mathcal{E}_{\sigma}((n-1)\tau, (n-1)\tau -0)$ note that by the time continuity of the state
\begin{align*}
&\Delta \mathcal{E}_{\sigma}((n-1)\tau,(n-1)\tau -0)=\Tr(\rho_S((n-1)\tau)H_{n})-\Tr(\rho_S((n-1)\tau)H_{n-1})\\
&=\Tr\left(e^{\tau L_{\sigma,n-1}}\, ... \ e^{\tau L_{\sigma, 1}}(\rho_C\otimes\rho_\cA))(H_n-H_{n-1})\right)\\
&=\Tr\left(T^\sigma_{(n-1)\tau,0}(\rho_C\otimes\rho_\cA)(H_n-H_{n-1})\right)\\
&=\Tr\left(\rho_C\otimes\rho_A(T^\sigma_{(n-1)\tau,0})^*(\lambda(b^*+b)\otimes (\eta_n-\eta_{n-1})\right)\\
&=\Tr\left(\rho_C\otimes\rho_A(T^\sigma_{(n-1)\tau,0})^*(\lambda(b^*+b)\otimes (\eta_n-\eta_{n-1})\right)\\
&=\Tr\{\rho_C\otimes\rho_A (T^\sigma_{(n-1)\tau,0})^*(\lambda (b^*+b)\otimes \idty)[\idty\otimes (\eta_n -\eta_{n-1})]\},
\end{align*}
where $T^\sigma_{t=n\tau,0}=e^{\tau L_{\sigma,n}}\, ... \ e^{\tau L_{\sigma, 1}}$ is defined by (\ref{Sol-Liouv-Eq-sigma}).

If the initial cavity state is gauge-invariant, then (\ref{T-sigma-n-star}) yields for the energy jump at the moment
$t=(n-1)\tau$:
\begin{align}\label{energy-jump}
&\Delta \mathcal{E}_{\sigma}((n-1)\tau,(n-1)\tau -0)=\\
&\Tr\{\rho_C\otimes\rho_A (T^\sigma_{(n-1)\tau,0})^* (\lambda (b^*+b)\otimes \idty)
[\idty\otimes (\eta_n -\eta_{n-1})]\}\nonumber  \\
&= \frac{\lambda^2 i}{\bar{\mu}}(1-e^{-\bar{\mu}\tau})\sum_{k=1}^{n-1}e^{-(n-k-1)\bar{\mu}\tau}
\Tr_{\cH_A} (\rho_A\eta_k(\eta_n -\eta_{n-1})) \nonumber\\
&-\frac{\lambda^2 i}{\mu}(1-e^{-\mu\tau})\sum_{k=1}^{n-1}e^{-(n-k-1)\mu\tau}\Tr_{\cH_A} (\rho_A \eta_k(\eta_n -
\eta_{n-1})) .\nonumber
\end{align}
Taking into account the Bernoulli property (\ref{p-2p}) we obtain from (\ref{energy-jump})
\begin{align}
&\Delta \mathcal{E}_{\sigma}((n-1)\tau,(n-1)\tau -0)=
\frac{\lambda^2 i}{\mu}(1-e^{-\mu\tau})p(1-p)-\frac{\lambda^2 i}{\bar{\mu}}(1-e^{-\bar{\mu}\tau})p(1-p)\nonumber\\
&=p(1-p)\frac{2\lambda^2\epsilon}{|\mu|^2}(1-e^{-\sigma\tau/2}\cos\epsilon\tau)\nonumber \\
&-p(1-p)\frac{\lambda^2\sigma}{|\mu|^2}e^{-\sigma\tau/2}\sin\epsilon\tau.
\label{second_term}
\end{align}

Notice again that for $\sigma\rightarrow +0$ one obtains from (\ref{second_term}) the one-step energy variation
for the ideal cavity (\ref{1st-term}).

Summarising (\ref{EnVar-sigma3}) and (\ref{second_term}), we obtain the energy increment (\ref{EnVar-sigma-jump1})
which is due to impact of the open cavity effects (\ref{EnVar-sigma3}) and to the atomic beam pumping (\ref{second_term}):
\begin{align}\label{EnVar-sigma-jump2}
&\Delta \mathcal{E}_{\sigma}(n\tau - 0, (n-1)\tau -0) = \\
&=- \epsilon N_\sigma(0)(1 - e^{-(\sigma_{-} - \sigma_{+})\tau})
e^{-(n-1)\sigma\tau} \nonumber \\
&+ p(1-p)\frac{2\lambda^2\epsilon }{|\mu|^2}(1-e^{-\sigma\tau/2}\cos\epsilon\tau)
e^{-(n-1)\sigma\tau} \nonumber \\
&+p^{2}\frac{\lambda^2\sigma}{|\mu|^2}\left[e^{-n\sigma\tau/2}\sin n\epsilon\tau -
e^{-(n-1)\sigma)\tau/2}\sin (n-1)\epsilon\tau\right]. \nonumber
\end{align}
\begin{theorem}\label{Energy-sigma-variation}
By virtue of (\ref{EnVar-sigma-jump2}) the total energy variation between initial state at the moment $t_0 := -0$,
when the cavity is empty, and the moment $t_n := n\tau -0$, just before the $n$-th atom is ready to leave the
cavity, is
\begin{align}\label{energ-var-op-syst-0-n}
&\Delta \mathcal{E}_{\sigma}(t_n, t_0) = \sum_{k=1}^n \Delta \mathcal{E}_{\sigma}(k\tau - 0, (k-1)\tau -0)\\
&= -\epsilon N_\sigma(0)(1 - e^{-n\sigma\tau})
\nonumber \\
&+p(1-p)\frac{2\lambda^2\epsilon }{|\mu|^2}(1-e^{-\sigma\tau/2}\cos\epsilon\tau)
\frac{1 - e^{-n\sigma\tau}}{1 - e^{-\sigma\tau}}\nonumber \\
&+p^{2}\frac{\lambda^2\sigma}{|\mu|^2} e^{-n\sigma)\tau/2} \sin n\epsilon\tau
\nonumber \ .
\end{align}
Here $N_\sigma(0) = \omega^{t_0}_{\mathcal{S},\sigma}(b^*b\otimes\idty)$ is the initial number of photons in
the cavity.
\end{theorem}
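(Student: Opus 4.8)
The plan is to derive \eqref{energ-var-op-syst-0-n} from the one-step increment formula \eqref{EnVar-sigma-jump2} by an additivity argument followed by two elementary summations, so that no new analytic input beyond Lemma~\ref{n-dual-on-b-sigma}, Theorem~\ref{number of photons-sigma} and Corollary~\ref{energy-n-n-1} is required. First I would observe that the half-open intervals $\big((k-1)\tau-0,\,k\tau-0\big)$, $k=1,\dots,n$, partition $\big(t_0,t_n\big)=\big(-0,\,n\tau-0\big)$, and that on the $k$-th of them the total-energy variation is precisely $\Delta\mathcal{E}_{\sigma}(k\tau-0,(k-1)\tau-0)$ as recorded in Remark~\ref{EnVar-sigma-jump}. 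Writing each such increment as $\omega^{k\tau-0}_{\mathcal{S},\sigma}(H_k)-\omega^{(k-1)\tau-0}_{\mathcal{S},\sigma}(H_{k-1})$ and summing over $k$, the inner expectations cancel in pairs and one is left with $\omega^{n\tau-0}_{\mathcal{S},\sigma}(H_n)-\omega^{-0}_{\mathcal{S},\sigma}(H_0)$; since at $t_0=-0$ the cavity contains no atom, $H_0$ reduces to $H_C=\epsilon\,b^*b\otimes\idty$ on the cavity factor, so $\omega^{-0}_{\mathcal{S},\sigma}(H_0)=\epsilon N_\sigma(0)$. This establishes the first equality $\Delta\mathcal{E}_{\sigma}(t_n,t_0)=\sum_{k=1}^{n}\Delta\mathcal{E}_{\sigma}(k\tau-0,(k-1)\tau-0)$ of the statement and the claimed reading of $N_\sigma(0)=\omega^{t_0}_{\mathcal{S},\sigma}(b^*b\otimes\idty)$ as the initial photon number.

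It then remains to add up \eqref{EnVar-sigma-jump2} over $k=1,\dots,n$ term by term. The first two summands each carry the factor $e^{-(k-1)\sigma\tau}$, and $\sum_{k=1}^{n}e^{-(k-1)\sigma\tau}=\frac{1-e^{-n\sigma\tau}}{1-e^{-\sigma\tau}}$; in the first summand the prefactor $(1-e^{-\sigma\tau})$ cancels this denominator and produces $-\epsilon N_\sigma(0)(1-e^{-n\sigma\tau})$, while the second summand gives $p(1-p)\frac{2\lambda^2\epsilon}{|\mu|^2}(1-e^{-\sigma\tau/2}\cos\epsilon\tau)\frac{1-e^{-n\sigma\tau}}{1-e^{-\sigma\tau}}$. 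The third summand is a telescoping difference: with $g(k):=e^{-k\sigma\tau/2}\sin(k\epsilon\tau)$ and $g(0)=0$ one has $\sum_{k=1}^{n}(g(k)-g(k-1))=g(n)$, which yields $p^{2}\frac{\lambda^2\sigma}{|\mu|^2}e^{-n\sigma\tau/2}\sin(n\epsilon\tau)$. Collecting the three contributions reproduces \eqref{energ-var-op-syst-0-n} exactly.

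I expect no genuine obstacle at this stage: the substantive work lies in the ingredients already in place --- the explicit action of $(T^\sigma_{n\tau,0})^{*}$ on $b,b^{*}$ in \eqref{T-sigma-n-star}, its combination with Theorem~\ref{number of photons-sigma} and the Bernoulli moment identity \eqref{p-2p} to obtain the interaction-energy expectations of Corollary~\ref{int-energy-n-n-1}, and the split of the increment into the ``during the passage'' part \eqref{EnVar-sigma3} and the ``atom-swap jump'' part \eqref{second_term}. The only point that deserves a sentence of care is the telescoping/additivity step: one must be sure the pairwise cancellations are legitimate even though the Hamiltonian of the infinite atomic beam has no finite expectation. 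This is handled exactly as in the perfect-cavity computation of Section~\ref{sec:Nonequil_energy_perfect} and in \eqref{energy-jump}, by working throughout with the finite operators $H_k$ acting only on the cavity together with the $k$-th atom, so that every expectation appearing in the telescoping sum is finite and the rearrangement is valid.
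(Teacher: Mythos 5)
Your proposal is correct and follows essentially the same route as the paper: the theorem is obtained by summing the one-step increments \eqref{EnVar-sigma-jump2} over $k=1,\dots,n$, with the geometric sum $\sum_{k=1}^{n}e^{-(k-1)\sigma\tau}$ handling the first two terms and the telescoping of $e^{-k\sigma\tau/2}\sin(k\epsilon\tau)$ handling the third, exactly as you describe. Your additional justification of the additivity step via cancellation of the intermediate expectations $\omega^{k\tau-0}_{\mathcal{S},\sigma}(H_k)$ is a welcome (and correct) elaboration of what the paper leaves implicit.
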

\begin{remark}\label{Energy-sigma-variationBIS}
Note that the total energy variation (\ref{energ-var-op-syst-0-n}) is due to evolution of the photon number in the
open cavity (\ref{mean-value}) and the variation of the interaction energy (\ref{int-energy-n}), that give
\begin{align}\label{energ-var-op-syst-0-nBIS}
\Delta \mathcal{E}_{\sigma}(t_n, t_0) &=
\epsilon \omega^{n\tau}_{\mathcal{S},\sigma}(b^*b\otimes\idty) +
\lambda \ \omega^{n\tau}_{\mathcal{S},\sigma} ((b^*+b)\otimes \eta_n) \\
&-\epsilon \omega^{t_0}_{\mathcal{S},\sigma}(b^*b\otimes\idty) \ . \nonumber
\end{align}
For $\sigma>0$ it is uniformly bounded from above
\begin{align}\label{energ-var-op-syst-0-n+}
\Delta \mathcal{E}_{\sigma}(t_n, t_0) &\leq \frac{2\lambda^2\epsilon }{|\mu|^2} \frac{p(1-p)}{1 - e^{-(sigma\tau/2}} \\
&+p^{2}\frac{\lambda^2\sigma}{|\mu|^2} . \nonumber
\end{align}
The lower bound of (\ref{energ-var-op-syst-0-n}) is also evident. It strongly depends on the initial condition
$N_\sigma(0)$ and can be negative.
\end{remark}
The long-time asymptotic of (\ref{energ-var-op-syst-0-n}), or (\ref{energ-var-op-syst-0-nBIS}), is
\begin{align}\label{energ-var-op-syst-0-inf}
&\Delta \mathcal{E}_{\sigma}:=\lim_{n\rightarrow\infty}\Delta \mathcal{E}_{\sigma}(t_n, t_0)
=-\epsilon N_\sigma(0)
+p(1-p)\frac{2\lambda^2\epsilon }{|\mu|^2}
\frac{1-e^{-\sigma\tau/2}\cos\epsilon\tau}{1 - e^{-\sigma\tau}} \ .
\end{align}
From (\ref{energ-var-op-syst-0-inf}) one gets that in the open cavity with $\sigma_{-} - \sigma_{+}>0$
the asymptotic of the total-energy variation is bounded from above and from below
\begin{align}\label{bound+}
&\Delta \mathcal{E}_{\sigma}\leq-\epsilon N_\sigma(0)
+ \frac{2\lambda^2\epsilon }{|\mu|^2} \frac{p(1-p)}{1 - e^{-\sigma\tau/2}} \ , \\
&\Delta \mathcal{E}_{\sigma}\geq-\epsilon \, N_\sigma(0)
+ \frac{2\lambda^2\epsilon }{|\mu|^2} \frac{p(1-p)}{1 + e^{-\sigma\tau/2}} \ .
\label{bound-}
\end{align}

For the short-time regime $n\tau\ll 1$ one gets for (\ref{energ-var-op-syst-0-n})
\begin{align}\label{energ-var-op-syst-0-short}
&\Delta \mathcal{E}_{\sigma}(t_n, t_0) = - n\tau \sigma N_\sigma(0)\\
&+n\tau p(1-p)\frac{2\lambda^2\epsilon }{|\mu|^2}(1-\cos\epsilon\tau)
\frac{\sigma}{1 - e^{-\sigma\tau}} \nonumber \\
&+n\tau p^{2}\frac{\lambda^2\sigma\epsilon}{|\mu|^2} + \mathcal{O}((n\tau)^2)
\nonumber \ ,
\end{align}
i.e. a linear asymptotic behaviour.

This case coincides with result for the ideal cavity (Theorem \ref{ThDEn-1,n1}) when the rate of environmental
pumping $\sigma =0$.

\subsection{Entropy production in perfect cavity}

In this section, we make contact with thermodynamics of
systems out of equilibrium, and in particular with the second law of thermodynamics, see \cite{Bruneau:2006}-\cite{Bruneau:2010}.

Let $\rho$ and $\rho_0$ be two normal states on algebra $\mathfrak{M}(\mathcal{H})$. We define the \textit{relative entropy}
${\rm{Ent}}(\rho|\rho_0)$ of the state $\rho$ with respect to $\rho_0$ by
\begin{equation}\label{Rel-Ent}
{\rm{Ent}}(\rho|\rho_0):= \Tr_{\mathcal{H}}(\rho \ln \rho - \rho\ln \rho_0) \geq 0 \ ,
\end{equation}
where non-negativity follows from the \textit{Jensen inequality}: $\Tr_{\mathcal{H}} (\rho \ln \mathcal{A})
\leq \ln \Tr_{\mathcal{H}} (\rho \mathcal{A})$, applied to observable $\mathcal{A}= \rho_0/\rho$.

In the case of the non-leaking cavity we have:
\begin{equation}\label{1DS-sigma=0}
\Delta S(t) := {\rm{Ent}}(\rho_{S}(t)|\rho_{S}(t=0)) = \Tr_{\cH_C \otimes \cH_A}\{ \rho_{S}(t)
(\ln\rho_{S}(t)-\ln\rho_C\otimes\rho_A)\} \ ,
\end{equation}
where dynamics is defined by (\ref{Sol-Liouv-Eq}). Suppose that all atoms of the beam are in the Gibbs state with
temperature $1/\beta$, which formally can be written as
\begin{equation}\label{Atom-Gibbs}
\rho_A (\beta) : = \bigotimes_{n\geq1} \rho_{A_n} (\beta)  \  , \  \ \rho_{A_n} (\beta):= \frac{e^{-\beta H_{A_n}}}{Z(\beta)} \ ,
\end{equation}
see (\ref{Ham-Model}). Since $\rho_{S}(t=0) = \rho_C\otimes\rho_A = (\rho_C\otimes \idty) (\idty \otimes\rho_A)$ and
$\Tr_{\cH_C \otimes \cH_A}\{\rho_{S}(t) \ln \rho_{S}(t)\} = \Tr_{\cH_C \otimes \cH_A}\{\rho_{S} (0)\ln \rho_{S}(0)\}$,
the relative entropy (\ref{1DS-sigma=0}) gets the form
\begin{eqnarray}\label{2DS-sigma=0}
\Delta S(t) :&=& \Tr_{\cH_C} \{[\rho_C-\rho_C^{(n)}]\ln \rho_{C}\} \\
&-& \beta \ \sum_{k=1}^{n} \Tr_{\cH_S}\{[\rho_{S}(0)-\rho_{S}(n\tau)] (\idty \otimes H_{A_k})\}  \nonumber \ ,
\end{eqnarray}
for $t = n \tau + \nu$, see (\ref{t}). Here $\rho_C^{(n)}$ is defined by (\ref{C-state-n}).
\begin{remark}\label{Entr-commut}
For any Hamiltonian $H_n$ that acts non-trivially on $\cH_C\otimes\cH_{A_n}$ and any Hamiltonian $H_{A_k}$ acting on $\cH_{A_k}$ we have  $[H_n, H_{A_k}] = 0$ for $n \neq k$. Note that by (\ref{commut-atoms}) for our model $[H_n, H_{A_k}] = 0$ for any $n,k$.
\end{remark}
Then by virtue of (\ref{C-state-n}) and(\ref{cL}) we have
\begin{eqnarray}\label{rho-S-nk}
&&\Tr_{\cH_S}\{\rho_{S}(n\tau) \ (\idty \otimes H_{A_k})\} \\
&& = \Tr_{\cH_S}\{e^{- i \tau H_n}...e^{-i \tau H_{k+1}} \rho_{S}(k\tau)e^{i \tau H_{k+1}} \ldots {e^{i \tau H_n}} \
(\idty \otimes H_{A_k})\} \nonumber \\
&& = \Tr_{\cH_S}\{\rho_{S}((k\tau) \ (\idty \otimes H_{A_k})\}  \nonumber \ .
\end{eqnarray}
By the same argument one obtains also
\begin{eqnarray}\label{rho-S-0k-1}
&&\Tr_{\cH_S}\{\rho_{S}(0) \ (\idty \otimes H_{A_k})\} \\
&& = \Tr_{\cH_S}\{e^{- i \tau H_{k-1}}...e^{-i \tau H_1} \rho_{S}(0)e^{i \tau H_1} \ldots e^{i \tau H_{k-1}} \
(\idty \otimes H_{A_k})\} \nonumber \\
&& = \Tr_{\cH_S}\{\rho_{S}((k-1)\tau) \ (\idty \otimes H_{A_k})\}  \nonumber \ .
\end{eqnarray}

In the case of our model (see Remark \ref{Entr-commut}) $H_{A_k}=e^{i\tau H_k}(H_{A_k})e^{-i\tau H_k}$. Therefore the last formula gets the form
\begin{eqnarray}\label{rho-S-0k-1-model}
&&\Tr_{\cH_S}\{\rho_{S}(0) \ (\idty \otimes H_{A_k})\}=\Tr_{\cH_S}\{\rho_{S}(k\tau) \ (\idty \otimes H_{A_k}).
\end{eqnarray}

Equations (\ref{rho-S-nk}) and (\ref{rho-S-0k-1-model}) shows that the second term in the entropy production (\ref{2DS-sigma=0}) vanishes.

If we suppose that the initial cavity state is Gibbs state for the temperature $1/\beta$ (\ref{Gibbs-photons}), then by
(\ref{mean-photon-number-n}) and (\ref{number of photons Gibbs}) one gets
\begin{eqnarray}\label{EntrProd-OurMod}
&&\Delta S(t)= \Tr_{\cH_C} \{[\rho_C-\rho_C^{(n)}]\ln \rho_{C}\} \\
&& =\Tr_{\cH_C} \{[\rho_C-\rho_C^{(n)}] (- \beta \, \epsilon \, b^{*}b )\} \nonumber \\
&&= \beta\epsilon(N(t)-N(0))\ , \nonumber
\end{eqnarray}
where the photon number $N(t)$ is defined in (\ref{number of photons Gibbs}).

If we denote by $\Delta \mathcal{E}^{C}(t) = \epsilon (N(t) - N(0))$ the energy variation of the thermal cavity defined by the
photon number variation, then (\ref{EntrProd-OurMod}) expresses the 2nd Law of Thermodynamics
\begin{equation}\label{2nd-Law}
\Delta S(t) = \beta \Delta \mathcal{E}^{C}(t) \ ,
\end{equation}
for the pumping of cavity.

\begin{remark}
In the general situation when $[H_n, H_{A_n}]\neq 0$ combining (\ref{2DS-sigma=0}) with (\ref{rho-S-nk}) and (\ref{rho-S-0k-1}) we get for the entropy production at the moment
$t = n \tau + \nu$
\begin{eqnarray}\label{DS-sigma=0-bis}
\Delta S(t) :&=& \Tr_{\cH_C} \{[\rho_C-\rho_C^{(n)}]\ln \rho_{C}\} \\
&+& \beta \ \sum_{k=1}^{n} \Tr_{\cH_S}\{[\rho_{S}(k \tau) - \rho_{S}((k-1)\tau)] \ (\idty \otimes H_{A_k})\} \nonumber \ .
\end{eqnarray}
The last term in (\ref{DS-sigma=0-bis}) can be rewritten into standard form \cite{Bruneau:2006}-\cite{Bruneau:2010}, if one uses the identities:
\begin{eqnarray*}\label{last-term1}
&&\Tr_{\cH_S}\{\rho_{S}(k\tau) \ (\idty \otimes H_{A_k})\} =
\Tr_{\cH_C \otimes \cH_A }\{e^{\tau L_k}\ldots e^{\tau L_1}(\rho_{C}\otimes \rho_{A})\ (\idty \otimes H_{A_k})\} \\
&&=\Tr_{\cH_C \otimes \cH_A }\{e^{- i \tau H_{k}}(e^{\tau L_{k-1}}\ldots e^{\tau L_1}[\rho_{C}\otimes \bigotimes_{n=1}^{k-1}\rho_{n}])
\otimes \rho_{k} e^{i \tau H_{k}} \ (\idty \otimes H_{A_k})\} [\idty \otimes \bigotimes_{m>k} \rho_{m}]\nonumber \\
&&=\Tr_{\cH_C \otimes \cH_{A_{k}}} \{(\rho_{C}^{(k-1)} \otimes \rho_{k}) e^{i \tau H_{k}} \ (\idty \otimes H_{A_k})e^{- i \tau H_{k}}\}
\nonumber  \ ,
\end{eqnarray*}
and
\begin{equation*}\label{last-term2}
\Tr_{\cH_S}\{\rho_{S}((k-1)\tau) \ (\idty \otimes H_{A_k})\} =
\Tr_{\cH_C \otimes \cH_{A_{k}}} \{\rho_{C}^{(k-1)} \otimes \rho_{k} H_{A_k}\} \ .
\end{equation*}
For $t = n \tau + \nu$ they yield the formula for the non-leaking entropy production
\begin{eqnarray}\label{DS-sigma=0-fin}
&&\Delta S(t)= \Tr_{\cH_C} \{[\rho_C-\rho_C^{(n)}]\ln \rho_{C}\} + \\
&&\beta \ \sum_{k=1}^{n} \Tr_{\cH_C \otimes \cH_{A_{k}}} \{(\rho_{C}^{(k-1)} \otimes \rho_{k})
[e^{i \tau H_{k}} (\idty \otimes H_{A_k})e^{- i \tau H_{k}} - \idty \otimes H_{A_k}]\}  \nonumber  \ .
\end{eqnarray}
\end{remark}

%
%
%

\end{document}